\title{Identifying Tractable Quantified Temporal Constraints within Ord-Horn} 
\author{Jakub Rydval}{Technische Universit\"{a}t Wien, Vienna, Austria}{jakub.rydval@tuwien.ac.at}{https://orcid.org/0000-0002-7961-9492}{This research was funded in whole or in part by the Austrian Science Fund (FWF) [I 5948]. For the purpose of Open Access, the authors have applied a CC BY public copyright licence to any Author Accepted Manuscript (AAM) version arising from this submission.} 
\author{\v{Z}aneta Semani\v{s}inov\'{a}}{Technische Universit\"{a}t Dresden, Dresden, Germany}{zaneta.semanisinova@tu-dresden.de}{https://orcid.org/0000-0001-8111-0671}{The author has been funded by the European Research Council (Project POCOCOP, ERC Synergy
Grant 101071674) and by the DFG (Project FinHom, Grant 467967530). Views and opinions expressed are however
those of the authors only and do not necessarily reflect those of the European Union or the European Research
Council Executive Agency. Neither the European Union nor the granting authority can be held responsible for them.} 
\author{Micha\l\ Wrona}{Jagiellonian University, Krak\'{o}w, Poland}{michal.wrona@uj.edu.pl}{https://orcid.org/0000-0002-2723-0768}{The author is partially supported by National Science Centre, Poland grant number 2020/37/B/ST6/01179.}
\authorrunning{Jakub Rydval, \v{Z}aneta Semani\v{s}inov\'{a}, and Micha\l\ Wrona} 
\keywords{constraint satisfaction problems,  quantifiers,  dichotomy, temporal reasoning, Ord-Horn} 
\newcommand{\case}[2]{\underline{\emph{Case~#1: #2}}}
\newcommand{\Init}{\ensuremath{\S\hspace{1pt}\mathrm{I}}\xspace}  
\newcommand{\Simplify}{\ensuremath{\S\hspace{1pt}\mathrm{S}}\xspace}    
\newcommand{\Trans}{\ensuremath{\S\hspace{1pt}\mathrm{T}}\xspace}  
\newcommand{\AltTrans}{\ensuremath{\S\hspace{1pt}\mathrm{A}}\xspace}    
\newcommand{\Progress}{\ensuremath{\S\hspace{1pt}\mathrm{C}}\xspace}  
\newcommand{\Refute}{\ensuremath{\S\hspace{1pt}\mathrm{R}}\xspace}
\newcommand{\of}[1]{\raisebox{0.25pt}{\textup{{\kern-0.025em{\footnotesize[\raisebox{-0.05pt}{$#1$}}\kern-0.025em{\footnotesize]}\kern0.025em}}}}
\newcommand{\Aut}{\ensuremath{\mathrm{Aut}}\xspace} 
\newcommand{\pp}{\pi\pi} 
\newcommand{\dual}{\ensuremath{\mathit{dual}}\,\xspace} 
\newcommand{\mi}{\ensuremath{\mathit{mi}}\xspace} 
\newcommand{\mx}{\ensuremath{\mathit{mx}}\xspace}
\renewcommand{\min}{\ensuremath{\mathit{min}}\xspace}
\renewcommand{\max}{\ensuremath{\mathit{max}}\xspace}
\newcommand{\CSP}{\ensuremath{\mathrm{CSP}}\xspace} 
\newcommand{\NAE}{\ensuremath{\mathrm{NAE}}\xspace}  
\newcommand{\lex}{\ensuremath{\mathit{lex}\xspace}} 
\newcommand{\elel}{\ell\ell} 
\newcommand{\QCSP}{\ensuremath{\mathrm{QCSP}}\xspace}     
\newcommand{\struct}[1]{\mathfrak{#1}}   
\newcommand{\cut}[2]{#1\text{-}#2\text{-cut}}
\newcommand{\varse}{\ensuremath{\mathrm{V}_{\!\exists}}\xspace} 
\newcommand{\varsu}{\ensuremath{\mathrm{V}_{\!\forall}}\xspace} 
\newcommand{\vars}{\ensuremath{\mathrm{V}}\xspace} 
\newcommand{\ass}[1]{\ensuremath{\llbracket#1\rrbracket}\xspace}
\newcommand{\Zaneta}{\textup{\v{Z}}}
\newcommand{\Dima}{\mathrm{D}}
\newcommand{\Michal}{\mathrm{M}^+}
\newcommand{\SMichal}{\mathrm{M}_<^+}
\newcommand{\DMichal}{\mathrm{M}^-}
\newcommand{\DSMichal}{\mathrm{M}_<^-}
\newcommand{\GMichal}{\mathrm{GM}^+}
\newcommand{\GVSMichal}{\mathrm{GVM}_<^+}
\newcommand{\dualGMichal}{\mathrm{GM}^-}
\newcommand{\dualGVSMichal}{\mathrm{GVM}_<^-}
\newcommand{\LessSepGDis}{\mathrm{GSN}}
 \newcommand{\Dis}{\mathrm{SN}} 
\newcommand{\SepDima}{\mathrm{SD}}
\newcommand{\SepMichal}{\mathrm{SM}}
\newcommand{\SepSMichal}{\mathrm{SM}_<}
\newcommand{\LessSepGMichal}{\mathrm{lrGSM}}
\newcommand{\LessSepGSMichal}{\mathrm{lrGSM}_<}
\newcommand{\GreatSepGMichal}{\mathrm{rlGSM}}
\newcommand{\GreatSepGSMichal}{\mathrm{rlGSM}_<}
\begin{document}

\maketitle

\begin{abstract}
The constraint satisfaction problem, parameterized by a relational structure, provides a general framework for expressing computational decision problems. Already the restriction to the class of all finite structures forms an interesting microcosm on its own, but to express decision problems in temporal reasoning one has to take a step beyond the finite-domain realm. 
An important class of templates used in this context are temporal structures, i.e., structures over $\mathbb{Q}$ whose relations are first-order definable using the usual countable dense linear order without endpoints.

In the standard setting, which allows only existential quantification over  input variables, the complexity of finite and temporal constraints has been fully classified. 
In the quantified setting, i.e., when one also allows universal quantifiers, 
there is only a handful of partial classification results and many concrete cases of unknown complexity.
This paper presents a significant progress towards understanding the complexity of the quantified constraint satisfaction problem for temporal structures. 
We provide a complexity dichotomy for quantified constraints over the Ord-Horn fragment, which played an important role in understanding the complexity of constraints both over temporal structures and in Allen's interval algebra.
We show that all problems under consideration are in P or coNP-hard.
In particular, we determine the complexity of the quantified constraint satisfaction problem for $(\mathbb{Q};x=y\Rightarrow x\geq z)$, hereby settling a question open for more than ten years.
\end{abstract} 

\section{Introduction} The constraint satisfaction problem (CSP) of a structure $\struct{B}$ in a finite relational signature $\tau$, denoted by $\CSP(\struct{B})$, is the problem of deciding whether a given primitive positive $\tau$-sentence holds in $\struct{B}$.
The class of all \emph{finite-domain} CSPs, i.e., where $\struct{B}$ can be chosen finite, famously constitutes a large fragment of NP that admits a dichotomy between P and NP-completeness~\cite{zhuk2020proof}. 
Quantified constraint satisfaction problems (QCSPs) generalize CSPs by allowing both existential \emph{and} 
universal quantification over input variables.
The complexity of such problems is much less understood already for finite structures, 
 the state of the art being a complexity classification for QCSPs of finite structures with all unary relations and three-element structures with all singleton unary relations~\cite{zhuk2022qcsp}.
For infinite structures, the investigations essentially follow the CSP programme, which was initiated by the study of the CSPs of structures over $\mathbb{N}$ (or $\mathbb{Q}$) whose relations are definable by Boolean combinations of equalities and disequalities, the so-called \emph{equality structures}~\cite{bodirsky2008complexity}. 
The full complexity classification for quantified equality constraints was completed quite recently~\cite{zhuk2023complete}, by resolving the long-standing question of determining the complexity of $\QCSP(\mathbb{Q};\Dima)$, where \[\Dima\coloneqq \{(x,y,z)\in \mathbb{Q}^3 \mid x=y \Rightarrow x=z\}.\] 
 This question was left open in~\cite{qecsps}, where all the remaining results have been provided.
The next in line are \emph{temporal structures}, which are structures with domain $\mathbb{Q}$ whose relations are first-order definable over $\{<\}$, where $<$ interprets as the usual unbounded dense linear order.
The relations of such structures are called \emph{temporal}.

By definition, temporal structures form a richer class than equality structures.
While the complexity of temporal CSPs has been classified more than a decade ago~\cite{bodirsky2010complexity}, there is only a handful of partial classification results regarding the complexity of temporal QCSPs~\cite{ToTheMax, charatonik2008quantified,charatonik2008tractable,chen2012guarded,WronaMFCS14}. 
Yet, already from this limited amount of available data it is apparent that the majority of the pathological cases is concentrated in the \emph{Ord-Horn} (OH) fragment, we elaborate on this below.
The OH fragment comprises all temporal structures whose relations are definable by an OH formula, i.e., a conjunction of clauses of the form 
\begin{align}
   (x_1\neq y_1 \vee \cdots \vee x_k\neq y_k \vee x_{k+1} \geq y_{k+1})  \label{eq:OrdHorn}
\end{align}
 for $k\geq 0$, where the last disjunct is optional and some variables might be identified~\cite{Book}. 

\subsection{Ord-Horn}

OH was first introduced and used by Nebel and B\"urckert to describe a maximally tractable constraint language containing all basic relations on Allen's interval algebra~\cite{nebel1995reasoning}. 
  For a full classification of maximally tractable subalgebras of Allen, see~\cite{KrokhinJeavonsJonsson}. 
  In the context of CSPs over temporal structures,  OH is not even a maximally tractable language  
   as it is properly contained in two of the nine maximally tractable fragments identified in~\cite{bodirsky2010complexity}:  $\min$, $\max$, $\mx$, $\dual\mx$,  $\mi$, $\dual\mi$, $\elel$, $\dual\elel$, and a constant operation.
   We remark that all of the above are in fact operations on $\mathbb{Q}$. The \emph{dual} of an operation $f$ on $\mathbb{Q}$ is the operation $(x_1,\dots,x_n) \mapsto -f(-x_1,\dots,-x_n)$~\cite{bodirsky2010complexity}, e.g., $\max$ is the dual of $\min$. 
  The question which of the nine fragments are also maximal w.r.t.\ tractability of the QCSP was investigated in~\cite{ToTheMax}, 
  and answered positively in the first four cases.
  The answer is negative in the last three cases~\cite{chen2011quantified,zhuk2023complete}, and the question remains open for $\mi$ and $\dual\mi$. 
  In the intersections of $\elel$ with $\mi$ and with $\dual\mi$ lie the OH structures $(\mathbb{Q};\Michal)$ and $(\mathbb{Q};\DMichal)$, respectively, where
  \[
         \Michal\coloneqq \{(x,y,z)\in \mathbb{Q}^{3} \mid x=y \Rightarrow x\geq z\} \quad \text{and} \quad 
         \DMichal\coloneqq \{(x,y,z)\in \mathbb{Q}^{3} \mid x=y \Rightarrow x\leq z\}. 
  \]
  Determining the complexity of $\QCSP(\mathbb{Q};\Michal)$ was posed as an open question in \cite{ToTheMax}; it could have been anywhere between PTIME and PSPACE.  
  Note that its counterpart $\QCSP(\mathbb{Q};\DMichal)$ is essentially the same problem with the order reversed.
  
Apart from temporal structures preserved by a constant operation, OH captures precisely those temporal structures whose CSP is solvable by local consistency checking~\cite{BoRyTCSPs}. 
This well-known generic preprocessing algorithm can be formulated for any CSP satisfying some reasonable structural assumptions~\cite{bodirsky2013datalog}, and thus OH constraints are fairly well understood from the CSP perspective.
However, the analysis of OH constraints in the quantified setting requires a surprisingly large amount of creativity. As a simple example, already $\QCSP(\mathbb{Q};R)$ for the OH relation $R$ defined by $(x_1\neq x_2 \vee x_3\geq x_4)\wedge \phi$~is
 in P if $\phi$ equals $(x_3\geq x_1) \wedge (x_1\geq x_3)  \wedge (x_3\neq x_4)$~\cite{chen2012guarded}, 
  coNP-complete if $\phi$ equals $(\bigwedge_{i,j\in \{1,2\}}  x_{i} \neq x_{j+2})$~\cite{ZhukExample},  
 and PSPACE-complete if $\phi$ is the empty conjunction~\cite{zhuk2023complete}.

The set of \emph{Guarded Ord-Horn} (GOH) formulas~\cite{chen2012guarded} is defined inductively.
In the base case we are allowed to take OH formulas of the form $(x\leq y)$, $(x_1\neq y_1 \vee \cdots \vee x_k\neq y_k)$, or $(x\neq x_1 \vee \cdots \vee x\neq x_k)\vee (x<y) \vee (y\neq y_1 \vee \cdots \vee y\neq y_{\ell})$.
In the induction step we can form formulas of the form $\psi_1\wedge \psi_2$ or $(x_1\leq y_1 \vee \cdots \vee x_k\leq y_k)\wedge (x_1\neq y_1 \vee \cdots \vee x_k\neq y_k \vee \psi)$, where $\psi,\psi_1,\psi_2$ 
are GOH formulas.
%
Thus, newly added disequalities are guarded by atomic $\{\leq\}$-formulas. 
A GOH structure may only contain temporal relations definable by GOH formulas. 
Observe that the tractable template from the previous paragraph is GOH.  
\begin{theorem}[\cite{chen2012guarded}] \label{thm:GOH_tractable} Let $\struct{B}$ be a GOH structure. Then $\QCSP(\struct{B})$ is in PTIME.
\end{theorem}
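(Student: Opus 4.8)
The plan is to decide $\QCSP(\struct{B})$ through its natural two-player semantics and to show that the \emph{universal} player has, up to the syntactic discipline of GOH, no genuine freedom: this collapses the quantified instance to polynomially many ordinary Ord-Horn CSP instances, each of which is solvable in polynomial time by local-consistency checking. Concretely, given $\Phi = Q_1 x_1 \cdots Q_n x_n\,\phi$ with $\phi$ quantifier-free, one first replaces every atom of $\phi$ by a GOH formula defining the corresponding relation — this introduces no new quantifiers — and rewrites $\phi$ according to the GOH grammar, so that $\phi$ becomes a conjunction of GOH clauses in which every disequality literal is \emph{guarded} by a disjunction of $\leq$-atoms. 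One then passes to the game: the universal player instantiates the $\forall$-variables and the existential player the $\exists$-variables in quantifier order, the existential player wins iff $\phi$ holds, and $\Phi$ is true iff the existential player has a winning strategy.

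Since $(\mathbb{Q};<)$ is homogeneous and $\omega$-categorical, both players may restrict to strategies depending only on the order type of the values played so far; in particular, against $m$ distinct values already on the board, only the $2m+1$ ``positional'' responses of a freshly played variable are relevant. This alone yields a PSPACE procedure, so the substance of the proof lies in bounding the \emph{depth} of the search. The key step — and the place where GOH is used essentially, rather than Ord-Horn merely as a class of relations — is to show that the universal player may be assumed to answer every $\forall$-variable with a \emph{generic} value, i.e.\ one distinct from every value played so far, its order position among them being fixed by a finite case distinction made once at the top level. The reason is that a disequality literal $x \neq y$ can influence the truth of a GOH clause only once its guarding $\leq$-atoms have been forced to hold with equality; an induction over the GOH grammar, tracking at each turn of the game which equalities among the played values are already ``decided'', shows that whenever the universal player could profit from making a new value coincide with an old one, the Ord-Horn consequences accumulated so far already let the existential player neutralise it. Dually, the existential player may answer each $\exists$-variable by the least value compatible with the accumulated $\leq$-constraints (or by a forced value).

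Once the universal moves are fixed to be generic, what remains — for each of the finitely many global choices of their order positions — is a primitive positive instance, i.e.\ an instance of $\CSP(\struct{B}')$ for an Ord-Horn structure $\struct{B}'$. By the classification of temporal CSPs (equivalently, since Ord-Horn is, up to a constant polymorphism, exactly the class of temporal CSPs solved by establishing local consistency), each such instance is decidable in polynomial time, and hence so is $\QCSP(\struct{B})$. I expect the genericity step to be the main obstacle: for full Ord-Horn the analogous collapse fails — the empty-conjunction example from the introduction is PSPACE-complete — so the argument cannot rely on any property of Ord-Horn relations in isolation but must exploit the guard structure through a careful induction on the GOH grammar with precise bookkeeping of which equalities are ``live'' at each point of the play.
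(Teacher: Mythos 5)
Your sketch takes a genuinely different route from the cited proof (the paper imports Theorem~\ref{thm:GOH_tractable} from~\cite{chen2012guarded}, whose argument, as noted in the surrounding text, is a pebble-game generalisation of local-consistency methods), but as written it has two substantive gaps. The first is the ``genericity'' claim: that against a GOH instance the universal player may be assumed to always play a fresh value, its relative position being decided by a finite top-level case distinction. You yourself flag this as the crux and offer only an outline of the intended induction on the GOH grammar, with no actual argument for the inductive step. It is far from obvious that the claim even holds: a guarded clause $(x_1\neq y_1 \vee \cdots \vee x_k\neq y_k \vee \psi)$ constrains the existential player exactly when the universal player manages to equate all the pairs, and the accompanying guard $(x_1\leq y_1 \vee \cdots \vee x_k\leq y_k)$ is satisfied trivially in the all-equal situation, so it does not by itself prevent the universal player from profitably equating; the ``neutralisation'' step is precisely the missing content.

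The second gap is independent of the first: even if genericity holds, the reduction you describe is not polynomial. Fixing, up front, ``the finitely many global choices of order positions'' for the universal variables relative to the values already played amounts to enumerating all interleavings of the universal variables among the existential ones, which is exponential in the number of variables — and, worse, the universal player is adaptive, so the order position of a later universal variable legitimately depends on earlier existential responses, which a once-and-for-all top-level branching cannot capture. Your opening claim that the QCSP ``collapses to polynomially many ordinary Ord-Horn CSP instances'' is therefore not supported by the rest of the sketch. To salvage this route one would need a collapsibility-type argument showing that only a bounded number of universal variables ever interact, or a canonical choice of universal positions; as it stands the sketch yields at best PSPACE, which you already concede follows from homogeneity alone.
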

The tractability result from~\cite{chen2012guarded} is conceptually simple and based on pebble games generalizing local consistency methods.
 At the same time, all quantified OH constraints outside of GOH are coNP-hard or admit a LOGSPACE reduction from $\QCSP(\mathbb{Q};\Michal)$. 
 \begin{theorem}[\cite{WronaMFCS14}]
\label{thm:noGOH}
    Let $\struct{B}$ be  an  OH  structure. Then  one of the following holds.
    \begin{itemize} 
    \item $\struct{B}$ is GOH.
       \item $\QCSP(\struct{B})$ is coNP-hard.
        \item $\struct{B}$ primitively positively defines $\Michal$ or $\DMichal$. 
    \end{itemize} 
    \end{theorem}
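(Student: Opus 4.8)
The plan is to establish a structural trichotomy for OH structures by analysing which "bad patterns" can appear in the defining OH formulas of the relations. First I would fix an OH structure $\struct{B}$ and assume that $\struct{B}$ is not GOH; the goal is then to show that $\struct{B}$ is either coNP-hard or pp-defines $\Michal$ or $\DMichal$. The natural starting point is to understand syntactically how an OH relation can fail to be GOH. Every OH clause has the form \eqref{eq:OrdHorn}, and the GOH syntax imposes a guarding discipline: each disequality literal that is introduced in the induction step must be accompanied by a corresponding $\leq$-literal, and in the base case unguarded disequalities are only allowed in a "pure disequality clause" or in the special shape $(x\neq x_1 \vee \cdots) \vee (x<y) \vee (y \neq y_1 \vee \cdots)$ where the disequalities all touch one of the two endpoints $x,y$ of the order literal. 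So the first key step is a normal-form lemma: a minimal OH definition of a non-GOH relation must contain a clause witnessing one of finitely many "forbidden configurations," the prototypical one being a clause equivalent to $(x_1 \neq y_1 \vee x_2 \neq y_2 \vee x_3 \geq y_3)$ with the disequalities on variable pairs that are genuinely "independent" of the order literal (i.e.\ not sharing an endpoint in the way GOH permits, and not implied away by the rest of the definition).

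The second step is, for each such forbidden configuration, to exhibit a pp-interpretation or pp-definition witnessing either coNP-hardness or the definability of $\Michal$/$\DMichal$. Here one works up to the usual temporal-CSP toolkit: pp-definitions, together with adding or removing constants and using the fact that $\Aut(\mathbb{Q};<)$ acts on everything, so one may freely identify variables, substitute, and intersect with $\{<\}$-relations. The cleanest case is when the forbidden clause, after identifying variables and conjoining with order constraints already available in $\struct{B}$ (or derivable), collapses exactly to the relation $\Michal = \{(x,y,z) \mid x=y \Rightarrow x \geq z\}$ or its dual — this is the third bullet. When instead the disequality structure is "too rich" to collapse to a single guarded order literal — e.g.\ the clause really behaves like $(x_1\neq x_3 \vee x_1 \neq x_4) \wedge (x_2 \neq x_3 \vee x_2 \neq x_4)$-type disequality patterns with no helpful order literal, as in the coNP-complete example of~\cite{ZhukExample} cited above — one shows coNP-hardness directly, typically by a LOGSPACE reduction from (the complement of) a known coNP-hard problem such as NAE-3SAT or a not-all-equal / hitting-set style problem, encoding clauses via the disequality gadget and using universal quantifiers to force the "for all assignments" semantics.

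The main obstacle I expect is the case analysis in the first step: pinning down the exhaustive list of forbidden configurations and, crucially, proving that if none of them occurs then the relation actually \emph{is} GOH. This "GOH-recognition" direction is delicate because GOH is defined by an inductive generation procedure rather than by a closure property, so one must show that a relation avoiding all forbidden patterns can be rebuilt by the GOH grammar — essentially a completeness argument for the normal form. One needs to be careful that local edits (variable identification, adding order literals implied by $\struct{B}$, dropping redundant clauses) do not destroy GOH-ness and do not introduce spurious forbidden patterns; equivalently, one wants the forbidden-pattern test to be invariant under the operations used. A secondary difficulty is bookkeeping in the hardness reductions: ensuring that the pp-definition genuinely lands on $\Michal$ exactly (not merely something coNP-hard) in the boundary cases, since the whole point of isolating $\Michal$ and $\DMichal$ as separate outcomes is that their own complexity is exactly what the rest of the paper resolves. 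I would organise the proof around a handful of explicit "gadget" relations, prove a hardness/definability lemma for each, and then run the syntactic case distinction to show every non-GOH, non-coNP-hard structure pp-defines one of these gadgets and hence $\Michal$ or $\DMichal$.
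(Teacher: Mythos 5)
This theorem is stated in the paper as a citation to~\cite{WronaMFCS14}; the paper does not reprove it, so there is no in-paper proof against which your attempt can be compared. Evaluating your sketch on its own terms: the overall strategy — isolate a finite catalogue of ``forbidden'' clause shapes whose absence certifies GOH-membership, then for each forbidden shape produce either a coNP-hardness gadget or a pp-definition of $\Michal$ or $\DMichal$ — is the standard architecture for such syntactic trichotomies, and it is consistent with how the present paper proceeds in its own technical lemmas (compare the arity-reduction in Lemma~\ref{lemma:arity_four} and the exhaustive pruning-rule case analysis in Lemma~\ref{lem:notpp}). You also correctly flag the hard part: showing that absence of the forbidden patterns implies GOH-membership is genuinely delicate because GOH is defined by an inductive grammar, not a closure property, so one needs a completeness argument for the normal form.

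That said, what you have written is a plan, not a proof, and the plan leaves open precisely the steps that carry all the mathematical content: the actual (provably exhaustive) list of forbidden configurations, the argument that local rewriting preserves the forbidden-pattern invariant, and the concrete reductions. In particular you gesture at ``collapsing to $\Michal$ exactly'' and at ``coNP-hardness via a NAE-style reduction'' without specifying which gadget applies to which configuration, so there is no way to check that the case split is exhaustive or that each branch really lands where you claim. One further subtlety worth flagging: the three bullets of Theorem~\ref{thm:noGOH} are not mutually exclusive. In particular, a structure that pp-defines $\Michal$ might also be coNP-hard (indeed, at the time of~\cite{WronaMFCS14} it was open whether $\QCSP(\mathbb{Q};\Michal)$ was PSPACE-hard), so the role of the third bullet is not to certify tractability but only to reduce the remaining uncertainty to a single concrete problem. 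Your phrasing ``non-GOH, non-coNP-hard structure pp-defines one of these gadgets'' reads as if the cases were disjoint; the intended statement is the weaker disjunction, and proving it does not require showing that $\Michal$-definability excludes coNP-hardness.
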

  There was a prospect that $\QCSP(\mathbb{Q};\Michal)$ would be PSPACE-hard, because the PSPACE-hardness proof from \cite{zhuk2023complete} for $\QCSP(\mathbb{Q};\Dima)$, when adjusted appropriately, almost yields a proof of PSPACE-hardness for this QCSP.
  In that case, Theorems~\ref{thm:GOH_tractable} and~\ref{thm:noGOH} would immediately yield a dichotomy between P and coNP-hardness for quantified OH constraints.
 However, it turns out that the situation is more complicated, as we explain below.

 \subsection{Contributions}
 On the one hand, we prove tractability for the entire $\pp$-fragment of quantified OH constraints, which in particular includes $\QCSP(\mathbb{Q};\Michal)$.  
 Here by $\pp$ we refer to the ``\emph{projection-projection}'' operation from \cite{bodirsky2010complexity}, which played an important role in identifying the maximally tractable temporal CSP languages covered by $\min$, $\mi$ and $\mx$. 
  
\begin{theorem}  \label{thm_ptime}
If $\struct{B}$ is an OH structure preserved by $\pp$, then $\QCSP(\struct{B})$ is in PTIME.
\end{theorem}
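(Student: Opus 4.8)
The plan is to give a polynomial-time decision procedure for $\QCSP(\struct{B})$ by lifting the pebble-game/consistency method behind Theorem~\ref{thm:GOH_tractable} from the GOH class to all $\pp$-invariant OH structures, with the operation $\pp$ itself supplying the soundness certificate for the lift. First I would normalise the input quantified primitive positive sentence $\Phi=Q_1v_1\cdots Q_nv_n\,\phi$: since $\struct{B}$ is OH, replace every atom of $\phi$ by its defining OH formula, so that $\phi$ becomes a conjunction of clauses of the form~\eqref{eq:OrdHorn}, and introduce fresh existentially quantified variables to make the clauses uniform. Because preservation by $\pp$ passes to every structure primitively positively definable from $\struct{B}$, I may expand $\struct{B}$ freely by such relations (in particular by $\leq$, which is primitively positively definable from $\Michal$ and from $\DMichal$), so that all order and disequality literals used in the analysis are explicitly available. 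Beyond the GOH structures, which are already covered by Theorem~\ref{thm:GOH_tractable}, Theorem~\ref{thm:noGOH} leaves only the case where $\struct{B}$ primitively positively defines $\Michal$ or $\DMichal$ --- a possible coNP-hardness of $\struct{B}$ being ruled out \emph{a posteriori} by the very procedure we construct --- and there, after the variable identifications forced by $\struct{B}$, every clause of $\phi$ is either of ``GOH type'' or essentially an instance of $\Michal$ or $\DMichal$ decorated with $\{\leq\}$-literals; only the latter demand new arguments.

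The algorithm scans the quantifier prefix left to right, maintaining after $v_i$ a polynomial-size object $D_i$ that records the partial scenarios on $v_1,\dots,v_i$ still winnable for the $\exists$-player, closed under a propagation system extending the GOH one by rules tailored to $\Michal$- and $\DMichal$-clauses. At an existential step, $D_i$ is replaced by its propagation closure. At a universal step --- the genuinely new ingredient --- instead of ranging over all order-theoretic placements of $v_i$, the algorithm intersects the outcomes of a \emph{constant} number of \emph{extremal} placements of $v_i$ (just below, equal to, or just above each position already present in $D_{i-1}$, plus sentinels below and above everything), the point being that $\pp$ lets the $\exists$-player recover a response to an arbitrary placement from its responses to these extremal ones. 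The sentence is accepted iff $D_n$ still witnesses a satisfying assignment. Completeness is routine, since shrinking the $\forall$-player's options only helps the $\exists$-player. Soundness is the substance: if the procedure accepts, then for each of the finitely many tuples of extremal $\forall$-placements the induced sub-instance is consistent and --- being of GOH type plus $\Michal/\DMichal$-clauses over identified variables --- has a genuine satisfying assignment; applying $\pp$ componentwise across these finitely many assignments yields, because $\pp$ is a polymorphism of $\struct{B}$, an assignment that again satisfies $\phi$, and componentwise application preserves the dependence of each existential variable on the variables quantified before it, so unwinding along the prefix produces a winning $\exists$-strategy and $\Phi$ holds in $\struct{B}$.

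The hard part will be the soundness argument, and inside it the universal step: one must identify exactly which extremal positions suffice and prove that the order type of an arbitrary $\forall$-play is an iterated $\pp$-combination of the order types of the extremal plays. This rests on a precise analysis of how $\pp$ acts on order types, above all of \emph{when $\pp$ identifies two values}, since such a collapse can promote the harmless disequality disjunct of a clause~\eqref{eq:OrdHorn} to its binding $\geq$-disjunct --- exactly the phenomenon that makes $(\mathbb{Q};\Michal)$ and $(\mathbb{Q};\DMichal)$ the hard core --- and this is what dictates the shape of the propagation rules for $\Michal/\DMichal$-clauses. A second, more technical obstacle is to bound the size of the $D_i$ polynomially under the augmented propagation, i.e.\ to prove a bounded-width guarantee for the consistency procedure on $\pp$-invariant OH instances, extending the local-consistency characterisation of OH CSPs to this quantified, $\pp$-restricted setting.
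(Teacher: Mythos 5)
Your proposal leaves the decisive technical content unresolved, and the parts you flag as the ``hard part'' and the ``more technical obstacle'' are precisely where the paper had to invent something new, so this is a genuine gap rather than a missing routine detail. First, a smaller issue: the appeal to Theorem~\ref{thm:noGOH} is both circular and off-target. Theorem~\ref{thm:noGOH}'s disjunction is not exclusive, so ``a possible coNP-hardness of $\struct{B}$ being ruled out \emph{a posteriori}'' does not legitimately isolate the third alternative, and in any case that theorem is not what identifies the relevant clause shapes here. The paper instead uses the syntactic characterisation of $\pp$-preserved temporal relations (Proposition~\ref{ppsynt}) together with Lemma~\ref{lemma:ElimMin} to show every clause can be brought to the form $(x\neq y_1\vee\cdots\vee x\neq y_k\vee x\geq z)$ with at most one order disjunct, and then Lemma~\ref{lemma:pp-def} to (q)pp-define every relation of $\struct{B}$ in $(\mathbb{Q};\Michal)$, reducing everything to $\QCSP(\mathbb{Q};\Michal)$. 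Note also that $\DMichal$ is \emph{not} preserved by $\pp$ (only by $\dual\pp$), so your ``$\Michal$ or $\DMichal$ decorated with $\leq$-literals'' dichotomy does not describe the $\pp$-preserved case.

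Second, and more seriously, the pebble-game approach with a constant number of extremal placements at each universal step, whose responses are to be recombined by $\pp$, is exactly the kind of forward-consistency scheme that the paper explicitly explains does not work naively: the associated proof system (Table~\ref{table:proof_system}) can derive exponentially many incomparable expressions $\mathcal{P}(x,z;A)$ even under a least-fixed-point semantics, as Example~\ref{example:five} shows, so the $D_i$ you propose cannot in general stay polynomial if they are to be complete. Your central soundness claim --- that the order type of an arbitrary universal play is an iterated $\pp$-combination of the extremal plays, and that $\pp$-combining the corresponding existential responses stays winning --- is stated but not proved, and it is not obvious at all: $\pp$ can collapse two distinct values, which is precisely the mechanism that promotes a disequality disjunct to a binding $\geq$-disjunct and makes $\Michal$ delicate. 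The paper sidesteps both obstacles by a quite different algorithm (Algorithm~\ref{algo:dima}): rather than scanning the prefix, it works globally, repeatedly adding entailed Ord-Horn clauses of a carefully restricted shape involving the set $\cut{x}{z}$ to prune away universal variables the universal player controls, and tests entailment with a $\CSP(\mathbb{Q};\Michal,<)$ oracle. Soundness is then a short universal-player-strategy argument (Lemma~\ref{lemma:algo_refutes_correctly}), and completeness is shown by exhibiting an explicit existential-player strategy (Lemma~\ref{lemma:winning_EP} via Claim~\ref{claim:charact_game}) and matching the algorithm to the proof system (Lemma~\ref{lemma:algo_p_halt}). So your proposal is not merely a different route to the same result; it is a sketch whose core steps are the open problems the paper solves by other means.
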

  The proof of Theorem~\ref{thm_ptime} stretches over the entirety of Section~\ref{section:proof_of_theorem1}, and is the main technical contribution of the present paper.
  In a certain sense, our algorithm for QCSPs of OH structures preserved by $\pp$ 
  also generalizes local consistency methods. 
  We iteratively expand a given instance $\Phi$ of $\QCSP(\struct{B})$ by constraints associated to relations whose arity is bounded by the size of $\Phi$ and which have short primitive positive definitions in $\struct{B}$, until a fixed-point is reached.
  The condition for the expansion by these constraints is tested using an oracle for $\CSP(\struct{B},<)$.
  The algorithm is thus not very far from the well-known framework of Datalog with existential rules~\cite{abiteboul1995foundations,ceri2012logic}.
  We believe that it will also prove useful in identifying the complexity of quantified temporal constraints outside of OH, e.g., in the case of $\mi$ or $\pp$.

 On the other hand, we confirm that $\QCSP(\mathbb{Q};\Michal)$ indeed walks a very fine line between tractability and hardness.
We show that, if $\Michal$ is combined with any OH relation $R$ that is not preserved by $\pp$,
then the resulting QCSP becomes coNP-hard, even if $\QCSP(\mathbb{Q};R)$ is tractable.
 Intuitively, either $(\mathbb{Q};\Michal,R)$ already primitively positively defines $\Dima$ and we use the PSPACE-hardnees proof from~\cite{zhuk2023complete} directly, or we replace each constraint of the form $\Dima(x,y,z)$ in the proof by $\Michal(x,y,z)\wedge \Michal(z,z,x)$. 
The latter, however, is not entirely conditional, and certain issues arise due to the transitivity of $\leq$.
These issues can be partially (but not entirely) resolved using constraints associated to 
\[
\Zaneta \coloneqq \{ (x_1,y_1,x_2,y_2)\in \mathbb{Q}^4 \mid (x_1 \neq y_1 \vee x_2 \neq y_2)\wedge (y_1 < y_2)\},
\]
 which is quantified primitively positively definable in $(\mathbb{Q};\Michal,R)$, ultimately leaving us with a proof of coNP-hardness.

 \begin{theorem}  \label{thm_hardness}
  Let $\struct{B}$ be an OH structure. If $\struct{B}$ is not GOH, not preserved by $\pp$, and not preserved by  $\dual\pp$, then $\QCSP(\struct{B})$ is coNP-hard.
  \end{theorem}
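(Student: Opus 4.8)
The plan is to combine the structural dichotomy of Theorem~\ref{thm:noGOH} with a careful hardness reduction that exploits $\Michal$ (or its dual). Given an OH structure $\struct{B}$ that is not GOH, not preserved by $\pp$, and not preserved by $\dual\pp$, Theorem~\ref{thm:noGOH} tells us that either $\QCSP(\struct{B})$ is already coNP-hard (and we are done), or $\struct{B}$ primitively positively defines $\Michal$ or $\DMichal$. By the order-reversal symmetry it suffices to treat the case that $\struct{B}$ pp-defines $\Michal$; note that pp-definitions transfer to LOGSPACE reductions between the associated QCSPs, so it is enough to establish coNP-hardness of $\QCSP$ for an expansion of $(\mathbb{Q};\Michal)$ by some relation witnessing the failure of preservation by $\pp$. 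The key point is that, since $\struct{B}$ is not preserved by $\pp$, it contains some OH relation $R$ with $R$ not preserved by $\pp$; the first real task is to show that $(\mathbb{Q};\Michal,R)$ quantified-pp-defines the relation $\Zaneta$. This is where one invokes a Rosenberg/Post-style analysis of which polymorphisms an OH relation can have: failing $\pp$ while still being OH forces enough ``separation'' structure that, together with $\Michal$ and universal quantification, one can carve out $\Zaneta=\{(x_1,y_1,x_2,y_2)\mid (x_1\neq y_1\vee x_2\neq y_2)\wedge y_1<y_2\}$.

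Next I would set up the reduction itself. The source problem is $\QCSP(\mathbb{Q};\Dima)$, which is PSPACE-complete by \cite{zhuk2023complete}; rather than reducing from it directly, I would adapt the \emph{construction} in that PSPACE-hardness proof, replacing each atom $\Dima(x,y,z)$ by the conjunction $\Michal(x,y,z)\wedge\Michal(z,z,x)$. The intuition is that $\Michal(x,y,z)$ says $x=y\Rightarrow x\geq z$ and $\Michal(z,z,x)$ says $z=z\Rightarrow z\geq x$, i.e.\ $z\geq x$ unconditionally; conjoined, these behave like $x=y\Rightarrow x=z$ on the relevant configurations. One then checks that the ``gadget'' encoding of the quantified formula still forces the intended combinatorial behaviour — here the obstruction is that $\leq$ is transitive, so the unconditional inequalities introduced by the $\Michal(z,z,x)$ atoms propagate and may create unintended consequences across the instance. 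To contain this I would use $\Zaneta$-constraints to insulate the gadgets: the strict inequality $y_1<y_2$ in $\Zaneta$ lets one separate ``levels'' so that transitivity does not collapse distinct coordinates, while the clause $(x_1\neq y_1\vee x_2\neq y_2)$ supplies exactly the disjunctive bookkeeping the reduction needs. The upshot is that the reduction goes through but, because $\Zaneta$ cannot fully neutralise transitivity, one loses the PSPACE lower bound and is left with a coNP lower bound — i.e.\ the construction certifies coNP-hardness (one argues, as in coNP-hardness proofs for such QCSPs, that a ``no'' instance of the target has a short certificate of the form of a violating assignment to the universal variables, and the reduction preserves this).

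Concretely, the steps in order are: (1) apply Theorem~\ref{thm:noGOH} and dispatch the coNP-hard case; (2) by duality reduce to the case $\Michal$ pp-definable, and extract an OH relation $R\subseteq\struct{B}$ not preserved by $\pp$; (3) show $(\mathbb{Q};\Michal,R)$ quantified-pp-defines $\Zaneta$ via a polymorphism analysis of non-$\pp$ OH relations; (4) recall the gadget construction from the PSPACE-hardness proof of $\QCSP(\mathbb{Q};\Dima)$ in \cite{zhuk2023complete}; (5) substitute $\Michal(x,y,z)\wedge\Michal(z,z,x)$ for $\Dima(x,y,z)$ and interleave $\Zaneta$-constraints to control transitivity; (6) prove correctness of the modified reduction, obtaining coNP-hardness; (7) transport back through the pp-definition and the duality to conclude coNP-hardness of $\QCSP(\struct{B})$. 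The main obstacle, and the place where genuine new work is needed, is step~(5)–(6): showing that $\Michal\wedge\Michal$ plus $\Zaneta$ faithfully simulates $\Dima$ in the gadgets \emph{despite} the transitivity of $\leq$, and precisely quantifying how much of the hardness survives — enough for coNP but, provably, not enough for PSPACE. Steps~(3) is the other technically delicate point, as it requires understanding exactly which temporal polymorphisms an OH relation outside the $\pp$-class can retain.
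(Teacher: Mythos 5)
Your high-level architecture matches the paper: apply Theorem~\ref{thm:noGOH}, dispatch the already-hard case, reduce by duality to the case that $\Michal$ is pp-definable, extract a non-$\pp$ relation $R$, show $(\mathbb{Q};\Michal,R)$ qpp-defines $\Zaneta$ (or, in the other branch, pp-defines $\Dima$ and PSPACE-hardness follows directly from \cite{zhuk2023complete}), substitute $\Michal(x,y,z)\wedge\Michal(z,z,x)$ for $\Dima(x,y,z)$, and insulate against transitivity with $\Zaneta$. Steps (1), (2), (7), and the overall shape of (3)--(5) are all as in the paper. One technical caveat on step (3): the paper's analysis of which OH relations survive as witnesses against $\pp$ first pushes the arity down to $4$ (Lemma~\ref{lemma:arity_four}) and only then runs the case analysis (Lemma~\ref{lem:notpp}); you gesture at this being ``technically delicate'' but do not mention that the arity reduction is needed to make the polymorphism analysis finite.

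The genuine gap is in your steps (5)--(6). You propose to \emph{adapt the PSPACE-hardness construction} and then argue that ``because $\Zaneta$ cannot fully neutralise transitivity, one loses the PSPACE lower bound and is left with a coNP lower bound,'' certified by the existence of a short violating assignment. This does not constitute a proof of coNP-\emph{hardness}: the observation that a false instance has a polynomial-size certificate (a violating universal move) is a statement about \emph{membership} in coNP, not about hardness, and ``losing'' a PSPACE reduction does not automatically hand you a coNP one. The paper does not degrade Zhuk's TQBF-style gadget; it designs a fresh reduction from the \emph{complement of 3-SAT}. Concretely, the target instance has a single block of universals $\forall y_1^0\forall y_1^1\cdots\forall y_n^0\forall y_n^1$ sandwiched between existential blocks, the universal moves encode a Boolean assignment to $x_1,\dots,x_n$, and two $\Michal$-chains (a ``true'' chain and a ``false'' chain, tied together by one $\Zaneta(v,f,u,t)$ constraint) force the existential player to exhibit, in effect, a falsified clause of the input 3-CNF. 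The correctness argument then matches winning strategies of the two players to satisfiability/unsatisfiability of the 3-CNF. Without committing to a coNP-complete source problem and a concrete gadget with such a correctness proof, the hardness claim is not established. So: right strategy and right building blocks, but the core reduction itself is missing and the ``short certificate'' heuristic in its place is not a valid substitute.
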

  Theorem~\ref{thm_hardness} is proved in Section~\ref{section:second_proof}, by a careful combination of syntactic pruning arguments, Theorem~\ref{thm:noGOH}, and a new coNP-hardness proof inspired by the PSPACE-hardness proof from~\cite{zhuk2023complete}. 

By combining Theorem~\ref{thm:GOH_tractable}, Theorem~\ref{thm_ptime} together with its analogue for $\dual\pp$, 
and Theorem~\ref{thm_hardness}, we get the following dichotomy for quantified OH constraints.   
 \begin{theorem} \label{conj:GOH_vs_PP_vs_HARD}
  Let $\struct{B}$ be an OH structure. 
  Then $\QCSP(\struct{B})$ is solvable in polynomial time if $\struct{B}$ is GOH, preserved by $\pp$, or preserved by $\dual\pp$. Otherwise, $\QCSP(\struct{B})$ is coNP-hard. 
  \end{theorem}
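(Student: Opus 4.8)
The plan is to derive the dichotomy directly from the classification results stated above, splitting into two cases according to whether the OH structure $\struct{B}$ satisfies at least one of the three tractability conditions---being GOH, being preserved by $\pp$, or being preserved by $\dual\pp$---or none of them. These two cases are exhaustive and mutually exclusive by construction, since the hypothesis of Theorem~\ref{thm_hardness} is precisely the negation of the disjunction of the three tractability conditions; hence it suffices to treat each case.

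In the hard case, where $\struct{B}$ is neither GOH, nor preserved by $\pp$, nor preserved by $\dual\pp$, the coNP-hardness of $\QCSP(\struct{B})$ is exactly the statement of Theorem~\ref{thm_hardness}, so there is nothing more to do.

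In the tractable case I would distinguish according to which sufficient condition holds. If $\struct{B}$ is GOH, then $\QCSP(\struct{B})\in\text{PTIME}$ by Theorem~\ref{thm:GOH_tractable}; if $\struct{B}$ is preserved by $\pp$, then $\QCSP(\struct{B})\in\text{PTIME}$ by Theorem~\ref{thm_ptime}. The only remaining subcase is that $\struct{B}$ is preserved by $\dual\pp$, and here I would use a reflection argument. Let $\iota\colon\mathbb{Q}\to\mathbb{Q}$ be the order-reversing bijection $x\mapsto -x$, and let $\iota(\struct{B})$ denote the structure in the same signature whose relation for each symbol $R$ is $\{(-a_1,\dots,-a_k)\mid (a_1,\dots,a_k)\in R^{\struct{B}}\}$. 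Applying $\iota$ coordinatewise to a relation defined by an OH formula $\phi$ produces the relation defined by the formula obtained from $\phi$ by keeping all disequalities and replacing every order atom $x\geq y$ by $y\geq x$, which is again an OH formula; hence $\iota(\struct{B})$ is again OH. By the definition of the dual of an operation on $\mathbb{Q}$, the structure $\struct{B}$ is preserved by $\dual\pp$ if and only if $\iota(\struct{B})$ is preserved by $\pp$; moreover $\iota$ is an isomorphism from $\struct{B}$ onto $\iota(\struct{B})$, so $\QCSP(\struct{B})$ and $\QCSP(\iota(\struct{B}))$ are literally the same decision problem. Applying Theorem~\ref{thm_ptime} to $\iota(\struct{B})$ then yields $\QCSP(\struct{B})\in\text{PTIME}$.

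The substance of this theorem is of course already carried by Theorems~\ref{thm_ptime} and~\ref{thm_hardness}; the only genuinely new point in assembling the dichotomy is the observation that the OH class and the unordered pair $\{\pp,\dual\pp\}$ are compatible with the order-reversing reflection $\iota$. Thus the ``hard part'' here is merely the routine verification that OH formulas remain OH after swapping the two arguments of every order atom, and that $\pp$-preservation of $\iota(\struct{B})$ corresponds to $\dual\pp$-preservation of $\struct{B}$---both of which are immediate from the definitions.
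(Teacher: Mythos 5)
Your proposal is correct and matches the paper's own argument: the theorem is obtained by combining Theorem~\ref{thm:GOH_tractable}, Theorem~\ref{thm_ptime}, its $\dual\pp$-analogue, and Theorem~\ref{thm_hardness}. Your explicit reflection argument via $x\mapsto -x$ is simply a spelled-out version of what the paper dismisses as "reversing the order in each individual statement."
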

 Omitted proofs or their parts can be found in Appendix~\ref{appendix:proofs}.  
 
\section{Preliminaries}
 
\subsection{First-order structures} 
The set $\{1,\dots,n\}$ is denoted by $[n]$.
 In the present paper, we consider structures $\struct{A} = (A; R_1, \ldots, R_k)$ over a finite relational signature $\tau$. For the sake of simplicity, we often use the same symbol $R$ for both the relation $R^{\struct{A}}$ and the relational symbol $R$.   
%
%
An \emph{expansion} of $\struct{A}$ is a $\sigma$-structure $ \struct{B}$ with $A=B$ such that $ \tau\subseteq \sigma$ and $R^{\struct{B}}=R^{\struct{A}}$ for each  $R\in \tau$.  
We write $(\struct{A},R)$ for the expansion of $\struct{A}$ by the relation $R$ over $A$.
An $n$-ary \emph{polymorphism} of a relational structure $\struct{A}$ is a mapping $f\colon A^{n}\rightarrow A$ such that, for every $k$-ary relation symbol $R\in \tau$ and tuples $\bar{t}_{1},\dots,\bar{t}_{n}\in R^{\struct{A}}$,  we have $f(\bar{t}_1,\dots,\bar{t}_n)\in R^{\struct{A}}$.
%
%
We say that $f$ \emph{preserves} $\struct{A}$ to indicate that $f$ is a polymorphism of $\struct{A}$. We might also say that an operation \emph{preserves} a relation $R$ over $A$ if it is a polymorphism of $(A;R)$.   
An \emph{endomorphism} is a unary polymorphism.
 
We assume that the reader is familiar with classical first-order logic; we allow the first-order formulas $x=y$ and $\bot$ (the nullary falsity predicate).
Let $T$ be a set of first-order $\tau$-sentences over a common signature $\tau$ and $\phi,\psi$ $\tau$-formulas whose free variables are among $\bar{x}$. 
We say that  $\phi$ \emph{entails} $\psi$ w.r.t.\ $T$ if $
\struct{A}\models \forall \bar{x} (\phi \Rightarrow \psi)$ holds for all models $\struct{A}$ of $T$.
We do not explicitly mention $T$ if it is clear from the context, e.g., the theory of linear orders.
A first-order $\tau$-formula $\phi$ is \emph{primitive positive} (pp) if it is of the form $\exists x_{1},\dots,x_{m}  (\phi_{1}\wedge \dots \wedge \phi_{n})$, where each $\phi_{i}$ is \emph{atomic}, i.e., of the form $\bot$, $(x_{i}=x_{j})$, or $R(x_{i_{1}},\dots,x_{i_{\ell}})$ for some $R\in \tau$. 
\emph{Quantified primitive positive} (qpp) formulas generalize pp-formulas by allowing both existential \emph{and} universal quantification. 
If $\phi$ and $\psi$ are (q)pp-formulas, then $\phi \wedge \psi$ can be rewritten into an equivalent (q)pp-formula, so we treat such formulas as (q)pp-formulas as well. 

The \emph{(quantified) constraint satisfaction problem} for a structure $\struct{B}$, denoted by $\mathrm{(Q)}\CSP(\struct{B})$, is the computational problem of deciding whether a given (q)pp $\tau$-sentence holds in $\struct{B}$. 
By \emph{constraints}, we refer to the conjuncts in the quantifier-free part of a given (Q)CSP instance of $\mathrm{(Q)}\CSP(\struct{B})$.
In the QCSP framework,  we usually think of an instance as a game between two
players: an \emph{existential player} (EP) and a \emph{universal player} (UP) who assign values to the existentially and universally quantified variables, respectively.
To every moment of the game we associate a partial function $\ass{\cdot}$ from the variables into the domain of the parametrizing structure describing values assigned to the variables by either of the players.
The instance is true if and only if the EP has a winning strategy in this game, i.e., can respond to all moves of the UP while keeping all constraints satisfied. Otherwise, the instance is false and the UP has a winning strategy, i.e., can violate a constraint regardless of the moves of the EP. 

If $\struct A$ is a $\tau$-structure and $\phi(x_1,\dots,x_n)$ is a $\tau$-formula with free variables $x_1,\dots,x_n$, then the relation 
	$\{(a_1,\dots,a_n)\in A^{n} \mid \struct{A}\models \phi(a_1\dots, a_n) \}$ is the \emph{relation defined by $\phi$ in $\struct A$}, and denoted by $\phi^{\struct{A}}$. 
Let $S$ be a set of $\tau$-formulas.
 We say that a relation $R$ has a \emph{$S$-definition} in $\struct{A}$, or that $\struct{A}$ \emph{$S$-defines} $R$,  if $R$ equals $\phi^{\struct{A}}$ for some   $\phi \in S$.
For instance, $S$ can be the set of all quantifier-free or primitive positive formulas over $\tau$.
    We might also say that a relation $R$ $S$-defines another relation $R'$ if the structure $(A;R)$ $S$-defines $R'$.  
The next proposition is  folklore in the constraint satisfaction literature.  
\begin{proposition}[\cite{Book, qecsps}]  \label{prop:reductions}  Let $\struct{A}$ be a  structure and $R\subseteq A^k$ for some $k\in \mathbb{N}$.  
\begin{itemize} 
    \item If $R$ has a pp definition in $\struct{A}$, then it is preserved by all  polymorphisms of $\struct{A}$. 
    \item If $R$ is (q)pp-definable in $\struct{A}$, then $(\mathrm{Q})\CSP(A;R)$ reduces to $(\mathrm{Q})\CSP(\struct{A})$ in LOGSPACE.
\end{itemize} 
\end{proposition}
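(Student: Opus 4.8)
\textbf{Proof proposal for Proposition~\ref{prop:reductions}.}

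The plan is to prove the two items separately, by induction on the structure of the defining formula in each case.

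For the first item, suppose $R = \phi^{\struct{A}}$ where $\phi(x_1,\dots,x_k) = \exists y_1,\dots,y_m\, (\phi_1 \wedge \dots \wedge \phi_n)$ is a pp-formula and each $\phi_i$ is atomic. Let $f$ be an $r$-ary polymorphism of $\struct{A}$ and let $\bar t_1,\dots,\bar t_r \in R$. For each $j \in [r]$, since $\struct{A} \models \phi(\bar t_j)$, there is a witnessing assignment $a^j$ to $y_1,\dots,y_m$ such that, extending $\bar t_j$ by these witnesses, every conjunct $\phi_i$ is satisfied in $\struct{A}$. First I would observe that $f$ preserves atomic formulas: the conjuncts of the form $\bot$ cannot occur (they would make $R$ the formula $\bot$ in no free position, hence $R = \emptyset$, vacuously preserved), conjuncts $x_i = x_j$ are preserved since $f$ is a function, and conjuncts $R'(\dots)$ with $R' \in \tau$ are preserved by the definition of a polymorphism. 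Applying $f$ coordinate-wise to the $r$ witnessing assignments yields an assignment to $y_1,\dots,y_m$ that, together with $f(\bar t_1,\dots,\bar t_r)$, satisfies every $\phi_i$ in $\struct{A}$; hence $f(\bar t_1,\dots,\bar t_r) \in R$, as desired. (Alternatively, this is a standard fact that one may simply cite from \cite{Book}.)

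For the second item, we give a LOGSPACE many-one reduction from $\mathrm{(Q)}\CSP(A;R)$ to $\mathrm{(Q)}\CSP(\struct{A})$. Fix a qpp-formula $\psi(x_1,\dots,x_k)$ over $\tau$ defining $R$ in $\struct{A}$. Given an input instance $\Phi$ of $\mathrm{(Q)}\CSP(A;R)$, i.e., a qpp $\{R\}$-sentence, construct $\Phi'$ by replacing each constraint $R(z_{i_1},\dots,z_{i_k})$ with a fresh copy of the quantifier-free matrix of $\psi$ in which the free variables are substituted by $z_{i_1},\dots,z_{i_k}$ and the bound variables are renamed to be fresh (distinct for each occurrence); the quantifier prefix of $\psi$ (with those renamed variables) is inserted innermost, just before the constraint it replaces, or equivalently one may simply prepend all the fresh existential/universal quantifiers of all copies — in the correct relative order, with existential quantifiers for the $\exists$-block and universal for the $\forall$-block of each $\psi$-copy placed at the position of the replaced constraint. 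Since $\psi$ defines $R$ in $\struct{A}$, a partial play is consistent with $R(z_{i_1},\dots,z_{i_k})$ in $\struct{A}$ iff the corresponding block of $\Phi'$ admits a winning continuation for the existential player, so the existential player has a winning strategy in $\Phi$ over $(A;R)$ iff it does in $\Phi'$ over $\struct{A}$; thus $\Phi$ is true iff $\Phi'$ is true. The transformation is computable in LOGSPACE because $\psi$ is fixed (of constant size), so $\Phi'$ has size linear in $|\Phi|$ and is produced by a single pass over $\Phi$ with only a counter for variable renaming. The CSP case is the special case where $\psi$ (hence $\Phi, \Phi'$) uses only existential quantification.

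\textbf{Main obstacle.} The only delicate point is the quantifier bookkeeping in the second item: when $R$ is defined by a genuinely \emph{quantified} pp-formula (not merely an existential one), the inserted $\exists/\forall$ blocks must be placed in the prefix of $\Phi'$ so that the alternation structure faithfully reflects "the existential player commits to the $R$-values, then must be able to answer for the $\exists$-witnesses of $\psi$ against the $\forall$-challenges of $\psi$." I expect to handle this cleanly by keeping each $\psi$-block's quantifiers \emph{local}, i.e., placed exactly at the position of the constraint they replace (this is legitimate since qpp-sentences are closed under conjunction with quantifier reshuffling, as noted in the Preliminaries), which makes the equivalence of winning strategies immediate from the semantics of the game; the LOGSPACE bound is then routine since everything about $\psi$ is constant.
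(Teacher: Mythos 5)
The paper does not prove this proposition at all; it is stated as folklore and cited from the literature, and your argument is exactly the standard one those references give (coordinate-wise application of the polymorphism to witnessing assignments for the first item; gadget replacement of each $R$-constraint by a fresh copy of the defining formula for the second). Both items are correct, including the key point for the quantified case that each copy's quantifier block must sit innermost, after the entire original prefix, so that the $\forall$-challenges of $\psi$ are only issued once the arguments of the replaced constraint are fixed. One wording caveat: ``prepend all the fresh quantifiers'' would be wrong if read as placing them before the original prefix (the block's witnesses could then not depend on the $z$'s); your subsequent clause and the ``Main obstacle'' paragraph make clear you intend the local/innermost placement, which is the correct construction.
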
       

\subsection{Temporal structures}
The following operations play an essential role in the study of temporal (Q)CSPs.
\begin{definition}  \label{def:basic_operations}   
Let $e_{<0},e_{>0}$ be any fixed endomorphisms of $(\mathbb{Q};<)$ satisfying $e_{<0}(x)<0$ and $e_{>0}(x)>0$ for every $x\in \mathbb{Q}$.
Moreover, let $\lex$ be any fixed binary operation on $ \mathbb{Q}$ satisfying $\lex(x,y) <\lex(x',y')$ iff $x<x'$, or $x=x'$ and $y<y'$ for all $x,x',y,y'\in \mathbb{Q}$.
We denote by $\pp$\footnote{In contrast to previous literature on temporal CSPs,
we deviate from the  
notation pp from \cite{bodirsky2010complexity} that  
clashes with the shortcut for ``primitive positive'' and use $\pp$ instead.} and $\elel$ the binary operations on $\mathbb{Q}$ defined by 
\[\pp(x,y) = \begin{cases}
      e_{<0} x & x \leq  0, \\
      e_{>0}y & x > 0, 
\end{cases} \qquad \elel(x,y) = \begin{cases}
      \lex(e_{<0}x,e_{<0}y) & x \leq  0, \\
      \lex(e_{>0}y,e_{>0}x) & x > 0
\end{cases}
\]
(see Figure~\ref{fig:pp_and_ll} for an illustration of the weak linear orders induced by $\pp$ and $\elel$; undirected and directed edges represent constant and increasing values, respectively).  
\end{definition}  
\begin{figure}
     \centering
      \includegraphics[width=0.7\linewidth]{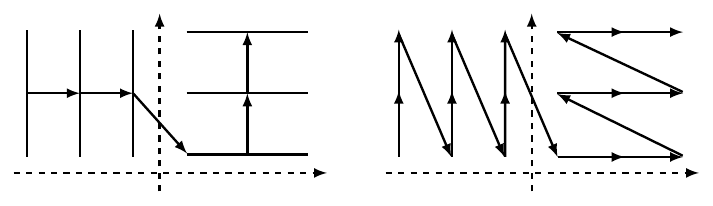} 
     \caption{A visualization from~\cite{bodirsky2010complexity} of $\pp$ (left) and $\elel$ (right). }
     \label{fig:pp_and_ll}
\end{figure}  
Note that  $\lex$ and $\elel$ are injective.
 
Since $(\mathbb{Q};<)$ has quantifier-elimination~\cite{Hodges}, every temporal relation is in fact quantifier-free-definable in $(\mathbb{Q};<)$; 
We may further assume that every quantifier-free definition is in \emph{conjunctive normal form} (CNF).
  We might sometimes refer to temporal relations directly using their CNF-definitions.
 Also, it will sometimes be convenient to work with formulas over the structure $({\mathbb Q};\leq,\neq)$ instead of the structure $({\mathbb Q};<)$, e.g., in the definition of OH from the introduction.
 We will frequently use the following algebraic characterization of OH. 
\begin{proposition}[\cite{ll,BoRyTCSPs}]
\label{prop:ordhorn} The following are equivalent for a temporal relation $R$.
\begin{itemize}
    \item $R$ is OH;
    \item $R$ is preserved by $\elel$ and $\dual\elel$;
    \item $R$ is preserved by every injective polymorphism of $(\mathbb{Q};\leq)$. 
\end{itemize}
\end{proposition}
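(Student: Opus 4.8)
\textbf{Proof plan for Proposition~\ref{prop:ordhorn}.}

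The plan is to prove the three-way equivalence by establishing the cycle of implications, using the structure-theoretic fact that every temporal relation has a quantifier-free CNF definition over $(\mathbb{Q};<)$, or equivalently over $(\mathbb{Q};\leq,\neq)$.

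First I would show that the third item implies the second: since $\elel$ and $\dual\elel$ are injective and (by inspection of Definition~\ref{def:basic_operations}, using that $\lex$ is injective and $e_{<0},e_{>0}$ are endomorphisms of $(\mathbb{Q};<)$) are polymorphisms of $(\mathbb{Q};\leq)$, any relation preserved by every injective polymorphism of $(\mathbb{Q};\leq)$ is in particular preserved by $\elel$ and $\dual\elel$.

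Next, the main direction: the second item implies the first, i.e., if $R$ is preserved by $\elel$ and $\dual\elel$ then $R$ is OH. Here I would argue contrapositively on the syntactic shape of a CNF definition of $R$ over $(\mathbb{Q};\leq,\neq)$. Take a definition of $R$ in CNF with the minimum possible number of literals; if $R$ is not OH, then after discarding clauses entailed by the others, some clause $C$ of this definition is not of the Ord-Horn form~\eqref{eq:OrdHorn}, meaning $C$ contains either two distinct ``positive'' order literals of the form $x\geq y$ (equivalently, literals $y\leq x$ and, separately, $x'\leq y'$ that are not the same disjunct), or a strict literal $x<y$ that cannot be subsumed. Writing $C = \delta_1 \vee \cdots \vee \delta_m$ with at least two order-type disjuncts $\delta_i$, I would pick, for each disjunct $\delta_j$, a tuple $\bar{t}_j \in R$ that falsifies every disjunct of $C$ except $\delta_j$ — such tuples exist because $C$ appears in a minimal CNF, so no single disjunct of $C$ is redundant — and then apply $\elel$ (respectively $\dual\elel$) to two such tuples corresponding to the two order disjuncts. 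The key computation is that $\elel(\bar{t}_i,\bar{t}_j)$ falsifies $C$: the coordinatewise behaviour of $\elel$ (lexicographic on the ``negative part'', reversed on the ``positive part'') destroys both weak inequalities $\delta_i$ and $\delta_j$ simultaneously while preserving all the disequality disjuncts being false, contradicting that $R$ is preserved by $\elel$. One has to split into cases according to whether the two offending order literals share variables and according to the sign pattern $x\leq 0$ vs.\ $x>0$ of the relevant coordinates, choosing $\elel$ or its dual accordingly; this case analysis is where I expect the real work to lie. (This is essentially the argument behind the cited sources~\cite{ll,BoRyTCSPs}, adapted to the $\elel$-presentation.)

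Finally, the first item implies the third: if $R$ is OH, it suffices to show each Ord-Horn clause~\eqref{eq:OrdHorn} is preserved by every injective polymorphism $f$ of $(\mathbb{Q};\leq)$, since a conjunction of preserved relations is preserved. Given tuples $\bar{t}_1,\dots,\bar{t}_n \in R$ with $R$ defined by a single clause $(x_1\neq y_1 \vee \cdots \vee x_k\neq y_k \vee x_{k+1}\geq y_{k+1})$, either some $\bar{t}_i$ satisfies the clause through a disequality disjunct $x_j \neq y_j$, in which case injectivity of $f$ forces $f(\dots)$ to also differ in coordinates $x_j,y_j$; or every $\bar{t}_i$ has $x_j = y_j$ in all the disequality coordinates, forcing $t_i[x_{k+1}] \geq t_i[y_{k+1}]$ for all $i$, and then $f$ being a polymorphism of $(\mathbb{Q};\leq)$ (hence monotone) gives $f(\dots)[x_{k+1}] \geq f(\dots)[y_{k+1}]$. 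The only subtlety is handling the mixed case — some tuples use a disequality, others the order literal — which is resolved by noting that a single tuple $\bar{t}_i$ using the disequality disjunct already witnesses satisfaction regardless of the other tuples, since $f$ is applied coordinatewise and injectivity in the relevant two coordinates does not depend on the other tuples. This completes the cycle $3\Rightarrow 2\Rightarrow 1\Rightarrow 3$, and hence all three are equivalent.
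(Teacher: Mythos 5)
First, a point of comparison: the paper does not prove Proposition~\ref{prop:ordhorn} at all — it is imported wholesale from \cite{ll,BoRyTCSPs} — so your proposal is measured against what a complete proof would require rather than against an internal argument. Your two easy implications are correct and essentially complete: for $3\Rightarrow 2$ one checks that $\elel$ and $\dual\elel$ are injective and monotone with respect to $\leq$ (a short case analysis on the sign of the first arguments), and for $1\Rightarrow 3$ your observation that a single tuple $\bar{t}_i$ with $\bar{t}_i\of{x_j}\neq\bar{t}_i\of{y_j}$ forces the image columns at $x_j$ and $y_j$ to differ, by injectivity of $f$ as a map on $\mathbb{Q}^n$, correctly disposes of the ``mixed case''; otherwise all tuples satisfy the weak inequality and monotonicity finishes.

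The gap is in $2\Rightarrow 1$, which is the entire content of the proposition. Your ``key computation'' — that $\elel(\bar{t}_i,\bar{t}_j)$ falsifies $C$ whenever $\bar{t}_i,\bar{t}_j$ are the non-redundancy witnesses of the two offending order disjuncts — is false for arbitrary witnesses and arbitrary positioning. For $C=(a\geq b)\vee(c\geq d)$, place the first argument of $\elel$ entirely in the positive regime: there the image is ordered lexicographically with the \emph{second} argument dominant, so if the second witness satisfies its disjunct strictly, that disjunct survives in the image and $C$ remains satisfied. Making the construction work requires choosing which witness goes in which argument slot, choosing the threshold (i.e., the automorphism splitting the coordinates into the two regimes) as a function of the order type of the witnesses, sometimes switching to $\dual\elel$, and — crucially — sometimes a witness that satisfies its order disjunct \emph{with equality}, which mere non-redundancy of disjuncts in a minimal CNF does not supply; when no such witness exists the clause must first be tightened and re-split (the disjunct $c\geq d$ becomes $c>d$, i.e.\ $(c\geq d)\wedge(c\neq d)$ redistributed over the CNF), which changes the clause you started from. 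There is a second, quieter gap at the very first step: you need that \emph{non}-OH-definability of $R$ is witnessed by a single bad clause of the particular CNF you chose, which requires working with a canonical/reduced CNF and arguing that it is OH whenever any OH definition exists. Both issues are exactly the kind of reduced-CNF bookkeeping the paper carries out at length in Lemma~\ref{lemma:ElimMin}, Lemma~\ref{lemma:arity_four} and Appendix~\ref{appendix:lemma:notpp}; you correctly identify where the work lies, but since that work is the theorem, the central implication remains unproved as written.
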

Recall the operation $\pp$ from Definition~\ref{def:basic_operations}.
Temporal relations preserved by $\pp$ admit the following syntactic description.
\begin{proposition}[\cite{ToTheMax}]  \label{ppsynt}	A temporal relation is preserved by $\pp$ if and only if it has a CNF-definition over $\{\neq,\geq\}$ where, for each clause $\psi$, there are $k,\ell \geq 0$ such that $\psi$ is of the~form 
  \begin{align}
      (x\neq y_1\vee \cdots \vee x\neq y_k \vee x \geq z_1 \vee \cdots \vee x \geq z_{\ell}).  \label{eq:CNF}
  \end{align}  
\end{proposition}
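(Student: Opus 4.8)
The plan is to prove the two implications separately. For the forward direction (a clause of the form \eqref{eq:CNF} defines a $\pp$-preserved relation), I would first observe that the relation defined by a CNF formula is the intersection of the relations defined by its individual clauses, and that polymorphisms are inherited under intersection, so it suffices to treat a single clause $C$ of the form \eqref{eq:CNF}. Write $x$ for the distinguished variable of $C$, let $\bar s,\bar t$ satisfy $C$, and put $\bar r\coloneqq\pp(\bar s,\bar t)$, applied coordinatewise; suppose for contradiction that $\bar r$ falsifies $C$, i.e.\ $r_x=r_{y_j}$ for all $j$ and $r_x<r_{z_m}$ for all $m$. I would then split on the sign of $s_x$, using that $r_i<0$ iff $s_i\le 0$ (with $r_i=e_{<0}(s_i)$), and $r_i>0$ iff $s_i>0$ (with $r_i=e_{>0}(t_i)$). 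If $s_x\le 0$, then $r_x=e_{<0}(s_x)<0$, so $r_{y_j}<0$ forces $s_{y_j}\le 0$ and $e_{<0}(s_x)=e_{<0}(s_{y_j})$, hence $s_x=s_{y_j}$ by injectivity of $e_{<0}$; and $r_{z_m}>r_x$ gives $s_x<s_{z_m}$ (by monotonicity of $e_{<0}$ when $r_{z_m}<0$, and because $s_x\le 0<s_{z_m}$ when $r_{z_m}>0$). Thus $\bar s$ falsifies $C$, a contradiction. The case $s_x>0$ is symmetric, with $e_{>0}$ and $\bar t$ in place of $e_{<0}$ and $\bar s$.

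For the converse, let $R\subseteq\mathbb{Q}^n$ be a temporal relation preserved by $\pp$, and let $\Psi$ be the conjunction of all clauses of the form \eqref{eq:CNF} over $x_1,\dots,x_n$ that hold on every tuple of $R$. Then $R\subseteq\Psi^{(\mathbb{Q};<)}$ trivially, and the task is the reverse inclusion, so fix $\bar a\in\Psi^{(\mathbb{Q};<)}$. Let $v_1<\dots<v_p$ be the distinct entries of $\bar a$ with blocks $B_r\coloneqq\{i:a_i=v_r\}$, pick $\rho_r\in B_r$, and consider
\[
  C_r \coloneqq \Bigl(\textstyle\bigvee_{i\in B_r}x_{\rho_r}\neq x_i \ \vee\ \bigvee_{i\,:\,a_i>v_r}x_{\rho_r}\ge x_i\Bigr),
\]
which has the form \eqref{eq:CNF} and is falsified by $\bar a$. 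Since $\bar a$ satisfies every clause of $\Psi$, none of the $C_r$ lies in $\Psi$, so for each $r$ there is $\bar b^r\in R$ falsifying $C_r$: that is, $\bar b^r$ is constant on $B_r$, with value strictly below $b^r_i$ for every $i$ with $a_i>v_r$. Applying a suitable automorphism of $(\mathbb{Q};<)$ — which preserves $R$ because $R$ is temporal — I may assume $\bar b^r$ is $\le 0$ on $B_r$ and $>0$ on $B_{r+1}\cup\dots\cup B_p$.

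The last step is to evaluate the left-comb $\pp$-term $\pp(\bar b^1,\pp(\bar b^2,\dots,\pp(\bar b^{p-1},\bar b^p)\dots))$, which lies in $R$ since $R$ is preserved by $\pp$ and hence by all $\pp$-terms. Unwinding the definition of $\pp$ and using that $e_{<0},e_{>0}$ are order-embeddings with image among the negatives, respectively the positives, an induction on $r$ shows that after the first $r-1$ nestings the coordinates not yet frozen are exactly $B_r\cup\dots\cup B_p$, that the next factor $\bar b^r$ freezes $B_r$ to a single value lying below everything contributed later, and that blocks contributed later sit entirely above blocks contributed earlier; hence the value of the term lies in the $\Aut(\mathbb{Q};<)$-orbit of $\bar a$, so $\bar a\in R$, as desired. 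The degenerate cases are immediate: if $p=1$ the term is just $\bar b^1$ and $C_1$ has no $\ge$-literals, and if some $B_r$ is a singleton then the $\neq$-part of $C_r$ collapses to the always-false disjunct $x_{\rho_r}\neq x_{\rho_r}$, which is the $k=0$ instance of \eqref{eq:CNF}.

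I expect the heart of the argument — and its only genuine obstacle — to be the bookkeeping in this final step: one must fix the sign-normalisations of the witnesses $\bar b^1,\dots,\bar b^p$ coherently so that the successive applications of $\pp$ peel off the blocks $B_1,B_2,\dots$ in increasing order and reassemble precisely the order type of $\bar a$. Everything else (the clause-wise reduction in the forward direction, closure of $R$ under compositions of $\pp$ and under $\Aut(\mathbb{Q};<)$, and the trivial inclusion $R\subseteq\Psi^{(\mathbb{Q};<)}$) is routine.
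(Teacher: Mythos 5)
Proposition~\ref{ppsynt} is cited to \cite{ToTheMax}; the present paper does not reproduce a proof, so there is no in-paper argument to compare against. Your standalone proof is correct. In the forward direction the reduction to a single clause is sound (polymorphisms pass to intersections), and the sign split on $s_x$ works because the sign of $\pp(s,t)$ is determined by the sign of $s$ alone, so the falsity of the $\neq$-disjuncts and the $\geq$-disjuncts of $\pp(\bar s,\bar t)$ transfers coherently either to $\bar s$ (when $s_x\le 0$) or to $\bar t$ (when $s_x>0$), using injectivity and monotonicity of $e_{<0}$, respectively $e_{>0}$. In the converse direction, your use of the ``maximal'' CNF $\Psi$ together with witness tuples and a left-comb $\pp$-term is a standard Galois-style argument and it goes through. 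The subtle point you handle correctly is that the sign normalisation of $\bar b^r$ only needs to control $B_r\cup\dots\cup B_p$: for $i\in B_{r'}$ with $r'<r$, the coordinate of the composite term is already determined by the outer factors $\bar b^1,\dots,\bar b^{r'}$, so the unconstrained values of $b^r_i$ there are irrelevant. The remaining block-ordering claim (that the composite tuple realises the order type of $\bar a$) deserves to be spelled out a bit more: for blocks $r<r'$ the resulting values are $w_r=e_{>0}^{r-1}(e_{<0}(c_r))$ and $w_{r'}=e_{>0}^{r-1}\bigl(e_{>0}^{r'-r}(e_{<0}(c_{r'}))\bigr)$, and after cancelling the common monotone prefix $e_{>0}^{r-1}$ one compares a negative number with a positive one, giving $w_r<w_{r'}$. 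This is exactly the fact covered by your parenthetical remark that $e_{<0},e_{>0}$ are order-embeddings into the negatives and positives, so I regard it as a presentational rather than substantive gap.
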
  
A syntactic description of temporal relations preserved by $\dual\pp$ can be obtained simply by replacing each instance of $\geq$ by $\leq$.
We hereby have both a syntactic and an algebraic description for OH and ($\dual$) $\pp$, and can use them interchangeably. 
From the syntactic descriptions \eqref{eq:OrdHorn} and \eqref{eq:CNF} it is apparent that the fragments OH and $\pp$ are incomparable.
Their intersection consists of all temporal relations definable by a conjunction of clauses of the form~\eqref{eq:CNF} where $\ell\in \{0,1\}$.
This can be shown using the following lemma.
\begin{restatable}{lemma}{ElimMin}   \label{lemma:ElimMin}
     Let $R$ be an OH relation defined by a quantifier-free formula $\phi $ in CNF over the signature $\{\leq,\neq\}$ containing a clause $\psi_1 \vee \psi_2$, where $\psi_1$ is equivalent to $(x \geq z_1 \vee \cdots \vee x \geq z_{\ell})$ for some variables $x$ and $z_1, \dots, z_{\ell}$. Then we may replace $\psi_1$ in $\phi $ by $(x \geq z_i)$ for some $i \in [\ell]$ so that the resulting formula still defines $R$.
\end{restatable}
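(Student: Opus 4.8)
The plan is to argue by analyzing what the clause $\psi_1\vee\psi_2$ forces, using the algebraic characterization of OH from Proposition~\ref{prop:ordhorn}, namely that $R$ is preserved by every injective polymorphism of $(\mathbb{Q};\leq)$. Fix the clause $C = \psi_1\vee\psi_2$, write $\psi_1 = (x\geq z_1\vee\cdots\vee x\geq z_\ell)$, i.e. $\psi_1$ is equivalent to $x\geq\min(z_1,\dots,z_\ell)$, and let $\psi_2$ be the remaining (possibly empty) part of the clause. First I would dispose of degenerate cases: if some $z_i$ is the variable $x$ itself, then $\psi_1$ is a tautology and so is $C$, so $C$ can simply be dropped, and a fortiori replaced by $(x\geq x)$; if $\ell=1$ there is nothing to prove. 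So assume $\ell\geq 2$ and all $z_i$ are distinct from $x$.

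The heart of the argument is the following claim: there is an index $i\in[\ell]$ such that every tuple $\bar a\in R$ already satisfies $(x\geq z_i)\vee\psi_2$; then replacing $\psi_1$ by $(x\geq z_i)$ yields a formula $\phi'$ that still holds on all of $R$, and conversely $\phi'$ implies $\phi$ since $(x\geq z_i)$ entails $\psi_1$, so $\phi'$ also defines $R$. To prove the claim, suppose for contradiction that for each $i\in[\ell]$ there is a tuple $\bar a_i\in R$ with $a_i(x) < a_i(z_i)$ and $\bar a_i\not\models\psi_2$. Since $\bar a_i\models C$, we must have $\bar a_i\models\psi_1$, so there is some other index $j\neq i$ with $a_i(x)\geq a_i(z_j)$. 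The idea is now to combine these $\ell$ tuples using a suitable injective polymorphism of $(\mathbb{Q};\leq)$ — concretely, an iterated $\lex$-type operation that ``stacks'' the tuples so that $\bar a_1$ dominates the ordering at the top level, $\bar a_2$ at the next, and so on — to produce a single tuple $\bar b\in R$ in which $x$ is strictly below all of $z_1,\dots,z_\ell$ while $\psi_2$ is still violated (this last point because violating $\psi_2$, a disjunction of inequalities and disequalities, is an intersection of conditions each of which is preserved by injective monotone maps when it holds in every coordinate tuple, as one checks clause-type by clause-type). Such a $\bar b$ violates $C$, contradicting $\bar b\in R$. This pins down the desired $i$.

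The main obstacle I anticipate is the bookkeeping in the last step: one has to make sure that the injective polymorphism chosen simultaneously (a) forces $b(x)<b(z_i)$ for \emph{all} $i$, which requires carefully prioritizing which $\bar a_i$ controls the comparison of $x$ with each $z_i$, and (b) preserves the violation of $\psi_2$. For (a), the natural move is to build the polymorphism as a composition of $\ell$ copies of $\lex$ applied from the outside in, using at stage $i$ the fact that $\bar a_i$ witnesses $x<z_i$ while only being dominated on pairs $x,z_j$ with $j\neq i$, so that once $\bar a_i$ is ``layered in'' the strict inequality $x<z_i$ is locked and cannot be overturned by lower-priority layers; one should double-check that the cyclic dependency ($\bar a_i$ needs $a_i(x)\geq a_i(z_j)$ for some $j$) does not create an obstruction, which it does not because we only need \emph{one} index $i$ to survive, and a minimal counterexample / pigeonhole argument on the ``witness graph'' $i\mapsto j$ gives it. For (b), I would observe that $\neg\psi_2$ is a conjunction of literals of the form $u<v$ or $u=v$, and that an injective polymorphism of $(\mathbb{Q};\leq)$ is in particular strictly monotone and injective, hence preserves both $<$ and $=$ in each coordinate, so $\neg\psi_2$ holding in every $\bar a_i$ implies it holds in $\bar b$. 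Alternatively, and perhaps more cleanly, one can invoke Proposition~\ref{ppsynt}-style normal forms or simply cite that OH relations are closed under $\elel$ and $\dual\elel$ and run a two-tuple version of the argument inductively on $\ell$, which sidesteps having to construct an $\ell$-ary polymorphism by hand.
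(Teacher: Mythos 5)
Your high-level idea---deriving a contradiction by combining hypothetical ``bad'' tuples via an injective polymorphism of $(\mathbb{Q};\leq)$---is the same as the paper's, but your primary plan has a genuine gap at the combination step, which you flag but do not actually close. You assume that for every $i\in[\ell]$ there is $\bar a_i\in R$ with $a_i(x)<a_i(z_i)$ and $\bar a_i\not\models\psi_2$, and you propose stacking $\bar a_1,\dots,\bar a_\ell$ lexicographically to produce $\bar b\in R$ with $b(x)<b(z_i)$ for every $i$. But each $\bar a_i$ is also forced to satisfy $a_i(x)\geq a_i(z_{j(i)})$ for some $j(i)\neq i$; if the highest-priority layer $\bar a_1$ has $a_1(x)>a_1(z_{j(1)})$ \emph{strictly}, then $b(x)>b(z_{j(1)})$ regardless of the lower layers, and $\bar b$ may still satisfy $\psi_1$. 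A cycle in the witness graph $i\mapsto j(i)$ always exists, but it gives you no control over whether those offending top-level comparisons are strict or equalities, so the ``pigeonhole / minimal counterexample'' remark does not repair the construction.

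The paper avoids this by strengthening the witness condition: for a fixed pair of distinct indices $j_1,j_2$ it considers only tuples $\bar t\in R$ satisfying $(x\geq z_{j_i})\wedge\bigwedge_{j\neq j_i}(x< z_j)\wedge\neg\psi_2$, i.e.\ tuples that hit \emph{exactly one} disjunct of $\psi_1$. Combining two such tuples via $f(u,v)=\lex(\max(u,v),\lex(u,v))$, after aligning their $x$-values by an automorphism, gives a tuple of $R$ violating both $\psi_1$ and $\psi_2$---a contradiction---so at most one of $j_1,j_2$ admits such a witness, and the disjunct without one can be dropped while still defining $R$. Iterating removes one disjunct at a time. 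Your fallback (``two-tuple version, inductively on $\ell$'') points in the right direction, but with your weaker witness condition the same strict-inequality obstruction would recur; the refined witness and the one-disjunct-at-a-time elimination are the ingredients you are missing.
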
 

In the present article, the intersection of OH and ($\dual$) $\pp$ is the sole source of all newly identified tractable QCSPs.
It is convenient to work with a finite relational basis.
Recall the relations $\Michal$ and $\DMichal$ from the introduction.
 By Lemma~\ref{lemma:ElimMin}, Proposition~\ref{prop:reductions}, Proposition~\ref{prop:ordhorn}, and Lemma~\ref{lemma:pp-def} below, a temporal relation is OH and preserved by $\pp$ if and only if it is pp-definable in $(\mathbb{Q};\Michal,\neq)$.
  An analogous statement holds for $\dual\pp$ and $(\mathbb{Q};\DMichal,\neq)$.  
\begin{restatable}{lemma}{PPDef}    \label{lemma:pp-def} Let $\phi_{k}(x,y_1,\dots,y_{k},z_1)$ be a pp-definition of the $(k+2)$-ary temporal relation defined by $\mu_k\coloneqq (x\neq y_1\vee \cdots \vee x\neq y_k \vee x \geq z)$. Then  
  $\exists h \ \phi_{k}(x,y_1,\dots,y_{k},h)  \wedge \Michal(h,h,x) \wedge \Michal(h,y_{k+1},z_1)$
  is a pp-definition of  the $(k+3)$-ary temporal relation defined by $\mu_{k+1}$.
 \end{restatable}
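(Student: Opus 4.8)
The plan is to simplify the proposed formula and then verify the resulting equivalence directly over the theory of linear orders.

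First I would unfold the two $\Michal$-atoms. By definition of $\Michal$, the atom $\Michal(h,h,x)$ is equivalent to $h\geq x$ (its guard $h=h$ always holds), while $\Michal(h,y_{k+1},z_1)$ is equivalent to $(h=y_{k+1}\Rightarrow h\geq z_1)$. Writing $\mu_k(x,y_1,\dots,y_k,h)$ for the clause $(x\neq y_1\vee\cdots\vee x\neq y_k\vee x\geq h)$, the relation defined by the proposed formula is therefore
\[
  \exists h\;\bigl(\mu_k(x,y_1,\dots,y_k,h)\;\wedge\;h\geq x\;\wedge\;(h=y_{k+1}\Rightarrow h\geq z_1)\bigr),
\]
and it suffices to show this is equivalent to $\mu_{k+1}=(x\neq y_1\vee\cdots\vee x\neq y_{k+1}\vee x\geq z_1)$. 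That the formula is primitive positive is immediate, since $\phi_k$ is pp by assumption and we only conjoin relational atoms to it and add an existential quantifier.

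For the implication from the displayed formula to $\mu_{k+1}$ I would argue by contradiction: suppose the formula holds with some witness $h$, yet $x=y_1=\cdots=y_{k+1}$ and $x<z_1$. Since $x=y_i$ for all $i\in[k]$, all disequality disjuncts of $\mu_k$ fail, forcing $x\geq h$; together with $h\geq x$ this gives $h=x$, hence $h=y_{k+1}$, so the third conjunct yields $h\geq z_1$, i.e.\ $x\geq z_1$ — a contradiction. Conversely, given a tuple satisfying $\mu_{k+1}$ I would supply a witness $h$ by cases. If $x\geq z_1$, take $h:=x$: then $x\geq h$, $h\geq x$, and $h=x\geq z_1$, so all three conjuncts hold regardless of the $y_i$. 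Otherwise $x<z_1$ and $\mu_{k+1}$ provides some $i\in[k+1]$ with $x\neq y_i$. If $i\leq k$, take $h:=z_1$: then $\mu_k$ holds because $x\neq y_i$, and $h=z_1>x$, $h=z_1\geq z_1$ give the other two conjuncts. If $i=k+1$ (so $x=y_1=\cdots=y_k\neq y_{k+1}$), take $h:=x$: then $\mu_k$ holds via $x\geq h$, $h\geq x$ holds, and the third conjunct is vacuous because $h=x\neq y_{k+1}$.

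The only step that requires genuine care is this last case, where the guard $\Michal(h,y_{k+1},z_1)$ must be discharged even though $x<z_1$: here $h$ is forced to equal $x$ (the first $k$ disequalities all fail), and it is precisely the new disjunct $x\neq y_{k+1}$ of $\mu_{k+1}$ that makes the guard vacuous — this is exactly the role the fresh variable $h$ plays, copying $x$ when needed and staying free (large) otherwise. The remaining verifications are a routine check over linear orders.
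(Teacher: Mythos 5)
Your proof is correct and takes essentially the same approach as the paper: unfold the two $\Michal$ atoms to $(h\geq x)$ and $(h=y_{k+1}\Rightarrow h\geq z_1)$, then verify the two implications over linear orders by an elementary case split. The only difference is cosmetic — you split the witness construction for $h$ on $x\geq z_1$ versus which disjunct of $\mu_{k+1}$ fires, whereas the paper splits on whether $x,y_1,\dots,y_k$ are all equal — and your choices of $h$ (namely $x$ or $z_1$) in fact handle the case $x>\max\{y_1,\dots,y_{k+1}\}$ a bit more cleanly.
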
 
  %

\section{Identifying the tractable case(s)\label{section:proof_of_theorem1}}  
This section is devoted to the proof of Theorem~\ref{thm_ptime}.
An analogue of Theorem~\ref{thm_ptime}  for $\dual\pp$ can be proved simply by reversing the order in each individual statement.
We mainly prove that $\QCSP(\mathbb{Q};\Michal)$ is in PTIME, using Algorithm~\ref{algo:dima}. 
Given this fact, the theorem can be proved easily.   
We remark that, in Algorithm~\ref{algo:dima},  instances of $\QCSP(\mathbb{Q};\Michal)$ are viewed as sentences over the signature $\{\geq,\neq\}$ in prenex normal form whose quantifier-free part is in CNF.  

We first need to introduce some terminology.
   In the Section~\ref{section:proof_of_theorem1}, $\Phi$ always denotes an arbitrary or explicitly specified instance  of $\QCSP(\mathbb{Q};\Michal)$; we denote its quantifier-free part by $\phi$, and its variables by $V$.
  Furthermore, we denote the universal variables by $\varsu $ and the existential variables by $\varse $.
  Let $\prec$ be the linear order on all variables of $\Phi$ in which they appear in the quantifier prefix of $\Phi$.
When we write $A\prec B$ for $A,B \subseteq V$, we mean $x\prec y$ for all $x\in A,y\in B$. In particular, this condition is trivially true if one of the two sets is empty.
\begin{definition} \label{def:terminology} For $x,z\in \vars$, we define 
\begin{itemize} 
    \item  $x\equiv z$ if and only if $x,z$ refer to the same variable,
    \item  $x \preceq z$ if and only if $x\equiv z$ or $x\prec z$,
    \item  $x \preceq_{\forall} z$ if and only if $x\equiv z$, or $x\prec z$ and $z\in \varsu$.
\end{itemize} 
For $u\in V$ and $A\subseteq V$, we define
 \begin{itemize}
     \item ${\uparrow}_{u}\coloneqq \{y\in\varsu \mid u\preceq y  \}$,
     \item ${\uparrow}_{A}\coloneqq \bigcup_{u\in A}{\uparrow}_{u}$ (recall that the empty union is empty). 
 \end{itemize}   
\end{definition} 
Note that the three binary relations in Definition~\ref{def:terminology} are transitive.
\begin{definition} \label{def:cut}
    For every pair $x,z \in \vars$, we define $\cut{x}{z}\coloneqq \{u\in \varsu  \mid   \varse\cap \{x,z\} \prec u\}\setminus\{z\}.$
\end{definition}  

Observe that the definition of the $\cut{x}{z}$ depends on how  
$x$ and $z$ are quantified.  
The idea is that $\cut{x}{z}$ represents the universal variables that the UP can always make equal to $x$ to trigger the condition $(x \geq z)$ via an entailed constraint of the form $\big((\bigwedge\nolimits_{v\in A} x=v) \Rightarrow x\geq z\big)$.
Since the UP has full control over the values of these variables with respect to $x$ and $z$, they can be removed from the clauses added in the second last line in Algorithm~\ref{algo:dima}.

Note that, by Lemma~\ref{lemma:pp-def}, the constraints added by Algorithm~\ref{algo:dima} correspond to relations which have pp-definitions in $(\mathbb{Q};\Michal)$ of length linear in their arity.
This means that satisfiability in Algorithm~\ref{algo:dima} can be tested using an oracle for $\CSP(\mathbb{Q};\Michal,<)$, because we can simply replace each constraint by its pp-definition in $\CSP(\mathbb{Q};\Michal)$ while only changing the size of $\Phi$ by a polynomial factor.
%
%
%
%
%
%
%
%
			\begin{algorithm}    
   \SetAlgoLined
       \caption{\label{algo:dima}An algorithm for $\QCSP(\mathbb{Q};\Michal)$.}	  
				\KwIn{An instance $\Phi$ of $\QCSP(\mathbb{Q};\Michal)$ with the quantifier-free part $\phi$}  
				\KwOut{\emph{true} or \emph{false}}   
    \While{$\phi$ changes}{ 
       \For{$x,z,u\in \vars$}{
       \If{$\phi$ contains the clause $(x\geq z)$ or $(z\geq x)$, where $x\prec z$ and $z\in \varsu$}{\Return \emph{false}\;} 
      \If{$\phi\wedge(\bigwedge_{v\in {\uparrow}_u \setminus \{x,z\}} x=v)\wedge (x<z)$ is unsatisfiable}{expand $\phi$ by the clause $\big((\bigwedge_{v\in {\uparrow}_u \setminus (\{x,z\} \cup \cut{x}{z})} x=v)\Rightarrow x\geq z\big)$\;
       }
      }
      }
        \Return \emph{true}\;  
			\end{algorithm}  
\begin{example}\label{example:algo-comp} 
    Consider the instance $\Phi$ of $\QCSP(\mathbb{Q};\Michal)$ defined by 
    \begin{align*} \exists x_1 \forall x_2 \exists x_3   \forall x_4 \exists x_5 \big(& (x_1=x_2\Rightarrow x_1 \geq x_5)  \wedge (x_3=x_2\Rightarrow x_3\geq x_4) \\
        {} \wedge \ & (x_5=x_4\Rightarrow x_5 \geq x_3) \wedge (x_3\geq x_1) \wedge  (x_5\geq x_1)   \big).
    \end{align*} 
   We claim that Algorithm~\ref{algo:dima} derives  $(x_1\geq x_4)$, and thereby rejects on $\Phi$. 
   We first observe that the formula  $\phi\wedge (\bigwedge_{v\in {\uparrow}_u \setminus \{x_1,x_4\}} x_1=v)\wedge (x_1<x_4)$ is satisfiable for every $u\in \{x_1,\dots, x_5\}$.
   Since $x_3, x_5 \in \varse$, it is enough to show that $\phi\wedge (x_1=x_2)\wedge (x_1<x_4)$ is satisfiable, which is witnessed by any assignment satisfying $(x_5=x_1=x_2<x_3<x_4)$.
   On the other hand,  $\phi\wedge (\bigwedge_{v\in {\uparrow}_{x_2} \setminus \{x_1,x_3\}} x_1=v)\wedge (x_1<x_3)$ is not satisfiable.
   Therefore, the algorithm expands $\phi$ by $(x_1=x_2\Rightarrow x_1\geq x_3)$, because $x_4\in \cut{x_1}{x_3}$. 
   But now $\phi\wedge (\bigwedge_{v\in {\uparrow}_{x_2} \setminus \{x_1,x_4\}} x_1=v)\wedge (x_1<x_4)$ is not satisfiable anymore.
   Since $x_2\in \cut{x_1}{x_4}$, the algorithm expands $\phi$ by $(x_1 \geq x_4)$.
\end{example}
 
 As mentioned below Definition~\ref{def:cut}, Algorithm~\ref{algo:dima} rejects correctly because all constraints added during the run of the algorithm are logically entailed by $\Phi$, see Lemma~\ref{lemma:algo_refutes_correctly}.  
\begin{lemma} \label{lemma:algo_refutes_correctly} Suppose that Algorithm~\ref{algo:dima} derives from $\Phi$ a constraint $\psi$. Then $\Phi$ is true if and only if $\Phi$ expanded by $\psi$ is true.  
\end{lemma}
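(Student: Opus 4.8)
The plan is to show that every clause the algorithm expands $\phi$ by is logically entailed by $\Phi$; that is, $\struct{B}\models\forall\varsu\exists\varse\,(\phi\Rightarrow\psi)$ where $\psi$ is the new clause. Since $\Phi$ expanded by $\psi$ plainly implies $\Phi$, the nontrivial direction is that truth of $\Phi$ implies truth of the expansion, and this follows once we know $\psi$ is entailed: a winning strategy for the EP in $\Phi$ is a winning strategy in the expanded instance, because along any play consistent with $\phi$ the added clause $\psi$ is automatically satisfied. I would prove this by induction on the run of the algorithm, so that when analysing a single expansion step we may assume all previously added clauses are already entailed, hence $\Phi$ and the current instance have the same winning strategies and the same truth value; in particular the satisfiability oracle call is being made on a formula that correctly reflects the entailments of $\Phi$.

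The heart of the argument is the single expansion step. Suppose the algorithm, examining $x,z,u$, finds that
\[
\phi\wedge\Bigl(\bigwedge\nolimits_{v\in{\uparrow}_u\setminus\{x,z\}}x=v\Bigr)\wedge(x<z)
\]
is unsatisfiable, and adds the clause
\[
\psi\;\coloneqq\;\Bigl(\bigwedge\nolimits_{v\in A}x=v\Bigr)\Rightarrow x\geq z,\qquad A\coloneqq{\uparrow}_u\setminus(\{x,z\}\cup\cut{x}{z}).
\]
Unsatisfiability means that $\phi$ entails $\bigl(\bigwedge_{v\in{\uparrow}_u\setminus\{x,z\}}x=v\bigr)\Rightarrow x\geq z$ as a first-order statement over $(\mathbb{Q};<)$. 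I want to upgrade this to the qpp-entailment $\struct{B}\models\forall\varsu\exists\varse\,\bigl(\phi\Rightarrow\psi\bigr)$, i.e.\ that the EP, playing a winning strategy for $\Phi$, can additionally guarantee $\psi$. Fix such a strategy and a play of the UP, and suppose that in the resulting assignment $\ass{\cdot}$ all equalities $x=v$ for $v\in A$ hold; we must show $\ass{x}\geq\ass{z}$. The key point is the meaning of $\cut{x}{z}$: these are universal variables that are quantified after both $x$ and $z$ (after whichever of $x,z$ is existential, and after an existential $x$ or $z$ has been played), so at the moment they are assigned the EP has already committed to $\ass{x}$ and $\ass{z}$; hence the UP could \emph{alternatively} have set every $v\in\cut{x}{z}$ equal to $\ass{x}$ without changing $\ass{x}$ or $\ass{z}$ and while remaining a legal play of the game. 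Consider this modified play: now all of ${\uparrow}_u\setminus\{x,z\}$ is assigned the value $\ass{x}$ (those in $A$ by assumption, those in $\cut{x}{z}$ by construction), so by the entailment extracted from the oracle call, $\phi$ forces $\ass{x}\geq\ass{z}$ in the modified play. But the EP is following a strategy that keeps $\phi$ satisfied, and $\ass{x},\ass{z}$ are unchanged between the two plays, so $\ass{x}\geq\ass{z}$ holds in the original play as well, proving $\psi$.

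There is a subtlety to handle carefully, which I expect to be the main obstacle: making precise that ``the UP could alternatively set $v\in\cut{x}{z}$ to $\ass{x}$'' without disturbing the EP's subsequent responses in a way that breaks $\phi$. This needs the EP's strategy to be \emph{robust} enough — formally, one argues along the canonical/maximal play or fixes the EP's moves for the universal positions already played and notes that the EP's responses after the $\cut{x}{z}$ variables depend only on the values seen so far, all of which we are holding fixed except for switching some later universal variables to $\ass{x}$; after that switch the EP's continuation is whatever the strategy dictates and still satisfies $\phi$ by assumption that the strategy is winning. One must also check the edge cases where $x$ or $z$ is universal (the ``\Return \emph{false}'' guard rules out the problematic configuration $x\prec z\in\varsu$ with a direct order clause, but one still has to track how $\cut{x}{z}$ is defined when, say, $z\in\varsu$), and the case $A=\varnothing$ where $\psi$ is simply the unit clause $(x\geq z)$. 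Finally, one should remark that the correctness of the oracle-based satisfiability test itself relies on Lemma~\ref{lemma:pp-def}: the added constraints have pp-definitions in $(\mathbb{Q};\Michal)$ of length linear in their arity, so the test is over $\CSP(\mathbb{Q};\Michal,<)$ applied to a polynomially larger formula, and thus ``unsatisfiable'' in the algorithm genuinely means unsatisfiable over $(\mathbb{Q};<)$.
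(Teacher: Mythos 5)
Your proposal is correct and rests on the same key idea as the paper's proof: the oracle call certifies that $\phi$ entails the weaker clause indexed by all of ${\uparrow}_u\setminus\{x,z\}$, and this upgrades to the added clause because the UP can set every variable of $\cut{x}{z}$ equal to $\ass{x}$ without changing $\ass{x}$ or $\ass{z}$ (since either $x$ is universal or $x$ precedes all of $\cut{x}{z}$, and likewise for $z$). The paper packages this as a contrapositive transfer of a winning strategy of the UP from the instance with the added clause to the instance with the weaker entailed clause, whereas you argue directly that the EP's winning strategy for $\Phi$ already satisfies the new clause along every play; the content is identical, and the ``robustness'' concern you raise is a non-issue because the modified play is simply another play against the same winning strategy, so $\phi$ is satisfied in it automatically while $\ass{x}$, $\ass{z}$, and the values of the variables in ${\uparrow}_u\setminus(\{x,z\}\cup\cut{x}{z})$ are unchanged.
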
 
\begin{proof}  
Denote by $\Psi$ and $\Psi'$ the sentences obtained from $\Phi$ by replacing $\phi$ with $\phi\wedge \psi$ and  $\phi\wedge \psi'$, respectively, where 
\[\psi\coloneqq  \Big(\bigvee\nolimits_{v\in {\uparrow}_u \setminus (\{x,z\} \cup \cut{x}{z})} x\neq v \Big)\vee (x \geq z)
\quad \text{and}\quad 
\psi' \coloneqq \Big(\bigvee\nolimits_{v\in {\uparrow}_u \setminus \{x,z\}} x\neq v\Big) \vee (x \geq z) .
\]
The formula $\phi$ is satisfiable because we can assign the same value to every variable. Since $\phi \wedge \neg \psi'$ is unsatisfiable, we have that 
$\phi$ entails $\psi'$. It follows that $\Phi$ is true iff $\Psi'$ is true.
To complete the proof, we have to show that if $\Psi'$ is true, then $\Psi$ is true. We prove the contraposition and assume that the UP has a winning strategy on $\Psi$. If the UP wins on $\Psi$ by falsifying any clause different from $\psi$, then the very same choices lead the UP to falsifying the same clause in $\Psi'$. Otherwise, the UP falsifies $\psi$ while playing on $\Psi$. Then the UP can play in the same way on $\Psi'$ when it comes to the variables that occur both in $\psi$ and $\psi'$ and set all variables in $\cut{x}{z}$ to the same value as $x$. Note that this is possible since either $x$ is universal or $x$ precedes all variables in $\cut{x}{z}$. It remains to show that $\psi'$ is falsified. Clearly, all $\{\neq\}$-disjuncts are falsified. Since $z$ is either universal or precedes all variables in $\cut{x}{z}$, the disjunct $(x \geq z)$ is falsified as well, because it is falsified in $\psi$. 
\end{proof}

Example~\ref{example:unsat2} showcases how a winning strategy of the UP, obtainable implicitly from Lemma~\ref{lemma:algo_refutes_correctly}, might in fact be uniquely determined.

\begin{figure}[h]
     \centering
      \includegraphics[width=0.75\linewidth]{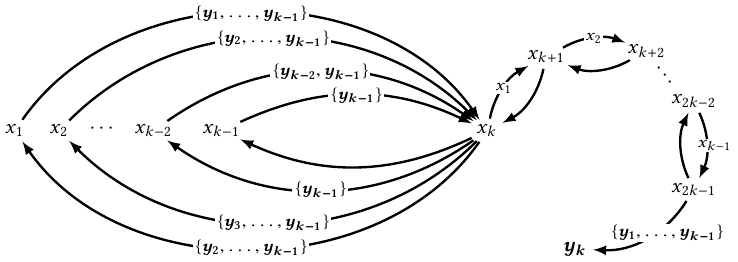} 
     \caption{The quantifier-free part of $\Phi$ from Example~\ref{example:unsat2}.}
     \label{fig:example_unsat2}
 \end{figure}   
\begin{example} \label{example:unsat2} 
 Consider the instance $\Phi\coloneqq \exists x_1\forall y_1\exists x_2\forall y_2\cdots \exists x_k\forall y_k\exists x_{k+1}\dots \exists x_{2k-1}\ \phi  $ with $\phi$ described by Figure~\ref{fig:example_unsat2}, where an edge from $x$ to $z$ labeled with $y$ stands for $\Michal(x,y,z)$.
 An edge from $x$ to $z$ labeled with some subset $A$ of the universal variables stands for a constraint of the form $\big((\bigwedge_{v\in A} x=v) \Rightarrow x\geq z\big)$ already derived by Algorithm~\ref{algo:dima}.
Using Lemma~\ref{lemma:pp-def}, these edges can be appropriately replaced with pp-definitions, and thus $\Phi$ is well-defined.

 We claim that the UP has the unique winning strategy on $\Phi$ of playing $\ass{y_i}$ equal to an arbitrary number $>\ass{x_i}$ if  $i=k$ and $\ass{x_1}=\cdots = \ass{x_k}$, and equal to $\min\{\ass{x_1},\dots, \ass{x_i}\}$ otherwise. We start by showing that this is a winning strategy.
 %
%

Suppose, on the contrary, that there exists an assignment $\ass{\cdot} \colon \vars \rightarrow\mathbb{Q}$ of values to the variables witnessing that the EP has a counter-strategy to the strategy of the UP from above.
First, consider the case where  $\ass{x_1},\dots,\ass{x_k}$ are not all equal.
Suppose that $\ass{x_k}=\min\{\ass{x_1},\dots,\ass{x_k}\}$ and let $j\in [k]$ be largest index such that $\ass{x_j}>\ass{x_k}$.
Recall that the algorithm already derived the constraint $\psi_1\coloneqq \big((\bigwedge\nolimits_{v\in \{y_{j+1},\dots,y_{k-1}\}} x_k=v) \Rightarrow x_k\geq x_j\big)$ on $\Phi$.
By the strategy of the UP, we have $\ass{y_{j+1}}=\cdots=\ass{y_{k-1}}=\ass{x_{j+1}}=\ass{x_k}.$
But then $\psi_1$ is clearly not satisfied by $\ass{\cdot}$, a contradiction.
Suppose now that $\ass{x_k}>\min\{\ass{x_1},\dots,\ass{x_k}\}$. 
Let $j\in [k]$ be the largest index such that $\ass{x_j}=\min\{\ass{x_1},\dots,\ass{x_k}\}$.
Recall that the algorithm already derived the constraint $\psi_2\coloneqq \big((\bigwedge\nolimits_{v\in \{y_j,\dots,y_{k-1}\}} x_j=v) \Rightarrow x_j\geq x_k\big)$. 
$\Phi$.
By the strategy of the UP, we have $\ass{y_j}=\cdots=\ass{y_{k-1}}=\ass{x_j}<\ass{x_k}.$
But then $\psi_2$ is clearly not satisfied by $\ass{\cdot}$, a contradiction.
We conclude that $\ass{x_1}=\cdots=\ass{x_k}$. 
In this case, the UP played $\ass{y_k}>\ass{x_k}$.
Since $\ass{\cdot}$ is a satisfying assignment, we must have $\ass{x_k}=\ass{x_{k+1}}=\cdots=\ass{x_{2k-1}}$.
But then $\ass{\cdot}$ does not satisfy $\big((\bigwedge\nolimits_{v\in \{y_{1},\dots,y_{k-1}\}} x_{2k-1}=v) \Rightarrow x_{2k-1}\geq y_k\big)$ because $\ass{y_k}>\ass{x_k}=\ass{x_{2k-1}}=\ass{y_1}=\dots=\ass{y_{k-1}}$, a contradiction.
We conclude that the strategy of the UP from above is a winning strategy.

The strategy of the UP is unique in the sense that, no matter what values the EP played for $x_1,\dots, x_{i}$, if the UP deviates from his strategy at $y_{i}$, then the EP wins by playing  $\ass{x_{i+1}}=\cdots =\ass{x_{2k-1}}$ equal to an arbitrary number $>\max( \ass{x_i}, \ass{y_i})$ if $\ass{x_i}\neq \ass{y_i}$ and equal to $\min\{\ass{x_1},\dots,\ass{x_i}\}$ otherwise.
%

Finally, we show that the algorithm derives \emph{false}.
In the first run of the main loop, we get  $(x_{i+1}\geq x_{i})$ for all $i\in \{1,\dots, k-2\}$.
Assuming previously derived constraints, we get $(x_{k}\geq x_{i})$ for all $i\in \{1,\dots, k-1\}$ (purely by transitivity).
Now it is possible to derive $(x_{i}=y_{i}\Rightarrow x_{i} \geq x_{i+1})$ for all $i\in \{1,\dots, k-1\}$.
In the final step, we get $(x_1\geq y_k)$, again, simply by invoking an oracle for $\CSP(\mathbb{Q};\Michal,<)$, which makes the algorithm reject.
\end{example}

\subsection{False instances}

 The goal of this subsection is reformulating the condition for rejection by Algorithm~\ref{algo:dima} within a certain proof system $\mathcal{P}$ operating on $\Phi$, whose rules are given in Table~\ref{table:proof_system} using a Datalog-style syntax. 
 The proof system syntactically derives predicates of the form $\mathcal{P}(x,y;A)$ with $x,y\in V$ and $A \subseteq \varsu$ (on the left hand side of $\mathrel{\mathop:-}$) from other predicates of this form derived earlier and the information encoded in $\Phi$ (on the right hand side of $\mathrel{\mathop:-}$).

We shall now provide some intuition behind the formulation of the proof system.  
The idea is that an expression  $\mathcal{P}(x,z;A)$ should capture a constraint $\big((\bigwedge\nolimits_{v\in A} x=v) \Rightarrow x\geq z \big)$  entailed by $\Phi$, where  $A$ only consists of universal variables.
Assuming said semantics, \Trans and \AltTrans   just describe natural properties of such expressions, and \Simplify and \Refute witness consequences of the quantification over the variables.
The combination of \Init and \Progress captures precisely the situations where the UP can indirectly enforce the identification of two (potentially existential) variables within a constraint in $\phi$.
In particular, it can be used to introduce $\mathcal{P}(u,z; \{ v\})$ for  conjuncts $(u = v \Rightarrow v \geq z)$ in $\phi$ with $v$ universally quantified as follows. The proof system first derives $\mathcal{P}(u,u; \emptyset)$ and $\mathcal{P}(v,v; \emptyset)$ using \Init. Then it uses 
\Progress to derive $\mathcal{P}(u,z; \{ v\})$ by identifying $x_1, x_2$ with $u$ and $x_3, x_4$ with $v$.   

The reader might naturally ask why we cannot obtain a polynomial-time algorithm by just closing $\Phi$ under the rules of the proof system with a suitable form of fixed-point semantics. The reason is that, already under the least fixed-point semantics, the proof system might derive exponentially many expressions of the form $\mathcal{P}(x,z;A)$. Such a situation occurs, e.g., in Example~\ref{example:five} and in the case of the constraint paths in $\phi$ as defined in the proof of Lemma~\ref{lemma:hardness}.

The precise connection between the proof system and Algorithm~\ref{algo:dima} is captured by Lemma~\ref{lemma:algo_p_halt}.  
 \begin{table} 
    \caption{The inference rules of the proof system $\mathcal{P}$. Here I stands for “initialize”, S for “simplify”, T for “transitivity”, A for “alternative transitivity”, C for “constraint,” and R for “reject.”} 
    \label{table:proof_system}    
     \thinmuskip=1mu
     \medmuskip=1mu
    \thickmuskip=1mu
   \begin{tabular}{ll} 
   \toprule  
         \Init  \label{rule:Init} & $\mathcal{P}(x,x;\emptyset) \  \mathrel{\mathop:-} \ x\in V$     \\ \midrule 
         \Simplify  \label{rule:Simplify} & $\mathcal{P}(x,z;A\setminus \cut{x}{z}) \  \mathrel{\mathop:-} \ \mathcal{P}(x,z;A)$   \\ \midrule
        \Trans  \label{rule:Trans}  & $ \mathcal{P}(x,z;A) \  \mathrel{\mathop:-} \ \mathcal{P}(x,y;A) \wedge \mathcal{P}(y,z;\emptyset)$    \\ \midrule 
            \Refute \label{rule:Refute} &   
                $ \bot \  \mathrel{\mathop:-} \     \left\{\hspace{-0.5em}\begin{array}{l} \text{1. $\mathcal{P}(x,z;\emptyset)$}  \\
           \text{2. $x\prec z$ and $z\in \varsu$, or $z\prec x$ and $x\in \varsu$} \end{array}\right.$  \\  \midrule 
     \AltTrans  \label{rule:AltTrans} 
      & $\mathcal{P}(x_i,z;A\cup B\cup (\{x_1,x_2\} \setminus \{x_i\})) \  \mathrel{\mathop:-} \ \left\{\hspace{-0.5em}\begin{array}{l} \text{1. $\mathcal{P}(x_1,y;A) \wedge \mathcal{P}(y,x_2;\emptyset) \wedge \mathcal{P}(y,z;B)$} \\
           \text{2. $(\{x_1,x_2\} \setminus \{x_i\})\subseteq \varsu$  $(i\in \{1,2\}$)} \end{array}\right.$     \\ 
       \midrule
      \Progress  \label{rule:Progress} & 

 $ \mathcal{P}(x_i,z;A\cup B\cup  (\{x_1,x_2,x_3,x_4\} \setminus \{x_i\}) ) \  \mathrel{\mathop:-} \     \left\{\hspace{-0.5em}\begin{array}{l} \text{1.  $\mathcal{P}(x_1,u;A)\wedge   \mathcal{P}(u,x_2;\emptyset)$}  \\
           \text{2.  $\mathcal{P}(x_3,v;B)  \wedge   \mathcal{P}(v,x_4;\emptyset)$} \\
           \text{3.  $(\{x_1,x_2,x_3,x_4\} \setminus \{x_i\})\subseteq \varsu$  $(i\in \{1,2,3,4\}$)} \\
           \text{4.    $(u=v \Rightarrow u \geq z) $ or $ (v=u \Rightarrow v \geq z)$ in $\phi$}
           \end{array}\right.$           \\ 
           \bottomrule
    \end{tabular}
\end{table}  
Note that Lemma~\ref{lemma:algo_p_halt} in particular implies that Algorithm~\ref{algo:dima} rejects whenever the proof system derives $\bot$.
When combined with Lemma~\ref{lemma:algo_refutes_correctly} and Lemma~\ref{lemma:winning_EP} (proved later in Section~\ref{section:satisfiable}), we get that this is in fact the only situation in which Algorithm~\ref{algo:dima} rejects.   
Lemma~\ref{lemma:algo_p_halt} can be proved by a straightforward induction on derivation sequences within $\mathcal{P}$. 
%
%
\begin{restatable}{lemma}{AlgoHalt}  \label{lemma:algo_p_halt} Suppose that $\mathcal{P}(x,z;A)$ is derived by the proof system and $z\notin A$. Then Algorithm~\ref{algo:dima} expands $\phi$ by 
$ 
    \big( (\bigwedge\nolimits_{v\in \uparrow_A\setminus (\{x,z\}\cup \cut{x}{z})} x=v) \Rightarrow x\geq z \big).  
$
 In particular, it expands $\phi$ by $(x\geq z)$ for every derived $\mathcal{P}(x,z;\emptyset)$.
\end{restatable}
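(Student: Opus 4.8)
The plan is to induct on the length of a derivation of $\mathcal{P}(x,z;A)$ in the proof system $\mathcal{P}$, with a case distinction on the last rule applied. Two features of Algorithm~\ref{algo:dima} make this work: it never deletes clauses from $\phi$, and unsatisfiability of the formula in the \textbf{if}-test is preserved under adding clauses; so it suffices to show that, once the clauses handed down by the induction hypothesis on the premises of the rule are present in $\phi$, the \textbf{if}-test that produces the target clause succeeds for one well-chosen triple $(x,z,u)$. For $A \neq \emptyset$ the good choice will always be to take $u$ to be the $\prec$-least element of $A$: since $\prec$ is linear and ${\uparrow}_{u'} \supseteq {\uparrow}_{u''}$ whenever $u' \preceq u''$, this gives ${\uparrow}_u = {\uparrow}_A$, so the clause the algorithm adds for $(x,z,u)$ is syntactically the one in the statement, and the ``in particular'' is just the special case $A = \emptyset$, where ${\uparrow}_{\emptyset} = \emptyset$ and the clause is $(x \geq z)$. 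I would also record at the outset the elementary fact that $\cut{x}{z}$ is ``upward closed modulo $z$'' among universal variables — if $u' \in \cut{x}{z}$, $u'' \in \varsu$, $u' \prec u''$ and $u'' \neq z$, then $u'' \in \cut{x}{z}$ — since this governs all manipulations of the sets ${\uparrow}_A \setminus (\{x,z\} \cup \cut{x}{z})$.

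For the base cases, \Init yields $\mathcal{P}(x,x;\emptyset)$, whose asserted clause is the tautology $(x \geq x)$, so there is nothing to prove (and whenever such a predicate is later invoked as a premise it only produces trivially true $\geq$-facts); and \Refute never ends a derivation of a predicate of the form $\mathcal{P}(x,z;A)$, so it is irrelevant to the induction — it matters only for the remark after the lemma, which is then immediate, since by the $A=\emptyset$ case we get $(x \geq z) \in \phi$ and the quantification condition of \Refute is precisely the guard of the first \texttt{return false} in Algorithm~\ref{algo:dima}. For the inductive step I would handle \Simplify, \Trans, \AltTrans, \Progress in turn. In each case the induction hypothesis gives one or two clauses in $\phi$, and the task is to check that $\phi \wedge \bigl(\bigwedge_{v \in {\uparrow}_u \setminus \{x,z\}} x = v\bigr) \wedge (x < z)$ is unsatisfiable for the chosen $u$. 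The mechanism is uniform: under the assumed equalities the premise-conjunctions of the induction-hypothesis clauses collapse, those clauses fire, and one obtains a short chain of $\geq$-relations among the variables at hand which, together with $x < z$ and transitivity of $\geq$, is contradictory. Concretely, for \Simplify one splits on whether the $\prec$-least element of $A$ lies in $\cut{x}{z}$: if not, ${\uparrow}_{A \setminus \cut{x}{z}} = {\uparrow}_A$ and the target clause is unchanged, and if so, upward closedness forces ${\uparrow}_A \setminus (\{x,z\} \cup \cut{x}{z}) = \emptyset$, so the induction hypothesis has already produced $(x \geq z)$. For \Trans, the clause from $\mathcal{P}(x,y;A)$ fires to give $x \geq y$ and $\mathcal{P}(y,z;\emptyset)$ contributes $(y \geq z) \in \phi$, hence $x \geq z$. \AltTrans is \Trans with one extra bookkeeping step: the additional universal variable put into the conclusion's set is equated to $x$ by the \textbf{if}-test, which is exactly what pins the intermediate variable $y$ equal to $x$ so that the third premise clause can fire. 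For \Progress, say with $i = 1$ and conjunct $(u = v \Rightarrow u \geq z)$ of $\phi$: the clause from $\mathcal{P}(x_1,u;A)$ fires to give $x_1 \geq u$, and $x_1 = x_2$ (which holds because $x_2 \in \varsu$ lies in the ${\uparrow}$-closure of the conclusion's set) together with $(u \geq x_2) \in \phi$ pins $u = x_1$; symmetrically $v = x_3 = x_1$; hence $(u = v \Rightarrow u \geq z)$ collapses to $x_1 \geq z$, contradicting $x < z$. The remaining choices of $i$, the symmetric conjunct $(v = u \Rightarrow v \geq z)$, and the degenerate possibilities ($x = z$, where the target clause is a tautology; $y \in A$ in \Trans; $z$ coinciding with one of the $x_j$ in \Progress) are dealt with separately by small ad hoc arguments.

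The step I expect to be the main obstacle is not any individual rule but the set-theoretic bookkeeping hidden in the phrase ``the clause fires''. To fire the induction-hypothesis clause for, say, $\mathcal{P}(x_1,u;A)$ inside the \textbf{if}-test for the conclusion, I need its premise set ${\uparrow}_A \setminus (\{x_1,u\} \cup \cut{x_1}{u})$ to be contained in the set ${\uparrow}_u \setminus \{x_1,z\}$ of variables the test equates to $x_1$; this is clear except possibly for the single variable $z$ (when $z$ is a universal variable that happens to lie in ${\uparrow}_A$ but not in $\cut{x_1}{u}$). The real content, therefore, is to show that this blocking situation cannot occur for a derivable predicate — which I would do by strengthening the induction hypothesis with a suitable invariant on the sets $A$ that appear in derivable $\mathcal{P}(x,z;A)$ (morally, the universal variables that can enter $A$ are never ones that the relevant cut would already absorb), maintained through all six rules and in particular through \Simplify. \Progress is the most delicate instance of this bookkeeping, because two induction-hypothesis clauses must fire simultaneously and the conjunct drawn from $\phi$ only collapses after both $u$ and $v$ have been forced equal to the pivot variable.
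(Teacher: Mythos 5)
Your overall structure matches the paper's: induct on the length of the derivation in $\mathcal{P}$, case split on the last rule applied, and for $A\neq\emptyset$ choose $u$ to be the $\prec$-least element of $A$ (so that ${\uparrow}_u={\uparrow}_A$); the $A=\emptyset$ case is handled by a separate one-line choice of $u$. The ``degenerate'' side cases you list ($y\in A$ in \Trans, the variants of $i$ in \AltTrans/\Progress, the symmetric conjunct) are all present in the paper's proof as well.

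The place where your plan diverges — and where the obstacle you flag is indeed fatal, not merely bookkeeping — is the choice of \emph{inductive invariant}. You track ``the \textbf{if}-test succeeds for a well-chosen $(x,z,u)$,'' and then try to make the IH clauses ``fire'' syntactically inside that test. This is exactly what produces the blocking situation with the single variable $z$: the premise set ${\uparrow}_A\setminus(\{x_1,u\}\cup\cut{x_1}{u})$ of an IH clause may genuinely contain $z$ while $z$ is, by construction, removed from the set ${\uparrow}_{u'}\setminus\{x_1,z\}$ of variables the conclusion's test equates to $x_1$; nothing in the proof system's rules rules this out. The fix you propose — strengthening the IH with an invariant on which universal variables may appear in $A$ — is not the intended one and is unlikely to go through: \Progress and \AltTrans dump arbitrary universal variables into the accumulated sets $A\cup B\cup(\{x_1,\ldots\}\setminus\{x_i\})$, and there is no structural reason for these sets to avoid the $z$-like variables relevant to the conclusion's cut.

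The paper's key move is to change the invariant to a \emph{stronger} and purely semantic statement: it suffices to show that $\phi\wedge(\bigwedge_{v\in A}x=v)\wedge(x<z)$ is unsatisfiable. Note the conjunction is over $A$ itself, not ${\uparrow}_A$. This implies the lemma because, with $u$ the $\prec$-least of $A$ and using $z\notin A$, one has $A\subseteq({\uparrow}_u\setminus\{x,z\})\cup\{x\}$, so the strictly stronger conjunction tested by the algorithm is a fortiori unsatisfiable (this monotonicity is the paper's one-time Observation before the induction, rather than a per-step concern). Crucially, the stronger invariant composes cleanly: since the IH constraint for a premise $\mathcal{P}(x_1,u;A)$ mentions only variables in $A$, and $A$ is always a subset of the conclusion's accumulated set, the IH constraint is implied by the conclusion's conjunction with no ``firing'' step at all, and the variable $z$ simply never enters. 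The inductive cases then reduce to short chains of entailments ($x_1\geq u$, $u\geq x_2$, $x_1=x_2$, etc.) that contradict $x<z$ by transitivity. If you adopt this invariant, the rest of your plan — including your treatment of \Simplify via the $\prec$-least element and your handling of the degenerate subcases — carries over essentially unchanged.

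(A minor point: your remark that the ``In particular'' clause concerns the \Refute rule and the algorithm's first \texttt{return false} is off-target; the ``In particular'' is just the $A=\emptyset$ instance of the main statement. The connection to \Refute and rejection is made in the surrounding text, not inside the lemma.)
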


We conclude this subsection with two examples, the first showcasing how the run of Algorithm~\ref{algo:dima} can be represented within the proof system, and the second showcasing that, in general, the proof system cannot be used to verify true instances in polynomial time.  
\begin{example} \label{example:algo-comp2} Consider the instance $\Phi$ from Example~\ref{example:algo-comp}.
We show that the proof system derives  $\bot$.
First, we can derive $\mathcal{P}(x_i,x_i;\emptyset)$ for every $i\in [5]$ using \Init.
With \Progress (and suitable identifications of variables), we get $\mathcal{P}(x_3,x_1;\emptyset), \mathcal{P}(x_5,x_1;\emptyset)$, $\mathcal{P}(x_1,x_5;\{x_2\})$, $\mathcal{P}(x_3,x_4;\{x_2\})$, and $\mathcal{P}(x_5,x_3;\{x_4\})$.
Next, a single application of \AltTrans yields $\mathcal{P}(x_1,x_3;\{x_2,x_4\})$.
We can use \Simplify to simplify the latter to $\mathcal{P}(x_1,x_3;\{x_2\})$.
Using \AltTrans again, we get $\mathcal{P}(x_1,x_4;\{x_2\})$, and finally, \Simplify simplifies the latter to $\mathcal{P}(x_1,x_4;\emptyset)$. Now an application of  \Refute yields $\bot$. 
\end{example}  
 \begin{figure}[h]
     \centering
      \includegraphics[width=0.6\linewidth]{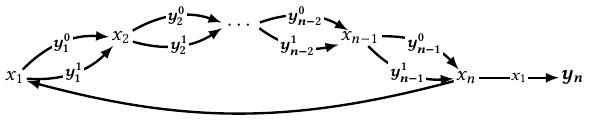} 
     \caption{The quantifier-free part of $\Phi$ from Example~\ref{example:five}.}
     \label{fig:example_5}
 \end{figure} 
\begin{example}  \label{example:five}
 Consider $\Phi\coloneqq \exists x_1\forall y_1^0\forall y_1^1\exists x_2\forall y_2^0\forall y_2^1 \cdots \exists x_{n-1}\forall y_{n-1}^0 \forall y_{n-1}^1\exists x_n\forall y_n \ \phi$ with $\phi$ described by Figure~\ref{fig:example_5},  where an edge from $x$ to $z$ labeled with $y$ stands for $\Michal(x,y,z)$.
Note that the proof system derives $\mathcal{P}(x_1,x_n;\{y^{i_1}_{1},\dots,y^{i_{n-1}}_{n-1}\})$ for all $i_1,\dots, i_{n-1}\in \{0,1\}$.
Indeed, this is because it can follow the shortest derivation sequences, of which there are exponentially many.
In contrast, Algorithm~\ref{algo:dima} derives
the constraints
$(x_{n-1}\geq x_1),\dots, (x_{2}\geq x_1)$, $(x_1 \geq y_n)$ in this order, which leads to rejection.
Interestingly enough, constraint paths as in Figure~\ref{fig:example_5} were previously used in \cite{zhuk2023complete} to prove PSPACE-hardness of $\QCSP(\mathbb{Q};\Dima)$.
\end{example}

\subsection{True instances\label{section:satisfiable}} 
In this subsection we prove Lemma~\ref{lemma:winning_EP}, which states that the refutation condition \Refute from Table~\ref{table:proof_system} is not only sufficient but also necessary.
%
%
\begin{lemma} \label{lemma:winning_EP} If the proof system does not derive $\bot$ from $\Phi$, then $\Phi$ is true. 
\end{lemma}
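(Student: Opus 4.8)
The goal is to exhibit a winning strategy for the existential player (EP) on $\Phi$, under the hypothesis that the proof system $\mathcal{P}$ does not derive $\bot$. The natural approach is to describe explicitly how the EP should respond to the universal player's (UP's) moves, and then argue that no constraint of $\phi$ is ever violated. The overall picture is that the only way a constraint $\Michal(x,y,z)$ (i.e.\ $(x=y\Rightarrow x\geq z)$) can get into trouble is if the UP forces $x$ and $y$ to become equal, and the derived predicates $\mathcal{P}(x,z;A)$ are precisely a bookkeeping device for tracking which equalities the UP can force. So the strategy should be: the EP plays values that are ``as large as possible subject to the constraints,'' and whenever an equality $x=z$ is \emph{forced} on the EP by the accumulated $\{\geq\}$- and $\neq$-constraints together with the UP's moves, this forced equality should already be witnessed by a derived $\mathcal{P}(x,z;A)$ with $A$ made equal to $x$ by the UP.

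\textbf{Key steps.} First I would set up the right notion of ``position'': after a prefix of the game has been played, one has a partial assignment $\ass{\cdot}$ to the variables played so far, and the EP must choose the value of the next existential variable $x_i$. I would define the EP's move at $x_i$ to be the supremum (realized in $\mathbb{Q}$ since only finitely many constraints are in play, or more carefully: pick $\ass{x_i}$ maximal in the finite set of ``threshold'' values determined by the constraints $x_i\geq z$ and the disequalities, plus something strictly larger if nothing forces equality) — roughly, play $x_i$ strictly above everything unless a constraint $(x_i\geq z)\in\phi$ or a derived $(x_i\geq z)$ forces $\ass{x_i}\le \ass{z}$, in which case respect the binding ones with equality only when genuinely forced. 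Second, I would prove the invariant, maintained after each of the EP's moves, that the partial assignment satisfies every clause of $\phi$ that is already fully assigned, and moreover that whenever $\ass{x}=\ass{z}$ for two already-assigned variables $x,z$, the predicate $\mathcal{P}(x,z;A)$ (or $\mathcal{P}(z,x;A)$) has been derived with all of $\uparrow_A$ assigned the value $\ass{x}$ by the UP. This invariant is the heart of the argument: its maintenance under a UP move at a universal variable $y$ uses \Init and \Progress (the UP identifying $y$ with an existential variable corresponds exactly to a \Progress step, as sketched in the paragraph before the lemma), and its maintenance under an EP move uses \Trans and \AltTrans to propagate forced equalities, plus \Simplify to discard the universal variables in a cut (which the UP cannot usefully use against $x$ and $z$ simultaneously). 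Third, I would show this invariant is never violated for a constraint $\Michal(x,y,z)$: if at the end $\ass{x}=\ass{y}$, then $\mathcal{P}(x,y;A)$ or $\mathcal{P}(y,x;A)$ is derived with $\uparrow_A$ all equal to $\ass{x}$; combining with $\mathcal{P}(y,y;\emptyset)$ or the constraint itself via \Progress yields $\mathcal{P}(x,z;A')$, and then either $x\geq z$ is forced (so the EP obeyed it) or, if not, the EP played $\ass{x}$ large, so $\ass{x}\geq \ass{z}$ holds anyway — and crucially the non-derivation of $\bot$ via \Refute guarantees that no such $\mathcal{P}(x,z;\emptyset)$ with the wrong quantifier pattern arises, which is what prevents the EP from being cornered into $\ass{x}<\ass{z}$ while $\ass{x}=\ass{y}$ is forced on an earlier-or-universal variable.

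\textbf{Main obstacle.} The delicate part is the bookkeeping of \emph{which} universal variables end up in the set $A$ and showing that the EP's ``play large'' rule is consistent — i.e.\ that the constraints $(x\geq z)$ the EP must respect never form a cycle forcing $\ass{x_i}<\ass{x_i}$, and more importantly that when the UP identifies a universal variable $y$ with some $\ass{x_j}$ this really does correspond to an available \Progress (or \AltTrans) inference in $\mathcal{P}$ rather than merely to an equality that happens to hold. This is exactly where \Progress's clause~1/2 pattern $\mathcal{P}(x_1,u;A)\wedge\mathcal{P}(u,x_2;\emptyset)$ must be matched against the state of the game, and getting the induction hypothesis strong enough to carry all of this — in particular tracking the correspondence between ``$\ass{u}=\ass{x_2}$ is forced'' and ``$\mathcal{P}(u,x_2;\emptyset)$ is derived'' — is the crux. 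I expect the proof to proceed by induction on the length of the game prefix, with the invariant above as the inductive statement, and the \Progress/\AltTrans case analysis to be the technically heaviest part; the \Refute non-derivation hypothesis is used only at the very end (and in the \Refute-guard of each step) to rule out the one fatal configuration.
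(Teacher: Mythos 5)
Your overall plan --- define an explicit EP strategy, argue by induction on the game prefix that it never violates a constraint, use the derived $\mathcal{P}$-predicates as bookkeeping for forced equalities, and invoke the non-derivability of $\bot$ via \Refute to rule out the fatal configuration --- matches the structure of the paper's proof, and you correctly locate the crux. However, the specific invariant you propose, that $\ass{x}=\ass{z}$ implies some $\mathcal{P}(x,z;A)$ or $\mathcal{P}(z,x;A)$ has been derived with $A$ assigned $\ass{x}$, is not maintainable, and this is a genuine gap rather than a detail. The proof system cannot compose $\mathcal{P}(u,v;B)$ with $\mathcal{P}(v,w;A)$ into $\mathcal{P}(u,w;\cdot)$ unless $A=\emptyset$ (that is the only case \Trans handles), and \AltTrans fires only under a specific side condition on universal quantification of the discharged variables. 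So if $\ass{x}=\ass{y_2}=\ass{z}$ arose because some $\mathcal{P}(y_2,x;A)$ triggered the EP's rule at $x$ while an earlier $\mathcal{P}(z,y_2;B)$ triggered it at $y_2$, with $A,B$ both nonempty, there is in general no way to obtain any $\mathcal{P}(z,x;\cdot)$ or $\mathcal{P}(x,z;\cdot)$; your induction would break at exactly the step you gloss as ``using \Trans and \AltTrans to propagate forced equalities.''

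The paper's actual invariant (Claim~\ref{claim:charact_game}) is correspondingly weaker and more structured: $\ass{x}=\ass{z}$ iff there exist two \emph{witnessing quadruples} $(x,x_1,x_2;A_{x_2,x})$ and $(z,z_1,z_2;A_{z_2,z})$ recording the four derivations $\mathcal{P}(x_2,x;A_{x_2,x})$, $\mathcal{P}(x,x_1;\emptyset)$, $\mathcal{P}(z_2,z;A_{z_2,z})$, $\mathcal{P}(z,z_1;\emptyset)$, together with a pivot $y\in\{x_2,z_2\}$ with $y\preceq_{\forall}\{x_1,x_2,z_1,z_2\}$ and agreement of all the corresponding $\ass{\cdot}$-values. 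It is precisely this pair-plus-pivot shape that \Progress is designed to consume: a violated $\Michal(x,z,w)$ with $\ass{x}=\ass{z}$ combines with the two quadruples to yield $\mathcal{P}(y,w;A)$, and from there one reaches $\bot$ by \Trans, \Simplify, and \Refute. Establishing the quadruple characterization is the roughly page-long case analysis in the appendix and is the technical heart of the lemma; without it, the final \Progress application has no valid premises. One smaller imprecision: your ``play as large as possible'' is the wrong polarity --- the paper's EP plays $\ass{x}\geq\ass{z}$ \emph{only} when a $\mathcal{P}(x,y;\emptyset)$-witness with $\ass{y}\geq\ass{z}$ forces it and strictly below otherwise, and the if-and-only-if formulation is essential to prove both directions of the characterization.
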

\begin{proof} Suppose that the proof system cannot derive $\bot$ from $\Phi$. Consider the following strategy for the EP. 
Let $x\in \varse$ be such that $\ass{x}$ is not yet defined but $\ass{z}$ is defined for every $z\prec x$.
Then the EP selects any value for $x$ 
such that, for every  $z\prec x$:
\begin{itemize}
    \item $\ass{x}\geq \ass{z}$ \emph{if and only if} there exists $y\prec x$ with $\ass{y}\geq \ass{z}$ and $\mathcal{P}(x,y;\emptyset)$;  
    \item $\ass{x}=\ass{z}$ \emph{if and only if} there exist $y_1,y_2\prec x$ and $y_{2}\prec A \prec x$ such that \[\mathcal{P}(x,y_1;\emptyset)\wedge \mathcal{P}(y_2,x;A) \quad\text{and}\quad    \ass{z}=\ass{y_1}=\ass{y_2}=\ass{A}.\]
\end{itemize}

\begin{claim} \label{claim:strategy_EP_well_defined} The strategy of the EP is well-defined. 
\end{claim}
   
\begin{claimproof} Suppose, on the contrary, that it is not. 
Let $x\in \varse$ be the smallest variable w.r.t.\ $\prec$ for which the strategy of the EP is not well-defined.
Then it must be the case that there exist $y,y_1,y_2\prec x$ and $y_{2}\prec A \prec x$  such that  
\begin{equation} 
    \begin{array}{rl}
           \mathcal{P}(x,y;\emptyset) \wedge  \mathcal{P}(x,y_1;\emptyset)\wedge \mathcal{P}(y_2,x;A)  \quad \text{and} \quad 
         \ass{y}>\ass{y_1}=\ass{y_2}=\ass{A}.
    \end{array}
    \label{eq:EP_strat_not_defined}
\end{equation} 
In particular, $y \not\in A$.
We choose the smallest possible $y$ w.r.t.\ $\prec$ witnessing a condition of the form \eqref{eq:EP_strat_not_defined}.
By \Trans , we have $\mathcal{P}(y_2,y;A)$.

 \case{1}{$y\prec y_2$}. Then, by \Simplify , we have $\mathcal{P}(y_2,y;\emptyset)$.  

\case{1.1}{$y_{2}\in \varsu$}. Then  $\bot$ can be derived using \Refute, a contradiction.

\case{1.2}{$y_{2}\in \varse$}. Then the EP did not follow his strategy because $\ass{y}>\ass{y_2}$ and we have $\mathcal{P}(y_2,y;\emptyset)$, a contradiction.  

\case{2}{$y_2\prec y$}.

\case{2.1}{$y\in \varsu$}.
Then, by \Simplify , we have $\mathcal{P}(y_2,y;\emptyset)$. But then $\bot$ can be derived using \Refute, a contradiction.

\case{2.2}{$y\in \varse$}. 
Then, by \Simplify , we have $\mathcal{P}(y_2,y;A\setminus \cut{y_2}{y})$. 
Since $\ass{y}\geq \ass{y_2}$, by the strategy of the EP, there  exists a variable $y'\prec y$ such that $\ass{y'}\geq \ass{y_2}$ and $\mathcal{P}(y,y';\emptyset)$.
If $\ass{y'}=\ass{y_2}$, then the EP did not follow his strategy, because he played $\ass{y}>\ass{y'}$ while $y'\prec y$, $y_2\prec A\setminus \cut{y_2}{y} \prec y$, $\mathcal{P}(y,y';\emptyset) \wedge \mathcal{P}(y_2,y;A\setminus \cut{y_2}{y})$, and $\ass{y'}=\ass{y_2}=\ass{A\setminus \cut{y_2}{y}}$. 
a contradiction. So it must be the case that $\ass{y'}>\ass{y_2}$.
By \Trans , we have $\mathcal{P}(y_2,y';A)$.
But now $y'$ can assume the role of $y$ in~\eqref{eq:EP_strat_not_defined}, a contradiction to the minimality of $y$ w.r.t.\ $\prec$.  
\end{claimproof} 
 The next claim characterizes the equality of values for pairs of variables under $\ass{\cdot}$ in terms of properties of previously quantified variables, assuming that the EP has followed the strategy above.
In particular, we show that if $\ass{x}=\ass{z}$ if and only if there exists a variable $y\preceq \{x,z\}$ so that $\ass{x}=\ass{y}$ and $\ass{y}=\ass{z}$ are enforced by the identifications of values of universal variables with $y$ by the UP.
Recall the comparison relations $\preceq$ and $\preceq_{\forall}$ from Definition~\ref{def:terminology}.  
\begin{restatable}{claim}{CharactGame}  \label{claim:charact_game}
     Suppose that the EP follows  
the strategy above. Then, for all $x,z\in \vars$, we have $\ass{x}=\ass{z}$  if and only if there exist $x_1,x_2,z_1,z_2\in V$ and   $A_{x_2,x},A_{z_2,z}\subseteq \varsu$
such that 
\begin{enumerate} 
    \item \label{item:ugly_claim_1} $\{x_1,x_2\}\preceq x$ and $\{z_1,z_2\}\preceq z$
    \item \label{item:ugly_claim_2}  $x_2\prec A_{x_2,x}\preceq x$ and  $z_2 \prec A_{z_2,z} \preceq z$, 
    \item \label{item:ugly_claim_3}  $y\preceq_{\forall} \{x_1,x_2,z_1,z_2\}$ for some $y\in \{x_2,z_2\}$,
    \item \label{item:ugly_claim_4}  $\mathcal{P}(x_2,x;A_{x_2,x})\wedge \mathcal{P}(x,x_1;\emptyset)\wedge \mathcal{P}(z_2,z;A_{z_2,z})\wedge \mathcal{P}(z,z_1;\emptyset) $, 
    \item \label{item:ugly_claim_5}  $\ass{A_{x_2,x}}=\ass{x_1}=\ass{x_2} =\ass{z_1}=\ass{z_2}= \ass{A_{z_2,z}}$.  
\end{enumerate}
Whenever the right-hand side of the equivalence holds, we also have $\ass{x}=\ass{x_2}$ and $  \ass{z}=\ass{z_2}.$
\end{restatable}  
\begin{claimproof}[Proof idea]  ``$\Leftarrow$''
We show that $\ass{x}=\ass{x_2}$ and $\ass{z}=\ass{z_2}$.
If $x\equiv x_2$, then clearly $\ass{x}=\ass{x_2}$. So, w.l.o.g., $x_2\prec x$.  
If $x\in \varsu$, then \Simplify yields $\mathcal{P}(x_2,x;\emptyset)$ and hence \Refute produces $\bot$, a contradiction. 
So we must have $x\in \varse$. Then either $x_1\equiv  x$ or $x_1 \prec x$, and it follows from the strategy of the EP that $\ass{x}=\ass{x_2}$.
Analogously we obtain that
$\ass{z}=\ass{z_2}$.
The rest follows by the transitivity of the equality.

``$\Rightarrow$'' Whenever the  right-hand side of the equivalence in Claim~\ref{claim:charact_game} is satisfied, we  call   $(x,x_1,x_2;A_{x_2,x})$ and $(z,z_1,z_2;A_{z_2,z})$ \emph{witnessing quadruples} for $\ass{x}=\ass{z}$.
If $x\equiv z$, then the statement trivially follows using \Init ,  the witnessing quadruples are $(x,x,x;\emptyset)$ and $(z,z,z;\emptyset)$.
So, w.l.o.g., $z\prec x$.
If $x\in \varsu$, then the claim follows using \Init , the witnessing quadruples are again $(x,x,x;\emptyset)$ and $(z,z,z;\emptyset)$.  
So suppose that $x\in \varse$ and  that the claim holds for all pairs of variables preceding $x$.
Since $\ass{x}=\ass{z}$, by the strategy of the EP, there exist $x_1,x_2\prec x$ and $x_{2}\prec A \prec x$ such that $\mathcal{P}(x,x_1;\emptyset)\wedge \mathcal{P}(x_2,x;A)$ and   $\ass{z}=\ass{x_1}=\ass{x_2}=\ass{A}.$ 

Since $\ass{x_2}=\ass{z}$ and $x_2,z\prec x$, we can apply the induction hypothesis for the pair $x_2,z$ to obtain the witnessing quadruples $(x_2,x_{2_1},x_{2_2};A_{x_{2_2},x_2})$ and $(z,z_{1},z_{2};A_{z_{2},z})$.
By assumption, there exists $y\in \{z_2,x_{2_2}\}$ such that $y\preceq_{\forall} \{z_1,z_2,x_{2_1},x_{2_2}\}$.
Note that $\ass{y}=\ass{x_1}$.
Thus, we can apply the induction hypothesis for the pair $x_1,y$ to obtain the witnessing quadruples $(x_1,x_{1_1},x_{1_2};A_{x_{1_2},x_1})$ and $(y,y_{1},y_{2};A_{y_{2},y})$.  
By assumption, there exists $y'\in \{y_2,x_{1_2}\}$  such that $y'\preceq_{\forall} \{y_1,y_2,x_{1_1},x_{1_2}\}$.  
The two cases $y \equiv z_2$ and $y \equiv x_{2_2}$ are illustrated in Figure~\ref{figure:cases}.

\begin{figure}[h]
 \begin{center}   \includegraphics[width=0.49\linewidth]{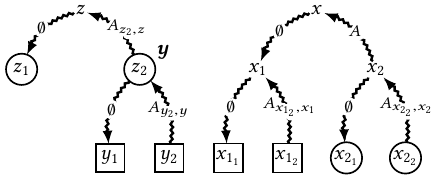}  \includegraphics[width=0.49\linewidth]{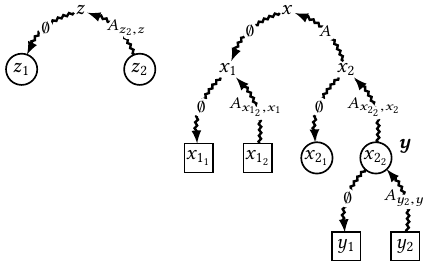}    
\end{center} 
   \caption{\label{figure:cases}Cases $y \equiv z_2$ and $y \equiv x_{2_2}$ in the proof of Claim~\ref{claim:charact_game}. The squiggly arrows represent inferences of $\mathcal{P}$.}
\end{figure}

  Our goal is to find witnesses $x'_1,x'_2,z'_1,z'_2$ for the main statement of the claim, i.e., the witnessing quadruples will be of the form $(x,x'_1,x'_2;A_{x'_2,x})$ and $(z,z'_1,z'_2;A_{z'_2,z})$. 
The idea is that we want to choose $x_1'$, $x_2'$, $z_1'$, $z_2'$ from the variables introduced above, which all evaluate to the value $\ass{x}=\ass{z}$ in $\ass{\cdot}$. To obtain the property in item~3, we want to choose variables that are small enough with respect to the order $\prec$, so that one of them can be compared to the others with respect to $\preceq_{\forall}$, assuming the properties of $y$ and $y'$ from above.

One suitable choice of witnesses is as follows. 
First, we choose $x_1'\coloneqq x_{1_1}$. As visible in Figure~\ref{figure:cases}, we can apply \Trans to $\mathcal{P}(x,x_1;\emptyset)\wedge\mathcal{P}(x_1,x_{1_1};\emptyset)$ to derive $\mathcal{P}(x,x_{1_1};\emptyset)$. 
Second, we choose $z'_1\coloneqq z_1$ if $y \not\equiv z_1$ and  $z'_1\coloneqq y_1$ otherwise. 
Note that we have $\mathcal{P}(z,z_1;\emptyset)$ by assumption and, if $y\equiv z_1$, then we can use  \Trans to derive $\mathcal{P}(z,y_1;\emptyset)$ from $\mathcal{P}(z,z_1;\emptyset)\wedge\mathcal{P}(y,y_1;\emptyset)$.  
Next, we choose $x'_2\coloneqq x_{2_2}$ if $y \not\equiv x_{2_2}$, and $x'_2\coloneqq y_2$ otherwise. 
A short argument shows that $x_{2_2}\preceq_{\forall} x_{2_1}$, which allows us to apply \AltTrans to $\mathcal{P}(x_{2_2},x_2;A_{x_{2_2},x_2})\wedge \mathcal{P}(x_{2},x_{2_1};\emptyset)\wedge \mathcal{P}(x_2,x;A)$ to obtain an expression of the form
$\mathcal{P}(x_{2_2},x;A_{x_{2_2},x})$.
If $y  \equiv x_{2_2}$, then it is necessary to apply \AltTrans a second time to obtain the expression of the form $\mathcal{P}(y_2,x;A_{y_2,x})$.
Finally, the choice for $z'_2$ that we need will be $z'_2 \coloneqq z_2$ if $y \not\equiv z_2$ or $z'_2 \coloneqq y_2$ otherwise. 

With the above witnessing quadruples, one can verify that items 1, 2, 4, and 5 will be satisfied.
Thanks to choosing ``small enough candidates'' with respect to $\prec$ for each of $x_1'$, $x_2'$, $z_1'$, $z_2'$, item 3 can be verified as well. A full proof of Claim~\ref{claim:charact_game} with a verification of these properties can be found in Appendix~\ref{appendix:proofs}. 
 \end{claimproof}
 
\begin{claim} \label{claim:winning_ep} The strategy of the EP is a winning strategy.
\end{claim}

\begin{claimproof}
Suppose, on the contrary, that this is not the case.
Then there has to be a violated constraint of the form $(x=z \Rightarrow x\geq w)$, i.e., $\ass{x}=\ass{z}<\ass{w}$.
Since $\ass{x}=\ass{z}$, by Claim~\ref{claim:charact_game}, there exist $x_1,x_2,z_1,z_2\in V$ and   $A_{x_2,x},A_{z_2,z}\subseteq \varsu$ such that 
\begin{itemize}
    \item $\{x_1,x_2\}\preceq x$ and $\{z_1,z_2\}\preceq z$
    \item $x_2\prec A_{x_2,x}\preceq x$ and  $z_2 \prec A_{z_2,z}\preceq z$, 
    \item $y\preceq_{\forall} \{x_1,x_2,z_1,z_2\}$ for some $y\in \{x_2,z_2\}$,
    \item $\mathcal{P}(x_2,x;A_{x_2,x})\wedge \mathcal{P}(x,x_1;\emptyset)\wedge \mathcal{P}(z_2,z;A_{z_2,z})\wedge \mathcal{P}(z,z_1;\emptyset) $, 
    \item $\ass{A_{x_2,x}}=\ass{x_1}=\ass{x_2} =\ass{z_1}=\ass{z_2}= \ass{A_{z_2,z}}$.
\end{itemize}
Moreover, 
we have $\ass{x}=\ass{y}=\ass{z}$.
Let  $A\coloneqq A_{x_2,x}\cup A_{z_2,z}\cup(\{x_1,x_2,z_1,z_2\}\setminus \{y\}).$ 
By a single application of \Progress , we get $\mathcal{P}(y,w;A)$.
Since $\ass{z}<\ass{w}$, clearly $w$ is different from all variables which share the value with $z$. 
We choose the smallest possible $w$ w.r.t.\ $\prec$ for which $\mathcal{P}(y,w;A)$ can be derived and such that $\ass{z}<\ass{w}$.
Since $\ass{w}\neq \ass{z}$, we have $w\notin A$.
Now we consider the following cases.

\case{1}{$w\prec y$}. By \Simplify , we have  $\mathcal{P}(y,w;\emptyset)$. 
 
\case{1.1}{$y\in \varsu$}. In this case $\bot$ can be derived using \Refute, a contradiction.

\case{1.2}{$y\in \varse$}. Then
 the EP  was supposed to set $\ass{y} \geq \ass{w}$, however, we have $ \ass{y} < \ass{w}$. Hence, the EP did not follow his strategy, a contradiction. 
 
 \case{2}{$y\prec w$}. 
 
\case{2.1}{$w\in \varsu$}. 
 By \Simplify, we get $\mathcal{P}(y,w;\emptyset)$. But then $\bot$ can be derived using \Refute, a contradiction.

\case{2.2}{$w\in \varse$}. Since $\ass{w}>\ass{y}$, by the strategy of the EP, there must exist $w'\prec w$ with $\ass{w'}\geq \ass{y}$ such that $\mathcal{P}(w,w';\emptyset)$.
 By \Simplify , we have $\mathcal{P}(y,w; A\setminus \cut{y}{w}).$
 If $\ass{w'}=\ass{y}$, then the EP did not follow his strategy because he played $\ass{w}>\ass{w'}$ while $y,w'\prec w$ and $y  \prec A\setminus \cut{y}{w} \prec w$,
$\ass{w'}=\ass{y}=\ass{A\setminus \cut{y}{w}}$,  and $\mathcal{P}(y,w;A\setminus \cut{y}{w})  \wedge   \mathcal{P}(w,w';\emptyset), 
$
a contradiction. So it must be the case that $\ass{w'}>\ass{y}$. 
 By an application of \Trans  to $\mathcal{P}(y,w;A)\wedge\mathcal{P}(w,w';\emptyset)$, we have $\mathcal{P}(y,w';A).$
But note that now $w'$ can assume the role of $w$, a contradiction to the minimality of $w$ w.r.t.\ $\prec$.
\end{claimproof}  
 This concludes the proof of Lemma~\ref{lemma:winning_EP}.\end{proof}

\begin{lemma}\label{lemma:algo-poly}
Algorithm~\ref{algo:dima} solves $\QCSP(\mathbb{Q};\Michal)$ in polynomial time.
\end{lemma}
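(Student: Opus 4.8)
The plan is to establish three facts about Algorithm~\ref{algo:dima}: its main loop is executed only polynomially often, each iteration runs in polynomial time, and it outputs \emph{true} exactly on the true instances; together with the trivial preprocessing that rewrites the input $\QCSP(\mathbb{Q};\Michal)$ sentence into the required $\{\geq,\neq\}$-form (each $\Michal(x,y,z)$ becomes $(x\neq y\vee x\geq z)$), these facts prove the lemma.

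First I would bound the number of iterations. The main loop continues only when $\phi$ changes, i.e.\ when a new conjunct $\big((\bigwedge_{v\in {\uparrow}_u\setminus(\{x,z\}\cup\cut{x}{z})}x=v)\Rightarrow x\geq z\big)$ is added for some triple $x,z,u\in\vars$. For fixed $x$ and $z$ this conjunct depends on $u$ only through the set ${\uparrow}_u$, and since ${\uparrow}_{u'}\subseteq{\uparrow}_u$ whenever $u\preceq u'$ (by transitivity of $\preceq$), the family $\{{\uparrow}_u \mid u\in\vars\}$ is a chain under inclusion and hence has at most $|\varsu|+1$ members. Thus at most $(|\varsu|+1)\cdot|\vars|^2=O(|\vars|^3)$ pairwise distinct conjuncts can ever be added; reading ``expand $\phi$ by the clause'' as adding the conjunct only if it is not already present, $\phi$ changes at most $O(|\vars|^3)$ times, the main loop runs $O(|\vars|^3)$ times, and $|\phi|$ stays polynomial in $|\Phi|$ throughout.

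Next I would bound the work per iteration. The inner loop ranges over $O(|\vars|^3)$ triples; for each it performs a syntactic containment check and one satisfiability test, and only the test is nontrivial. Here I would use the observation recorded after Definition~\ref{def:cut}: by Lemma~\ref{lemma:pp-def}, every clause occurring in $\phi$ — both the original $\Michal$-constraints and all derived guarded clauses — has a pp-definition in $(\mathbb{Q};\Michal)$ of length linear in its arity, so replacing each clause by such a definition turns the tested formula $\phi\wedge(\bigwedge_{v\in{\uparrow}_u\setminus\{x,z\}}x=v)\wedge(x<z)$ into an instance of $\CSP(\mathbb{Q};\Michal,<)$ of size polynomial in $|\Phi|$. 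Both $\Michal$ and $<$ are OH relations — the former by definition, the latter since $x<y$ is equivalent to $(y\geq x)\wedge(x\neq y)$ — so $(\mathbb{Q};\Michal,<)$ is an OH structure, whose CSP is solvable in polynomial time~\cite{bodirsky2010complexity,BoRyTCSPs}; thus each test, and hence each iteration, takes $\mathrm{poly}(|\Phi|)$ time, and multiplying by the $O(|\vars|^3)$ iterations gives a polynomial total running time.

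Finally I would settle correctness. If the algorithm rejects, then at that point $\phi$ has a conjunct $(x\geq z)$ or $(z\geq x)$ with $x\prec z$ and $z\in\varsu$; when this conjunct is already in the input the UP wins $\Phi$ immediately by setting $z$ above (respectively below) $\ass{x}$, and when it is derived, Lemma~\ref{lemma:algo_refutes_correctly} shows that $\Phi$ is true iff the sentence obtained by adjoining all derived conjuncts is true, which in turn is false by the same UP move — so $\Phi$ is false either way. Conversely, if the algorithm accepts but $\Phi$ were false, then by the contrapositive of Lemma~\ref{lemma:winning_EP} the proof system $\mathcal{P}$ derives $\bot$ from $\Phi$, and then, as observed below Lemma~\ref{lemma:algo_p_halt}, Algorithm~\ref{algo:dima} would reject — a contradiction; hence the algorithm accepts exactly the true instances. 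I expect the iteration bound to be the main obstacle: the rest is essentially an assembly of Lemmas~\ref{lemma:algo_refutes_correctly}, \ref{lemma:algo_p_halt} and \ref{lemma:winning_EP} with the polynomial-time solvability of $\CSP(\mathbb{Q};\Michal,<)$, whereas the iteration bound rests on the structural insight that the sets ${\uparrow}_u$ form a short chain.
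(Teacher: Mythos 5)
Your proposal is correct and follows essentially the same route as the paper: the $O(|V|^3)$ iteration bound, the reduction of each satisfiability test to $\CSP(\mathbb{Q};\Michal,<)$ via Lemma~\ref{lemma:pp-def}, and the correctness argument built on Lemmas~\ref{lemma:algo_refutes_correctly}, \ref{lemma:algo_p_halt}, and~\ref{lemma:winning_EP}. Your chain argument for the sets $\uparrow_u$ is a nice way to justify the bound that the paper states without elaboration, but it is the same bound and the same proof strategy.
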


\begin{proof} Observe that Algorithm~\ref{algo:dima} runs in polynomial time with respect to the length of $\Phi$.
    Indeed, it expands $\Phi$ by at most $V^3$-many constraints, all of which have pp-definitions
    in $(\mathbb{Q};\Michal,<)$ of linear length due to Lemma~\ref{lemma:pp-def}, and $\CSP(\mathbb{Q};\Michal,<)$ is solvable in polynomial time~\cite{BoRyTCSPs}.  
    Note that, if $\Phi$ contains a conjunct $(x \geq z)$ or $(z \geq x)$ such that $x \prec z$ and $z \in \varsu$, then $\Phi$ is false. 
    Therefore, by Lemma~\ref{lemma:algo_refutes_correctly}, $\Phi$ is false whenever the algorithm rejects. Suppose that the algorithm accepts an instance $\Phi$. By Lemma~\ref{lemma:algo_p_halt}, $\bot$ cannot be derived from $\Phi$ using the proof system and hence, by Lemma~\ref{lemma:winning_EP}, $\Phi$ is true. This completes the proof.
\end{proof}
  
 \begin{proof}[Proof of Theorem~\ref{thm_ptime}]  By Proposition~\ref{ppsynt},  every relation has a CNF-definition $\phi$ over the signature $\{\neq,\geq\}$ where each conjunct is of the form \eqref{eq:CNF} 
  for $k,\ell \geq 0$.
  By Lemma~\ref{lemma:ElimMin}, we can choose $\phi$ so that every conjunct of the form \eqref{eq:CNF} in $\phi$ satisfies $\ell\leq 1$. 
  It follows from Lemma~\ref{lemma:pp-def} by induction that every formula of the form \eqref{eq:CNF} with $\ell=1$ is pp-definable in $(\mathbb{Q};\Michal)$.
  The case with $\ell=0$ is qpp-definable in $(\mathbb{Q};\Michal)$ by universally quantifying over $z_1$ in the case $\ell=1$.
  By Proposition~\ref{prop:reductions}, $\QCSP(\struct{B})$ reduces in LOGSPACE to $\QCSP(\mathbb{Q};\Michal)$.
  By Lemma~\ref{lemma:algo-poly}, $\QCSP(\mathbb{Q};\Michal)$ is solvable in polynomial time.
  \end{proof}

\section{Identifying the hard cases\label{section:second_proof}}

This section is devoted to the proof of Theorem~\ref{thm_hardness}.
%

First, we use a syntactical argument to reduce the arity of the relations that need to considered to $4$.
%
%
\begin{restatable}{lemma}{ArityFour} 
  \label{lemma:arity_four} Let $\struct{B}$ be an OH structure that is not preserved by $\pp$. Then $\struct{B}$ pp-defines a relation of arity at most $4$ that is not preserved by $\pp$.
\end{restatable}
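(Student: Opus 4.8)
The plan is to leverage the syntactic characterization of $\pp$ from Proposition~\ref{ppsynt}. Since $\struct{B}$ is OH but not preserved by $\pp$, some relation $R\in\struct{B}$ is not preserved by $\pp$, hence by Proposition~\ref{ppsynt} every CNF-definition of $R$ over $\{\neq,\geq\}$ contains a clause that is \emph{not} of the form~\eqref{eq:CNF}, i.e. not a clause of the shape $(x\neq y_1\vee\cdots\vee x\neq y_k\vee x\geq z_1\vee\cdots\vee x\geq z_\ell)$ with a single ``pivot'' variable $x$ appearing on the left of each $\geq$ and as the inequated variable in each disequality. Fix a CNF-definition $\phi$ of $R$ in which some clause $C$ witnesses this failure. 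The goal is to ``cut down'' $\phi$ and $C$ to a relation of arity at most $4$ that still fails to be preserved by $\pp$, by substituting constants/identifying variables and then projecting.

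The key steps I would carry out are as follows. First, I would make precise what it means for a clause to violate the shape~\eqref{eq:CNF}: writing $C$ in the form $\bigvee_{i}(s_i\neq t_i)\vee\bigvee_{j}(p_j\geq q_j)$, the clause fails~\eqref{eq:CNF} exactly when there is no single variable $x$ such that each disequality is $x\neq(\cdot)$ and each order-disjunct is $x\geq(\cdot)$. I would analyse the few minimal ways this can happen: (a) there are two order-disjuncts $p_1\geq q_1$ and $p_2\geq q_2$ with $p_1\neq p_2$; (b) there is an order-disjunct $p\geq q$ and a disequality $s\neq t$ with $p\notin\{s\}$ (the pivot of the disequality differs from the pivot of the inequality) — more carefully, no common choice of pivot works; (c) combinations thereof. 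In each minimal case only a bounded number of variables (at most $4$) are ``responsible'' for the violation. I would then identify all remaining variables of $R$ with one of these designated variables, or substitute a generic constant for them, obtaining a pp-definable (indeed qpp- or even pp-definable by existential quantification of the substituted variables, or by the standard trick of intersecting with equalities) relation $R'$ of arity $\le 4$. The crux is to show $R'$ is still not preserved by $\pp$: since $\pp$-preservation is inherited under pp-definability (Proposition~\ref{prop:reductions}), if $R'$ were preserved by $\pp$ then so would be the relation obtained by re-expanding via equalities, and one traces back that $R$ would satisfy the syntactic condition of Proposition~\ref{ppsynt} after all — contradiction. Alternatively, and more robustly, I would exhibit an explicit $\pp$-violating pair of tuples for $R$ that is ``supported'' on the $\le 4$ responsible variables (constant elsewhere), so that its projection witnesses non-preservation of $R'$ directly.

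Concretely, the cleanest route is the second one: use Proposition~\ref{ppsynt} in its contrapositive algebraic form to get tuples $\bar{t}_1,\bar{t}_2\in R$ with $\pp(\bar t_1,\bar t_2)\notin R$. Because $R$ is OH hence preserved by $\elel$, and because $\pp$ factors through projections in a controlled way (the value of $\pp(a,b)$ depends only on the sign of $a$ and then on $a$ or $b$ alone), I can normalise $\bar t_1,\bar t_2$ by applying suitable automorphisms/endomorphisms of $(\mathbb Q;<)$ and then collapsing coordinates: identify all coordinates on which $(\bar t_1,\bar t_2)$ behaves ``the same way'' (same relative order, same sign pattern under $\pp$) with a single representative. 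This collapsing is a composition of variable identifications in $\phi$, so the resulting relation $R'$ is pp-definable in $\struct B$, and by choosing the identification to preserve the specific clause-violation in $C$ one keeps arity $\le 4$. Then $\pp$ still maps the (projected) witness tuples outside $R'$, so $R'$ is not preserved by $\pp$.

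The main obstacle I anticipate is the bookkeeping in the case analysis of \emph{how} a clause violates~\eqref{eq:CNF}, combined with ensuring the collapsing step does not accidentally make the clause trivially true or merge it with another clause in a way that restores $\pp$-preservation. In particular one must be careful that, after identification, the clause $C$ still witnesses the failure \emph{within $R'$} and is not subsumed — this may require keeping one or two ``spectator'' variables to block unwanted collapses, which is exactly why the bound is $4$ rather than $3$ (one pivot plus up to three further variables involved in the obstruction, e.g. two distinct left-hand sides of $\geq$-disjuncts plus their right-hand sides, suitably identified). I would handle this by fixing, for each minimal violation type, an explicit small gadget of tuples over $\mathbb Q$ of the appropriate arity and verifying non-preservation by hand, deferring the full verification to the appendix as the lemma is marked \texttt{restatable}.
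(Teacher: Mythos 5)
Your plan correctly starts from Proposition~\ref{ppsynt} and the observation that some clause of an OH definition of $R$ must witness the failure of $\pp$-preservation, but the reduction to arity $\le 4$ has a genuine gap that neither of your two proposed routes closes. The central missing idea in the paper's proof is the notion of \emph{connectedness} of a clause (two variables being connected if they are linked by a walk over disjuncts via shared variables), together with an induction on \emph{two} parameters: arity, and the total number of connected components over all clauses. That structure is not optional bookkeeping. A clause like $(x_1\neq y_1\vee x_2\neq y_2\vee\cdots\vee x_k\neq y_k\vee u\geq v)$ with pairwise disjoint pairs can be the unique witness that $R$ is not $\pp$-preserved, and here \emph{all} $2k+2$ variables are ``responsible'' in the sense that no sub-quadruple of them already carries the obstruction; there is no $4$-variable support for a violating pair $(\bar t_1,\bar t_2)$ in general, so your ``exhibit a $\pp$-violating pair supported on $\le 4$ coordinates, constant elsewhere'' route fails outright. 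The paper instead either merges two connected components into one (Case~2.1, which strictly decreases the component count without changing the arity, and is valid only because certain tuples are absent from $R$) or uses a witnessing tuple to cut off one component by adding equalities and projecting (Cases~2.2.$*$), and it needs the case split on whether such tuples exist.

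Your other route — ``trace back that $R$ would satisfy the syntactic condition of Proposition~\ref{ppsynt} after all'' — also does not go through as stated: identifying variables and projecting does not invert cleanly, and a clause that violates the shape~\eqref{eq:CNF} syntactically need not define a relation that fails to be preserved by $\pp$ (the semantic and syntactic notions differ; what one actually needs, and what the paper uses, is the existence of $\bar t_1,\bar t_2\in R$ with $\pp(\bar t_1,\bar t_2)$ violating that \emph{specific} clause). Finally, even in the connected case the paper needs Lemma~\ref{lemma:ElimMin} to shrink the $\geq$-part to a single disjunct and then a further case split (on whether a tuple satisfying a carefully chosen condition~\eqref{eq:special_condition} exists) to decide between a direct arity reduction and a component merge; your collapsing-by-order-type idea has no analogue of this and gives no bound on the number of distinct ``behaviors'' across coordinates. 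In short: the high-level target is right, but the one-shot collapse cannot work, and the two-parameter induction on arity and connected components is the essential mechanism you are missing.
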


Second, we perform an ``educated brute-force'' search through all relations of arity at most $4$ that are not preserved by $\pp$ in order to classify them.
Recall the relations $\Dima$ and  $\Zaneta$ defined in the introduction.
%
%
\begin{restatable}{lemma}{NotPP} 
\label{lem:notpp}
     Let $\struct{B}$ be an OH structure that is not preserved by $\pp$. Then $\struct{B}$ pp-defines $\Dima$ 
     or $(\struct{B};\Michal)$ qpp-defines  $\Zaneta$.
\end{restatable}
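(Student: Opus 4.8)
The plan is to combine the arity reduction from Lemma~\ref{lemma:arity_four} with a finite case analysis. By Lemma~\ref{lemma:arity_four} it suffices to treat the case where $\struct{B}$ itself is a single relation $R$ of arity at most $4$ that is OH but not preserved by $\pp$. Using Proposition~\ref{ppsynt}, failure of preservation by $\pp$ means that no CNF-definition of $R$ over $\{\neq,\geq\}$ has all its clauses of the shape \eqref{eq:CNF}; equivalently, by Lemma~\ref{lemma:ElimMin} and the OH assumption, some clause in a (CNF, $\{\leq,\neq\}$) definition of $R$ must contain two ``order-type'' disjuncts with distinct left-hand sides, i.e.\ a subformula of the form $(x_1 \geq y_1 \vee x_2 \geq y_2)$ where $x_1 \not\equiv x_2$. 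I would first make this precise: by applying $\elel$, $\dual\elel$, and the elimination lemma, one may assume the definition of $R$ is in a tightly normalized form, so that a ``bad'' clause directly exhibits a pattern witnessing $\pp$-violation.

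Next I would extract a small witness of $\pp$-violation: there exist tuples $\bar t_1, \bar t_2 \in R$ with $\pp(\bar t_1, \bar t_2) \notin R$. Since $R$ has arity $\le 4$, there are only finitely many ``shapes'' of such pairs of tuples up to the order-automorphisms of $(\mathbb{Q};<)$ and the coordinate permutations/identifications of $R$. The core of the proof is then to go through these finitely many shapes and, for each, write down an explicit pp-formula (possibly using $\Michal$ and existential quantification, or only equalities/order) that defines either $\Dima$ or $\Zaneta$. Concretely, the $\pp$-violating pair typically forces a clause of $R$ to behave, after fixing some coordinates by existential quantification or identifying coordinates, like the ``implication'' relation $x=y \Rightarrow x=z$ (giving $\Dima$) or like the relation $(x_1 \ne y_1 \vee x_2 \ne y_2) \wedge y_1 < y_2$ (giving $\Zaneta$); the presence of the extra relation $\Michal$ in the qpp-case is exactly what lets one turn a genuine $\geq$ into the strict $<$ appearing in $\Zaneta$, since $\Michal(h,h,x)$ pins $h \le x$ and combining such constraints with a derived order disjunct yields the strict inequality.

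The main obstacle I expect is the bookkeeping of the brute-force classification: the normalized OH definitions of arity-$\le 4$ relations not preserved by $\pp$ form a sizeable but finite list, and for each one must verify that the claimed pp-/qpp-definition is correct in both directions (the formula implies membership in the target relation, and conversely). Organizing the case split so that it is genuinely finite and non-redundant — grouping clauses by the number of $\neq$-disjuncts and by how the ``two-sided'' order disjunct sits relative to the other constraints — is where the real work lies, and where I would lean heavily on Lemma~\ref{lemma:ElimMin} to keep the number of cases manageable (reducing $\ell \ge 2$ order-disjuncts down to $\ell = 1$ whenever the rest of the clause is absorbed). A secondary technical point is making sure the $\Zaneta$-definition is genuinely \emph{qpp} and not merely pp: the universal quantifier is needed precisely to discard the spurious witnesses that a pp-definition would admit, mirroring the passage from the $\ell=1$ to the $\ell=0$ case in the proof of Theorem~\ref{thm_ptime}.
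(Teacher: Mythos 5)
Your top-level plan matches the paper's actual proof: reduce to a single relation $R$ of arity $\le 4$ via Lemma~\ref{lemma:arity_four}, then run a finite case analysis. But as written, the proposal skips the key organizational idea that makes the case analysis feasible. The paper does \emph{not} aim to write down a pp- or qpp-definition of $\Dima$ or $\Zaneta$ directly in each branch of the case split. Instead it introduces a catalogue of intermediate temporal relations (M-relations, dual M-relations, strict M-relations, separated M-relations, separated strict M-relations, separated disjunctions of disequalities, recorded in Table~\ref{table:catalogue}), and first proves a chain of short (q)pp-reductions among them (Claim~\ref{claim:short_tool}): a dual or separated M-relation with $\neq$ gives a (dual/separated) strict M-relation, a separated strict M-relation gives a separated disjunction of disequalities, a dual strict M-relation with $\Michal$ gives a separated disjunction of disequalities, and that together with $\Michal$ qpp-defines $\Zaneta$. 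Only then does the case analysis (Claim~\ref{claim:long_boring_claim}) run, and in each case its job is merely to pp-define one catalogue relation or $\Dima$, not $\Zaneta$. Without this intermediate layer, the branches do not cleanly produce $\Zaneta$: e.g., Case~1 naturally yields $\dualGMichal$ or $\Dima$, and Case~7 yields a separated (strict) M-relation or a dual M-relation — none of these is $\Zaneta$ in one step.

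A second structural point you don't anticipate: the paper's case split is not over ``shapes of $\pp$-violating pairs up to automorphism'' but over the \emph{ugly clauses} that survive an iterated, carefully ordered syntactic pruning of the set of all OH clauses entailed by $R$ (Rules I--VI in Table~\ref{table:syntactic_pruning_rules}). This normalization is what keeps the analysis both finite and non-redundant, and several branches rely crucially on knowing which clauses have been pruned away and why. There is also a small imprecision in your characterization of $\pp$-failure: an OH clause has at most one order disjunct, so the obstruction is never two order disjuncts with distinct left-hand sides; it is the failure of all disjuncts in a clause to share the same left variable, or more generally that some ugly clause survives pruning. On the positive side, your explanation of why the definition of $\Zaneta$ is genuinely qpp rather than pp is correct: the universal quantifier enters exactly because $\Michal$ only qpp-defines $\neq$ and $<$, as in the passage from $\ell = 1$ to $\ell = 0$ in the proof of Theorem~\ref{thm_ptime}.
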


Third, we show coNP-hardness of the QCSP for said relations combined with  $\Michal$.
\begin{lemma} \label{lemma:hardness} Let $\struct{B}$ be an OH structure that is not preserved by $\pp$ and pp-defines $\Michal$. Then $\QCSP(\struct{B})$ is coNP-hard. 
\end{lemma}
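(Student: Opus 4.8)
The plan is to reduce from a known coNP-hard problem by adapting the constraint-path machinery that appears in the PSPACE-hardness proof of $\QCSP(\mathbb{Q};\Dima)$ from~\cite{zhuk2023complete}. By Lemma~\ref{lem:notpp}, since $\struct{B}$ is not preserved by $\pp$ and already pp-defines $\Michal$, we are in one of two situations: either $\struct{B}$ pp-defines $\Dima$, or $(\struct{B};\Michal)$ qpp-defines $\Zaneta$. In the first case, the hardness essentially follows from~\cite{zhuk2023complete} directly, but only a coNP-hardness conclusion is claimed here, so it suffices to extract the appropriate ``half'' of that reduction (the part witnessing a coNP lower bound rather than the full alternating construction). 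In the second case, I would work with $\Michal$ together with $\Zaneta$; the key observation, hinted at in the introduction, is that a constraint $\Dima(x,y,z)$ can be simulated by $\Michal(x,y,z)\wedge \Michal(z,z,x)$, since $x=y$ then forces both $x\geq z$ and $z\geq x$, hence $x=z$. The catch is that this simulation interacts badly with the transitivity of $\leq$: unlike $\Dima$, the relation $\Michal$ leaks order information even when the triggering equality does not hold, and chained $\Michal$-constraints can propagate inequalities the original $\Dima$-based gadget did not. This is where $\Zaneta$ is used — its disjunct $(y_1<y_2)$ lets us insert strict-order ``separators'' that block unwanted transitive propagation while the disequality disjunct $(x_1\neq y_1\vee x_2\neq y_2)$ preserves the intended logical behaviour of the gadget.

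Concretely, the steps I would carry out are as follows. First, recall (or re-derive in a self-contained lemma) the structure of the constraint paths from~\cite{zhuk2023complete}: a path is a sequence of variables linked by $\Dima$-type constraints with universally quantified ``control'' variables interleaved, designed so that the universal player can propagate a chosen value along the path unless the existential player has correctly guessed/verified some combinatorial object. Example~\ref{example:five} in the present paper already exhibits exactly such a path and shows how the proof system derives $\bot$ along it — so these paths are genuinely the hard core. Second, I would fix the source problem: a natural candidate is a coNP-complete problem such as the complement of a SAT-type or NAE-type problem, encoded so that ``no satisfying assignment exists'' translates into ``the universal player wins'', i.e., can run the propagation along every path. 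Third, build the QCSP instance: for each clause/constraint of the source instance create a gadget using $\Michal$ (and, in the second case, $\Zaneta$) that is logically equivalent to the $\Dima$-gadget of~\cite{zhuk2023complete} \emph{up to the extra order leakage}, and then neutralize the leakage using $\Zaneta$-separators placed between consecutive gadgets. Fourth, prove correctness in both directions: if the source instance is a yes-instance of the coNP-hard problem (no solution), exhibit a winning strategy for the universal player by value-propagation along the paths, exactly mimicking the strategy in Example~\ref{example:five}; conversely, if there is a solution, the existential player can respond, using the disequality disjuncts of the $\Michal$- and $\Zaneta$-constraints to absorb the universal player's moves. Since all the relations involved are pp- or qpp-definable in $\struct{B}$ together with $\Michal$ (which $\struct{B}$ pp-defines), Proposition~\ref{prop:reductions} gives a LOGSPACE reduction to $\QCSP(\struct{B})$, so $\QCSP(\struct{B})$ is coNP-hard.

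\textbf{Main obstacle.}
The hard part will be the second case, $\Michal$ with $\Zaneta$ but without $\Dima$. There, replacing $\Dima(x,y,z)$ by $\Michal(x,y,z)\wedge\Michal(z,z,x)$ is only \emph{partially} faithful: $\Michal(z,z,x)$ unconditionally asserts $z\geq x$, which is strictly more than what $\Dima$ says, and these spurious inequalities can chain through a path and either hand the existential player a contradiction the $\Dima$-version never had, or conversely let the universal player force something illegitimately. I expect the delicate part of the argument to be showing that the $\Zaneta$-separators can be inserted densely enough to break exactly the unwanted transitive chains while leaving intact the genuine propagation that drives the hardness — and, simultaneously, that this is still achievable with a \emph{qpp}-definition of $\Zaneta$ rather than a plain pp-definition (the universal quantifiers in the qpp-definition of $\Zaneta$ must be shown not to give the universal player of the outer QCSP any additional power that would collapse the reduction). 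Getting the quantifier prefix right — the interleaving of the existential guess variables, the universal control variables of the paths, and the auxiliary (possibly universal) variables coming from the $\Zaneta$ and $\Michal$ definitions — and verifying that the universal player's propagation strategy survives all of this, is where the real work lies; the text's remark that ``certain issues arise due to the transitivity of $\leq$'' and that $\Zaneta$ resolves them ``partially (but not entirely)'' is a fair warning that the bookkeeping here is intricate and that the final bound is coNP-hardness rather than something stronger precisely because of residual leakage that cannot be fully removed.
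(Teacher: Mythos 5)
Your high-level decomposition is right: invoke Lemma~\ref{lem:notpp} to split into the case where $\struct{B}$ pp-defines $\Dima$ (where~\cite{zhuk2023complete} gives hardness directly) and the case where $(\struct{B};\Michal)$ qpp-defines $\Zaneta$, and in the second case simulate $\Dima(x,y,z)$ by $\Michal(x,y,z)\wedge\Michal(z,z,x)$ while managing the spurious $z\geq x$ leakage. This matches the paper. However, there is a genuine error in the direction of your reduction. You write that ``no satisfying assignment exists'' should translate into ``the universal player wins,'' and that when there is a solution the existential player can respond. That is exactly backwards: it would give a reduction from SAT to the QCSP, i.e.\ NP-hardness, not the claimed coNP-hardness. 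In the paper's reduction from the complement of $3$-SAT, unsatisfiability of $\psi$ makes the \emph{existential} player win (so $\Phi$ is true), and a satisfying assignment hands the \emph{universal} player a winning strategy (by encoding the assignment along the chains, forcing both $v=f$ and $u=t$ and thereby violating the $\Zaneta$-constraint). Only with this orientation does ``UNSAT maps to true'' hold, which is what coNP-hardness requires.

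Beyond the orientation, your plan leaves the gadget itself to be imported from the PSPACE-hardness constraint paths (as in Example~\ref{example:five}), with $\Zaneta$-``separators'' inserted densely between consecutive gadgets. The paper's construction is considerably more specific and differently shaped: it is a fixed two-chain gadget (Figure~\ref{fig:dima_figure_Ci}), with a lower chain indexed by literals that, when the universal player plays a well-formed truth-assignment encoding, forces $v=f$, and an upper chain indexed by clauses that, when that assignment satisfies every clause, forces $u=t$; a \emph{single} $\Zaneta(v,f,u,t)$ at the end supplies both the strict separation $f<t$ (so the two chains live in disjoint order ranges, keeping the transitivity leakage harmless) and the output disjunction $(v\neq f\vee u\neq t)$ that makes the two forced equalities contradictory. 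There is no interleaving of $\Zaneta$-constraints between gadgets. So while your instinct that $\Zaneta$ mitigates the $\leq$-transitivity problem is correct, the mechanism you propose (separators between consecutive $\Michal$-pairs) is not what is used and, together with the inverted winning condition, the proof plan as stated would not establish the lemma.
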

The proof of Lemma~\ref{lemma:hardness} below relies almost entirely on constraint paths built using $\Michal$.
In Figure~\ref{fig:dima_figure_Ci}, edges relate to constraints over $\Michal$, e.g. the two leftmost arrows in the lower chain represent $\Michal(f,y_{1}^i, z)\wedge \Michal(z,z,f)$ for $i \in \{1,2\}$ and $z$ corresponds to an unlabelled vertex.
These constraint paths are used to generate exponentially many incomparable expressions within the proof system $\mathcal{P}$, but $\Michal$ itself has no mechanism for turning them into a working gadget.
This is why such constraint paths can be handled by Algorithm~\ref{algo:dima}.
The situation changes already when we add a single constraint associated to the relation $\Zaneta$.  
\begin{figure}[ht]
 \begin{center}   \includegraphics[width=0.55\linewidth]{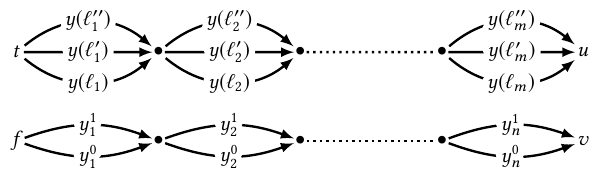} 
\end{center}     \caption{A gadget for the proof of Lemma~\ref{lemma:hardness}.}
    \label{fig:dima_figure_Ci}
\end{figure}
\begin{proof}[Proof of Lemma~\ref{lemma:hardness}]
In the case where $\struct{B}$ pp-defines $\Dima$, we have that
 $\QCSP(\struct{B})$ is PSPACE-hard by Corollary~6 in \cite{zhuk2023complete} and Proposition~\ref{prop:reductions}.
So suppose that $\struct{B}$ does not pp-define $\Dima$.
By Lemma~\ref{lem:notpp}, we have that $\struct{B}$ qpp-defines $\Zaneta$.

We reduce from the complement of the satisfiability problem for propositional $3$-CNF.
Consider an arbitrary propositional $3$-CNF formula $\psi$, i.e., a conjunction of clauses of the form $\ell_i\vee {\ell}'_i\vee {\ell}''_i$ for $i\in [m]$, where $\ell_i,{\ell}'_i,{\ell}''_i$ are potentially negated propositional variables from $\{x_1,\dots, x_n\}$.
We set $\Phi\coloneqq  \exists t\exists f \forall y^0_1 \forall y^1_1 \dots \forall y^0_n \forall y^1_n \exists \dots \exists u\exists v \ \phi \wedge \Zaneta(v,f,u,t),$
where $\exists \dots$ are additional unlabelled existentially quantified variables, and $\phi$ is defined as in Figure~\ref{fig:dima_figure_Ci}.
Here $y(x_i)\coloneqq y_i^1$, $y(\neg x_i)\coloneqq y_i^0$, a directed edge from $x$ to $z$ labeled with $y$ stands for $\Michal(x,y,z)\wedge \Michal(z,z,x)$,\footnote{Note the difference from the previous interpretation of labeled directed edges, e.g., in Examples~\ref{example:unsat2} and~\ref{example:five}. The current interpretation entails $D(x,y,z)$.}   and unlabelled dots correspond to unlabelled existential variables.

 ``$\Rightarrow$''  Suppose that $\psi$ is not satisfiable. We show that the EP has a winning strategy on $\Phi$. First, the EP chooses $\ass{f}<\ass{t}$. 
If there exists $i \in [n]$, such that the UP chose $\ass{y^0_i}\neq \ass{f}$ and $\ass{y^1_i} \neq \ass{f}$, then the EP chooses the values for the remaining existential variables as follows: equal $\ass{f}$ if they appear in the lower chain in Figure~\ref{fig:dima_figure_Ci} before $y^0_i$ and $y^1_i$, and equal $\ass{t}$ otherwise. Since $\ass{f}<\ass{t}$, this choice satisfies $\phi \wedge (v \neq f)$. We may therefore assume that $\ass{f}\in \{ \ass{y^0_i},\ass{y^1_i}\}$ for every $i\in [n]$.

We claim that there exists $j \in [m]$ such that 
$ \ass{t} \notin \{\ass{y(\ell_j)}, \ass{y(\ell_j')}, \ass{y(\ell_j'')}\}. $
Suppose, on the contrary, that this is not the case. 
Let $\ass{\cdot}'$ be any map from $\{x_1, \dots, x_n\}$ to $\{0,1\}$ such that, for every $i\in [n]$,
$ \ass{x_i}' = 0 $ if $\ass{y_i^0}=t$ and $\ass{x_i}' = 1$ if $\ass{y_i^1}=t$. Recall that $\ass{f}\in \{ \ass{y^0_i},\ass{y^1_i}\}$ for every $i\in [n]$ and thus $\ass{\cdot}'$ is well-defined. Observe that $\ass{\cdot}'$ is a satisfying assignment to $\psi$, contradicting our assumption. Hence the claim holds.

 The EP can choose the values for the remaining existential variables as follows: equal $\ass{t}$ if they appear in the upper chain in Figure~\ref{fig:dima_figure_Ci} before the $j$-th column, equal to an arbitrary number $q>\ass{t}$ if they appear in the upper chain after the $j$-th column, and equal $\ass{f}$ otherwise. Such assignment satisfies $\phi \wedge (t \neq u)$.

``$\Leftarrow$'' Suppose that there exists a satisfying assignment $\ass{\cdot}'$ to $\psi$. We show that then the UP has a winning strategy on $\Phi$. If the EP chooses $\ass{f} \geq \ass{t}$, the UP wins on $\Phi$, suppose therefore that $\ass{f}<\ass{t}$. If $\ass{x_i}'=0$, the UP plays $\ass{y_i^0}=t$ and $\ass{y_i^1}=f$, and if $\ass{x_i}'=1$, the UP plays $\ass{y_i^0}=f$ and $\ass{y_i^1}=t$. It follows from the lower chain in Figure~\ref{fig:dima_figure_Ci} that the EP loses unless $\ass{v}=\ass{f}$. Moreover, since $\ass{\cdot}'$ is a satisfying assignment to $\psi$, it follows from the upper chain that the EP loses unless $\ass{u}=\ass{t}$. But then $\ass{\cdot}$ violates $(v \neq f \vee u \neq t)$ and the UP wins again.
\end{proof}
 \begin{proof}[Proof of Theorem~\ref{thm_hardness}]  
Let $\struct{B}$ be a structure which is not GOH, and preserved by neither $\pp$ nor $\dual\pp$.
 Then it follows from  Theorem~\ref{thm:noGOH} that $\QCSP(\struct{B})$ is coNP-hard, in which case we are done, or $\struct{B}$ pp-defines $\Michal$ or $\DMichal$.
 If $\struct{B}$ pp-defines $\Michal$, then $\QCSP(\struct{B})$ is coNP-hard by Lemma~\ref{lemma:hardness} because $\struct{B}$ is not preserved by $\pp$.
 Otherwise $\struct{B}$ pp-defines $\DMichal$. Then we reach the same conclusion as in the previous case using the dual version of each result in Section~\ref{section:second_proof}.
 These can be obtained simply by reversing the order in each individual statement.  
  \end{proof}

\section{Open questions}

 For quantified OH constraints, we leave the following question open:


\medskip\noindent \emph{Question 1:} Do OH QCSPs exhibit a dichotomy between coNP and PSPACE-hardness?

\medskip We also ask the following questions regarding open cases 
outside of OH:

\medskip\noindent  \emph{Question 2:} Is $\QCSP(\struct{B})$ in P whenever $\struct{B}$ is a temporal structure preserved by $\mi$~\cite{bodirsky2010complexity}?  It is enough to consider $\QCSP(\mathbb{Q};x_1\neq x_2\vee x_1\geq x_3 \vee x_1 > x_4)$.

\medskip\noindent \emph{Question 3:} Is $\QCSP(\struct{B})$ in NP whenever $\struct{B}$ is a temporal structure preserved by $\pp$? It is enough to consider  $\QCSP(\mathbb{Q};x_1\neq x_2\vee x_1\geq x_3 \vee x_1 \geq x_4)$.



\bibliography{local}

\def\cprime{$'$} \def\cprime{$'$}
\begin{thebibliography}{10}

\bibitem{abiteboul1995foundations}
Serge Abiteboul, Richard Hull, and Victor Vianu.
\newblock {\em Foundations of databases}, volume~8.
\newblock Addison-Wesley Reading, Boston, 1995.

\bibitem{Book}
Manuel Bodirsky.
\newblock {\em Complexity of Infinite-Domain Constraint Satisfaction}.
\newblock Cambridge University Press, Cambridge, 2021.

\bibitem{qecsps}
Manuel Bodirsky and Hubie Chen.
\newblock Quantified equality constraints.
\newblock {\em SIAM Journal on Computing}, 39(8):3682--3699, 2010.
\newblock A preliminary version of the paper appeared in the proceedings of LICS'07.

\bibitem{ToTheMax}
Manuel Bodirsky, Hubie Chen, and Micha{\l} Wrona.
\newblock Tractability of quantified temporal constraints to the max.
\newblock {\em International Journal of Algebra and Computation}, 24(8):1141--1156, 2014.

\bibitem{bodirsky2013datalog}
Manuel Bodirsky and V{\'\i}ctor Dalmau.
\newblock Datalog and constraint satisfaction with infinite templates.
\newblock {\em Journal of Computer and System Sciences}, 79(1):79--100, 2013.

\bibitem{bodirsky2008complexity}
Manuel Bodirsky and Jan K{\'a}ra.
\newblock The complexity of equality constraint languages.
\newblock {\em Theory of Computing Systems}, 43:136--158, 2008.

\bibitem{bodirsky2010complexity}
Manuel Bodirsky and Jan K{\'a}ra.
\newblock The complexity of temporal constraint satisfaction problems.
\newblock {\em Journal of the ACM (JACM)}, 57(2):1--41, 2010.

\bibitem{ll}
Manuel Bodirsky and Jan K{\'a}ra.
\newblock A fast algorithm and datalog inexpressibility for temporal reasoning.
\newblock {\em ACM Transactions on Computational Logic (TOCL)}, 11(3):1--21, 2010.

\bibitem{BoRyTCSPs}
Manuel Bodirsky and Jakub Rydval.
\newblock On the descriptive complexity of temporal constraint satisfaction problems.
\newblock {\em J. ACM}, 70(1), 2022.

\bibitem{ceri2012logic}
Stefano Ceri, Georg Gottlob, and Letizia Tanca.
\newblock {\em Logic programming and databases}.
\newblock Springer, Berlin, Heidelberg, 2012.

\bibitem{charatonik2008quantified}
Witold Charatonik and Micha{\l} Wrona.
\newblock Quantified {P}ositive {T}emporal {C}onstraints.
\newblock In {\em International Workshop on Computer Science Logic}, pages 94--108, Berlin, Heidelberg, 2008. Springer.

\bibitem{charatonik2008tractable}
Witold Charatonik and Micha{\l} Wrona.
\newblock Tractable {Q}uantified {C}onstraint {S}atisfaction {P}roblems over {P}ositive {T}emporal {T}emplates.
\newblock In {\em International Conference on Logic for Programming Artificial Intelligence and Reasoning}, pages 543--557, Berlin, Heidelberg, 2008. Springer.

\bibitem{chen2011quantified}
Hubie Chen.
\newblock Quantified {C}onstraint {S}atisfaction and the {P}olynomially {G}enerated {P}owers {P}roperty.
\newblock {\em Algebra universalis}, 65(3):213--241, 2011.

\bibitem{chen2012guarded}
Hubie Chen and Michal Wrona.
\newblock Guarded ord-horn: A tractable fragment of quantified constraint satisfaction.
\newblock In {\em 2012 19th International Symposium on Temporal Representation and Reasoning}, pages 99--106, Leicester, UK, 2012. IEEE.

\bibitem{Hodges}
Wilfrid Hodges.
\newblock {\em A shorter model theory}.
\newblock Cambridge University Press, Cambridge, 1997.

\bibitem{KrokhinJeavonsJonsson}
Andrei~A. Krokhin, Peter Jeavons, and Peter Jonsson.
\newblock Reasoning about temporal relations: The tractable subalgebras of {A}llen's interval algebra.
\newblock {\em Journal of the ACM}, 50(5):591--640, 2003.

\bibitem{nebel1995reasoning}
Bernhard Nebel and Hans-J\"urgen B\"urckert.
\newblock Reasoning about temporal relations: a maximal tractable subclass of {A}llen's interval algebra.
\newblock {\em Journal of the ACM (JACM)}, 42(1):43--66, 1995.

\bibitem{WronaMFCS14}
Micha{\l} Wrona.
\newblock Tractability frontier for dually-closed ord-horn quantified constraint satisfaction problems.
\newblock In {\em International Symposium on Mathematical Foundations of Computer Science}, pages 535--546, Berlin, Heidelberg, 2014. Springer.

\bibitem{zhuk2020proof}
Dmitriy Zhuk.
\newblock A proof of the {CSP} dichotomy conjecture.
\newblock {\em Journal of the ACM (JACM)}, 67(5):1--78, 2020.

\bibitem{ZhukExample}
Dmitriy Zhuk.
\newblock Personal communication, 2022.

\bibitem{zhuk2022qcsp}
Dmitriy Zhuk and Barnaby Martin.
\newblock {QCSP} {M}onsters and the {D}emise of the {C}hen {C}onjecture.
\newblock {\em Journal of the ACM}, 69(5):1--44, 2022.

\bibitem{zhuk2023complete}
Dmitriy Zhuk, Barnaby Martin, and Micha{\l} Wrona.
\newblock The complete classification for quantified equality constraints.
\newblock In {\em Proceedings of the 2023 Annual ACM-SIAM Symposium on Discrete Algorithms (SODA)}, pages 2746--2760, Florence, Italy, 2023. SIAM.

\end{thebibliography}

\appendix

\section{Omitted proofs\label{appendix:proofs}}

We use the bar notation for tuples; for a tuple $\bar{t}$ indexed by a set $I$, the value of $\bar{t}$ at the position $i\in I$ is denoted by  $\bar{t}\of{i}$. 
A tuple is called \emph{injective} if all of its entries are pairwise distinct.  
For a function $f\colon A^n \rightarrow B$ ($n\geq 1$) and $k$-tuples $\bar{t}_1,\ldots,\bar{t}_n \in A^{k}$,   
we use $f(\bar{t}_1,\ldots,\bar{t}_n)$ as a shortcut for the $k$-tuple 
$(f(\bar{t}_1\of{1},\ldots\bar{t}_n\of{1}),\dots, f(\bar{t}_1\of{k},\ldots,\bar{t}_n\of{k}) ).$

 The \emph{projection} of a relation $R$ of arity $n$ to a particular set of entries $I\subseteq [n]$ is the $|I|$-ary relation defined by the formula $\exists_{i\in [n]\setminus I}x_i \ R(x_1,\dots,x_{n})$.
When we work with tuples $\bar{t}$ in a relation defined by a formula $\phi(x_1,\dots,x_n)$,  we sometimes refer to the entries of $\bar{t}$ using the free variables of $\phi$, and write $\bar{t}\of{x_i}$ instead of $\bar{t}\of{i}$.
   Sometimes we go even one step further and  say that $\bar{t}$ satisfies some formula $\psi$ whose variables are among $\{x_1,\dots, x_n\}$, hereby implicitly referring to the entries of $\bar{t}$ and the underlying structure.
   We only do so if it improves the readability of the text.

For variables $x_1,\dots, x_n$, let $\NAE(x_1,\dots,x_n)$ be a shortcut for the formula
$  \big(\bigvee\nolimits_{i\neq j} x_i\neq x_j\big)$ (``not all equal''). 
Note that the relation defined by $\NAE(x_1,\dots,x_n)$ is preserved by $\pp$, because it can be equivalently defined by $\big(\bigvee\nolimits_{i>1} x_1\neq x_i\big)$, which is of the form \eqref{eq:CNF}.
Also note that the formula $(x_1\geq x_2\vee \dots \vee x_1\geq x_n)$ can be written as $\big(x_1\geq \min(x_i\mid i>1)\big)$, and its negation $(x_1< x_2\wedge \dots \wedge x_1< x_n)$ can be written as $\big(x_1< \min(x_i\mid i>1)\big)$.
A quantifier-free formula $\phi$ in CNF is \emph{reduced} if it is not possible to remove any disjunct from a clause of $\phi$ so that the resulting formula still defines $R$.
E.g., $\big(\bigvee\nolimits_{i>1} x_1\neq x_i\big)$ is reduced, while $  \big(\bigvee\nolimits_{i\neq j} x_i\neq x_j\big)$ is not reduced for $n\geq 3$.

An \emph{automorphism} is a bijective endomorphism whose inverse is also an endomorphism.
In analogy to Proposition~\ref{prop:reductions}, we have the following.
\begin{proposition}[\cite{Hodges}]  \label{prop:InvAut}  Let $\struct{A}$ be a structure and $R\subseteq A^k$ for some $k\in \mathbb{N}$. If $R$ has a first-order definition in $\struct{A}$, then it is preserved by all automorphisms of $\struct{A}$.  
\end{proposition}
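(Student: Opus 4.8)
The final statement is Proposition~\ref{prop:InvAut}, which asserts that a relation with a first-order definition in a structure is preserved by all automorphisms of that structure. I would prove this by a routine induction on the structure of the defining first-order formula, so in fact I would prove the more general statement about formulas: if $\pi$ is an automorphism of $\struct{A}$ and $\phi(x_1,\dots,x_k)$ is any first-order $\tau$-formula, then for all $a_1,\dots,a_k\in A$ we have $\struct{A}\models \phi(a_1,\dots,a_k)$ if and only if $\struct{A}\models \phi(\pi(a_1),\dots,\pi(a_k))$. Applying this to a formula $\phi$ defining $R$ immediately gives that $(a_1,\dots,a_k)\in R$ iff $(\pi(a_1),\dots,\pi(a_k))\in R$, i.e.\ $\pi$ (and $\pi^{-1}$, which is also an automorphism) preserves $R$; this is exactly the assertion of the proposition.

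\textbf{Key steps.} The base cases: for an atomic formula $R'(x_{i_1},\dots,x_{i_\ell})$ with $R'\in\tau$, the equivalence is precisely the statement that $\pi$ and $\pi^{-1}$ are both homomorphisms with respect to $R'$, which holds since $\pi$ is an automorphism; for $x_i=x_j$ it follows from injectivity of $\pi$; and $\bot$ is handled trivially. The induction step for the Boolean connectives $\wedge$, $\vee$, $\neg$ is immediate from the induction hypothesis. For the quantifier case $\phi = \exists y\,\psi(y,\bar x)$: if $\struct{A}\models\psi(b,\bar a)$ for some witness $b$, then by the induction hypothesis $\struct{A}\models\psi(\pi(b),\pi(\bar a))$, so $\pi(b)$ witnesses $\struct{A}\models\exists y\,\psi(y,\pi(\bar a))$; the converse direction uses that $\pi$ is \emph{surjective}, so any witness for $\exists y\,\psi(y,\pi(\bar a))$ is of the form $\pi(b)$ for some $b\in A$, and then the induction hypothesis transports it back. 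The universal quantifier is dual (or can be rewritten via $\neg\exists\neg$).

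\textbf{Main obstacle.} There is essentially no obstacle here — this is a textbook fact (indeed the excerpt cites~\cite{Hodges}), and the only place where one must be slightly careful is ensuring that \emph{both} directions of every biconditional go through, which is exactly where surjectivity of the automorphism is used in the existential case (a mere endomorphism would not suffice, in contrast to Proposition~\ref{prop:reductions} where only primitive positive formulas are considered and only the forward implication is needed). So the proof is a short structural induction; I would present the base cases and the existential quantifier case explicitly and leave the remaining connectives to the reader. Since the statement is standard, I would keep the write-up to a few lines, or simply cite~\cite{Hodges} as the excerpt already does.
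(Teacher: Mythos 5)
Your proof is correct and is exactly the standard structural induction found in the cited reference; the paper itself gives no proof and simply cites \cite{Hodges}, so there is nothing to diverge from. Your observation that surjectivity is essential for the existential-quantifier case, and the contrast with the pp-formula setting of Proposition~\ref{prop:reductions}, are both accurate and well placed.
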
  
A \emph{substructure} of $\struct{A}$ is a structure $\struct{B}$ with the same signature and domain $B\subseteq A$ such that $R^{\struct{B}}=R^{\struct{A}}\cap B^k$ for every relation symbol $R$ of arity $k$.
In what follows we will frequently use the fact that the structure $(\mathbb{Q};<)$ is \emph{homogeneous}~\cite{Hodges}, i.e., that every isomorphism between two substructures can be extended to an automorphism.

\subsection{Proof of Lemma~\ref{lemma:ElimMin}} 
 \ElimMin*
 \begin{proof} 
   Let $f(x,y)\coloneqq\lex(\max(x,y), \lex(x,y))$.   
   Clearly,  $f$  is injective and preserves $\leq$.
   Thus, by Proposition~\ref{prop:ordhorn}, $f$ preserves $R$. 
   Let $j_1, j_2 \in [\ell]$ be distinct.
   Suppose that there exist tuples $\bar{t}_{1},\bar{t}_{2}\in S$ such that, for every $i\in \{1,2\}$, $\bar{t}_i$ satisfies 
   $$(x\geq z_{j_i}) \wedge \Big(\bigwedge\nolimits_{j\in [\ell]\setminus \{j_i\}}  x < z_{j}\Big) \wedge \neg \psi_2.$$
   Let $\alpha$ be any automorphism of $(\mathbb{Q};<)$  with $\alpha(\bar{t}_{2}\of{x})=\bar{t}_{1}\of{x}$.
   Then $\bar{t}\coloneqq f(\bar{t}_1, \alpha\bar{t}_2) \in R$  satisfies neither 
   $(x \geq z_1 \vee \cdots \vee x \geq z_{\ell})$ nor $\psi_2$, a contradiction to the existence of both $\bar{t}_1,\bar{t}_{2}$ at the same time.
   Thus, $(x \geq z_1 \vee \cdots \vee x \geq z_{\ell})$ can be replaced by $(x \geq z_1 \vee \cdots x \geq z_{j-1}\vee x \geq z_{j+1} \vee \cdots \vee x \geq z_{\ell})$ for some $j\in \{j_1,j_2\}$ without changing $R$.
   Iterating this $\ell-1$ times, until we eliminate all but one index from $[\ell]$, yields $(x\geq z_i)$ for some $i\in [\ell]$. 
\end{proof} 


\subsection{Proof of Lemma~\ref{lemma:pp-def}}
  \PPDef*
  \begin{proof} Let $\theta\coloneqq \exists h \ \phi_{k}(x,y_1,\dots,y_{k},h)  \wedge \Michal(h,h,x) \wedge \Michal(h,y_{k+1},z_1)$. 
  
  ``$\Rightarrow$'' Suppose that $\bar{t}\in \mathbb{Q}^{k+3}$ satisfies $\mu_{k+1}$.
  If $\bar{t}\of{x}=\bar{t}\of{y_1}=\cdots=\bar{t}\of{y_{k}}$, then we choose  $h\coloneqq \bar{t}\of{x}$ to satisfy $\theta$.
  Otherwise, we choose any $h>\max\{\bar{t}\of{y_1},\dots, \bar{t}\of{y_{k+1}}\}$ to satisfy $\theta$.

  ``$\Leftarrow$'' Suppose that $\bar{t}\in \mathbb{Q}^{k+3}$ satisfies $\theta$.
   If it is not the case that $\bar{t}\of{x}=\bar{t}\of{y_1}=\cdots=\bar{t}\of{y_{k}}$, then clearly $\bar{t}$ satisfies $\mu_{k+1}$.
   Otherwise, by the first two conjuncts in $\theta$, it must be the case that $h=\bar{t}\of{x}$.
   If $\bar{t}\of{x}\neq \bar{t}\of{y_{k+1}}$, then clearly $\bar{t}$ satisfies $\mu_{k+1}$.
   Otherwise $\bar{t}\of{x}=h\geq \bar{t}\of{z_1}$ because of the last conjunct.
  \end{proof}

\subsection{Proof of Lemma~\ref{lemma:algo_p_halt}\label{appendix:lemma:algo_p_halt}} 

\AlgoHalt*

In the proof of Lemma~\ref{lemma:algo_p_halt}, we frequently use the following simple observation, without explicitly mentioning it. 
\begin{claim} \label{obs:u-vs-A}
For every pair $(x,z)\in V^2$, and every $A \subseteq \varsu$, there exists $u \in V$ such that $A \subseteq {\uparrow_u}$ and ${\uparrow_A}\setminus (\{x,z\}\cup \cut{x}{z}) = {\uparrow_u} \setminus (\{x,z\}\cup \cut{x}{z}).$
\end{claim}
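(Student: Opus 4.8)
\textbf{Proof plan for Claim~\ref{obs:u-vs-A}.}

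The plan is to exhibit an explicit choice of $u$. Recall that ${\uparrow_A} = \bigcup_{w \in A} {\uparrow_w}$, where ${\uparrow_w} = \{y \in \varsu \mid w \preceq y\}$, and that $\preceq$ is transitive (Definition~\ref{def:terminology}). Since $A \subseteq \varsu$, if $A$ is nonempty let $u$ be the $\prec$-minimal element of $A$ (which exists because $\prec$ is a linear order on the finitely many variables of $\Phi$); if $A = \emptyset$, let $u$ be any variable, say the $\prec$-minimal universal variable, or indeed any variable at all, since then both ${\uparrow_A}$ and the relevant intersection are empty and there is nothing to check. So assume $A \neq \emptyset$ and $u = \min_{\prec} A$.

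First I would check $A \subseteq {\uparrow_u}$: for every $w \in A$ we have $u \preceq w$ by minimality, and $w \in \varsu$, so $w \in {\uparrow_u}$ by definition. Next I would show ${\uparrow_A} = {\uparrow_u}$. The inclusion ${\uparrow_u} \subseteq {\uparrow_A}$ is immediate since $u \in A$. For the converse, take $y \in {\uparrow_A}$, so $w \preceq y$ for some $w \in A$; since $u \preceq w$ and $\preceq$ is transitive, $u \preceq y$, and $y \in \varsu$, hence $y \in {\uparrow_u}$. Intersecting both sides of ${\uparrow_A} = {\uparrow_u}$ with the complement of $\{x,z\} \cup \cut{x}{z}$ then gives ${\uparrow_A}\setminus (\{x,z\}\cup \cut{x}{z}) = {\uparrow_u} \setminus (\{x,z\}\cup \cut{x}{z})$, as required.

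There is essentially no obstacle here; the only point needing a word of care is the degenerate case $A = \emptyset$, where one must make sure the chosen $u$ still exists (any variable works, and $V \neq \emptyset$ for any nontrivial instance) and both sides of the displayed equation are empty. Everything else is a one-line consequence of the transitivity of $\preceq$ and the fact that $u$ is a $\prec$-least element of $A$.
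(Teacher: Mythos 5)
Your treatment of the nonempty case is correct and coincides with the paper's: take $u$ to be the $\prec$-minimal element of $A$, then ${\uparrow_A} = {\uparrow_u}$ by transitivity of $\preceq$, and $A \subseteq {\uparrow_u}$ since every $w \in A$ is universal and $u \preceq w$.

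However, the $A = \emptyset$ case has a genuine gap. You assert that any choice of $u$ works because "both ${\uparrow_A}$ and the relevant intersection are empty," but while ${\uparrow_A} = \emptyset$ is correct, the set ${\uparrow_u} \setminus (\{x,z\}\cup \cut{x}{z})$ on the right-hand side of the required equality is \emph{not} automatically empty — and for your suggested choice (the $\prec$-minimal universal variable) it typically isn't. Concretely, suppose the prefix is $\forall y_1 \exists x_1 \forall y_2 \exists z_1 \cdots$ and take $x = x_1$, $z = z_1$. Then $\cut{x_1}{z_1}$ contains only universal variables strictly after $z_1$, so $y_1, y_2 \notin \cut{x_1}{z_1}$; yet ${\uparrow_{y_1}}$ contains $y_1, y_2$, so ${\uparrow_{y_1}} \setminus (\{x_1,z_1\}\cup \cut{x_1}{z_1}) \neq \emptyset$, breaking the claimed equality. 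The paper avoids this by choosing $u$ to be the \emph{last} variable in the quantifier prefix. Then ${\uparrow_u}$ is $\emptyset$ if $u$ is existential, and $\{u\}$ if $u$ is universal; in the latter case, either $u \in \{x,z\}$, or else $u$ succeeds every existential variable (in particular $\varse \cap \{x,z\}$) and $u \neq z$, hence $u \in \cut{x}{z}$ — so in all cases ${\uparrow_u} \setminus (\{x,z\}\cup \cut{x}{z}) = \emptyset$, as required. You need a choice of $u$ that is \emph{large} with respect to $\prec$, not small, in the degenerate case.
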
 
\begin{claimproof}
It is easy to see that if $A\neq \emptyset$, then we may choose $u$ to be the variable in $A$ that satisfies $u \preceq y$ for all $y \in A$.

If $A=\emptyset$, then we choose $u$ as the last variable in the quantifier-prefix of $\Phi$. 
Indeed, if $u$ is existential, then we are done. Otherwise, $u$ is universal. If $u \in \{ x,z\}$, then this variable is removed from $\uparrow_A$ and we are done.  If $u \notin \{x,z\}$, then $u \in \cut{x}{z}$. This completes the proof of the 
observation.   
\end{claimproof}

\begin{proof}[Proof of Lemma~\ref{lemma:algo_p_halt}] 
We assume that $\phi$ is expanded by all derived constrains from the run of Algorithm~\ref{algo:dima}, and show that~$\big((\bigwedge\nolimits_{v\in \uparrow_A\setminus (\{x,z\}\cup \cut{x}{z})} x=v) \Rightarrow x\geq z \big)$ is among these constraints.
We prove the lemma by induction on the length of the derivation of $\mathcal{P}(x,z;A)$. Observe that it is enough to show that,  if $\mathcal{P}(x,z;A)$ is derived, where $z \notin A$, then $\phi \wedge (\bigwedge_{v\in A} x=v) \wedge (x < z)$ is not satisfiable. Then indeed, by Observation~\ref{obs:u-vs-A}, we may choose $u \in V$ such that $A \subseteq {\uparrow_u}$ and ${\uparrow_A}\setminus (\{x,z\}\cup \cut{x}{z}) = {\uparrow_u} \setminus (\{x,z\}\cup \cut{x}{z})$. Since $z \notin A$ and $(x=x)$ is always satisfied, if $\phi \wedge (\bigwedge_{v\in A} x=v) \wedge (x < z)$ is not satisfiable, then neither is $\phi \wedge (\bigwedge_{v\in \uparrow_u \setminus \{x, z\}} x=v) \wedge (x < z)$, and therefore the algorithm expands $\phi$ by the desired conjunct. 

 \Init:  In the base case, we consider an expression of the form  $\mathcal{P}(x,x;\emptyset)$ for $x\in V$. Since $(x < x)$ is always an unsatisfiable constraint, the claim holds true. 

\Simplify: Suppose that $\mathcal{P}(x,z;A\setminus \cut{x}{z})$ was derived from $\mathcal{P}(x,z;A)$, where $z\notin A\setminus \cut{x}{z}$.  
Then also $z\notin A$, because $z\notin \cut{x}{z}$, and so we can apply the induction hypothesis to $\mathcal{P}(x,z;A)$.
 By the induction hypothesis, $\phi \wedge (\bigwedge_{v\in A} x=v) \wedge (x < z)$ is not satisfiable and hence, as explained above, the algorithm has expanded $\phi$ by $\big((\bigwedge_{v\in \uparrow_A\setminus (\{x,z\}\cup \cut{x}{z})} x=v) \Rightarrow x\geq z\big)$.
 Therefore, since $A \setminus \cut{x}{z} \subseteq {\uparrow_A} \setminus (\{z\} \cup \cut{x}{z})$ and $(x=x)$ holds trivially, we obtain that
 $\phi \wedge (\bigwedge_{v\in A \setminus \cut{x}{z}} x=v) \wedge (x < z)$ is not satisfiable, as we needed. 

\Trans: Suppose that we derived $\mathcal{P}(x,z;A)$ from $\mathcal{P}(x,y;A)\wedge \mathcal{P}(y,z;\emptyset)$, where $z\notin A$.
 By the induction hypothesis, $\phi \wedge (y < z)$ has no solution and the algorithm has expanded $\phi$ by $(y\geq z)$.
 If $y\in A$, then clearly $\phi \wedge (\bigwedge_{v\in A} x=v) \wedge (x<z)$ is not satisfiable because $(y=x)\wedge (x < z)$ entails $(y < z)$ .

 Otherwise $y \notin A$. Then the induction hypothesis is applicable also on $\mathcal{P}(x,y;A)$ and hence $\phi \wedge (\bigwedge_{v\in A} x=v) \wedge (x<y)$ has no solution.
 Since $\phi$ contains $(y \geq z)$,
 $\phi \wedge (\bigwedge_{v\in A} x=v) \wedge (x<z)$ has no solution, which completes this case.

\AltTrans:  Suppose that we derived $\mathcal{P}(x_i,z;A\cup B\cup (\{x_1,x_2\}\setminus \{x_i\}) )$ from 
\begin{enumerate}
    \item $\mathcal{P}(x_1,y;A) \wedge \mathcal{P}(y,x_2;\emptyset)\wedge\mathcal{P}(y,z;B)$,
    \item $(\{x_1,x_2\}\setminus \{x_i\})\subseteq \varsu$ ($i\in \{1,2\}$),
\end{enumerate} 
 where $z\notin A\cup B\cup (\{x_1,x_2\}\setminus \{x_i\})$.%
By the induction hypothesis, $\phi \wedge (y < x_2)$ has no solution and the algorithm has expanded $\phi$ by $(y\geq x_2)$.
Similarly, applying the induction hypothesis to $\mathcal{P}(y,z;B)$, $\phi \wedge (\bigwedge_{v\in B} y=v) \wedge (y<z)$ has no solution.
If $y\in A$, then clearly 
$\phi \wedge \big(\bigwedge\nolimits_{v\in (A\cup B \cup (\{x_1,x_2\}\setminus \{x_i\}) ) } x_i=v \big) \wedge (x_i<z)$
is not satisfiable, otherwise it would violate the induction assumption for $\mathcal{P}(y,z;B)$.
Assume therefore that $y\notin A$ and hence the induction hypothesis is applicable on $\mathcal{P}(x_1,y; A)$.
It follows that $\phi \wedge (\bigwedge_{v\in A} x_1=v) \wedge (x_1<y)$ has no solution.

 Suppose for contradiction, that
 $\phi \wedge \big( \bigwedge\nolimits_{v\in A\cup B \cup (\{x_1,x_2\}\setminus \{x_i\})} x_i=v \big) \wedge (x_i<z)$
 is satisfiable. Then we must have $(y \geq x_2=x_1)$ and since $(x_1 <y)$ cannot be satisfied along with this formula, $(y=x_2=x_1)$. Moreover, $(y < z)$ cannot be satisfied, therefore $(x_1=x_2=y \geq z)$. This yields a contradiction with $(x_i <z)$, therefore the formula is not satisfiable, which completes the case.

\Progress:
Suppose that  $ \mathcal{P}(x_i,z;A\cup B\cup \{x_1, x_2, x_3, x_4\} \setminus \{x_i \}) $ was derived from 
\begin{enumerate}
    \item $\mathcal{P}(x_1,u;A)  \wedge   \mathcal{P}(u,x_2;\emptyset)$,
    \item $ \mathcal{P}(x_3,v;B)\wedge   \mathcal{P}(v,x_4;\emptyset)$,
    \item $(\{x_1,x_2,x_3,x_4\} \setminus \{x_i\})\subseteq \varsu$ ($i\in \{1,2,3,4\}$),   
    \item $(u=v \Rightarrow u \geq z)$ or $(v=u \Rightarrow v \geq z)$ in $\phi$,
\end{enumerate} 
 where $z\notin A\cup B\cup \{x_1, x_2, x_3, x_4\} \setminus \{x_i \}$.
By the induction hypothesis, $\phi \wedge (u < x_2)$ and $\phi \wedge (v < x_4)$ are not satisfiable and the algorithm has expanded $\phi$ by $(u\geq x_2)\wedge (v\geq x_4)$.
Next we consider the four cases $u\in A$ and $v\in B$, $u\notin A$ and $v\in B$, $u\in A$ and $v\notin B$, $u\notin A$ and $v\notin B$.

If $u \in A$ and $v \in B$, then 
$\phi \wedge \big(\bigwedge\nolimits_{w\in A\cup B \cup (\{x_1,x_2,x_3,x_4\}\setminus \{x_i\})} x_i=w\big)$
entails $(x_1=x_2=x_3=x_4=u=v \geq z)$, which implies that
$\phi \wedge \big(\bigwedge\nolimits_{w\in A\cup B \cup (\{x_1,x_2,x_3,x_4\}\setminus \{x_i\})} x_i=w\big) \wedge (x_i < z)$
has no solution.

If $u \notin A$ and $v \in B$, then we apply the induction hypothesis to $\mathcal{P}(x_1, u; A)$ and get that
$ \phi \wedge \big(\bigwedge\nolimits_{w\in A} x_1=w\big) \wedge (x_1 < u)$ has no solution. Therefore,
$\phi \wedge \big(\bigwedge\nolimits_{w\in A\cup B \cup (\{x_1,x_2,x_3,x_4\}\setminus \{x_i\})} x_i=w\big)$
entails $(x_1=x_2=x_3=x_4=v \geq u)$ and $(u \geq x_2)$.
Since $\phi$ also contains $(u=v \Rightarrow u \geq z)$ or $(v=u \Rightarrow v \geq z)$,  we get that  
$\phi \wedge \big(\bigwedge\nolimits_{w\in A\cup B \cup (\{x_1,x_2,x_3,x_4\}\setminus \{x_i\})} x_i=w\big) \wedge (x_i < z)$
has no solution. The case $u \in A$ and $v \notin B$ is symmetric. 
Finally, suppose that $u \notin A$ and $v \notin B$. 
Then we may apply the induction hypothesis on both $\mathcal{P}(x_1,u; A)$ and $\mathcal{P}(x_3,v;B)$ and assume that neither $ \phi \wedge \big(\bigwedge\nolimits_{w\in A} x_1=w\big) \wedge (x_1 < u)$ nor $ \phi \wedge \big(\bigwedge_{w\in B} x_3=w\big) \wedge (x_3 < v)$ has a solution. 
Therefore, 
$\phi \wedge \big(\bigwedge\nolimits_{w\in A\cup B \cup (\{x_1,x_2,x_3,x_4\}\setminus \{x_i\})} x_i=w\big)$
entails  $(x_1=x_2=x_3=x_4)$, $(x_1\geq u)$, $(x_3 \geq v)$, $(u \geq x_2)$  and $(v \geq x_4)$. In particular, it entails $(u=v)$. Since $\phi$ also contains $(u=v \Rightarrow u \geq z)$ or $(v=u \Rightarrow v \geq z)$, we obtain again that
$\phi \wedge \big(\bigwedge\nolimits_{w\in A\cup B \cup (\{x_1,x_2,x_3,x_4\}\setminus \{x_i\})} x_i=w\big) \wedge (x_i < z)$
has no solution.
This finishes the proof of the lemma.
\end{proof}

 \subsection{Proof of Claim~\ref{claim:charact_game}\label{appendix:claim:charact_game}}

\CharactGame*

\begin{claimproof}  ``$\Leftarrow$''
We show that $\ass{x}=\ass{x_2}$ and $\ass{z}=\ass{z_2}$.
If $x\equiv x_2$, then clearly $\ass{x}=\ass{x_2}$. So, w.l.o.g., $x_2\prec x$.  
If $x\in \varsu$, then \Simplify yields $\mathcal{P}(x_2,x;\emptyset)$ and hence \Refute produces $\bot$, a contradiction. 
So we must have $x\in \varse$. Then either $x_1\equiv  x$ or $x_1 \prec x$, and it follows from the strategy of the EP that $\ass{x}=\ass{x_2}$.
Analogously we obtain that
$\ass{z}=\ass{z_2}$.
The rest follows by the transitivity of the equality.

``$\Rightarrow$'' Whenever the  right-hand side of the equivalence in Claim~\ref{claim:charact_game} is satisfied, we  call   $(x,x_1,x_2;A_{x_2,x})$ and $(z,z_1,z_2;A_{z_2,z})$ \emph{witnessing quadruples} for $\ass{x}=\ass{z}$.
If $x\equiv z$, then the statement trivially follows using \Init ,  the witnessing quadruples are $(x,x,x;\emptyset)$ and $(z,z,z;\emptyset)$.
So, w.l.o.g., $z\prec x$.
If $x\in \varsu$, then the claim follows using \Init , the witnessing quadruples are again $(x,x,x;\emptyset)$ and $(z,z,z;\emptyset)$.  
So suppose that $x\in \varse$ and  that the claim holds for all pairs of variables preceding $x$.
Since $\ass{x}=\ass{z}$, by the strategy of the EP, there exist $x_1,x_2\prec x$ and $x_{2}\prec A \prec x$ such that $\mathcal{P}(x,x_1;\emptyset)\wedge \mathcal{P}(x_2,x;A)$ and   $\ass{z}=\ass{x_1}=\ass{x_2}=\ass{A}.$ 

Since $\ass{x_2}=\ass{z}$ and $x_2,z\prec x$, we can apply the induction hypothesis for the pair $x_2,z$ to obtain the witnessing quadruples $(x_2,x_{2_1},x_{2_2};A_{x_{2_2},x_2})$ and $(z,z_{1},z_{2};A_{z_{2},z})$.
By assumption, there exists $y\in \{z_2,x_{2_2}\}$ such that $y\preceq_{\forall} \{z_1,z_2,x_{2_1},x_{2_2}\}$.
Note that $\ass{y}=\ass{x_1}$.
Thus, we can apply the induction hypothesis for the pair $x_1,y$ to obtain the witnessing quadruples $(x_1,x_{1_1},x_{1_2};A_{x_{1_2},x_1})$ and $(y,y_{1},y_{2};A_{y_{2},y})$.  
By assumption, there exists $y'\in \{y_2,x_{1_2}\}$  such that $y'\preceq_{\forall} \{y_1,y_2,x_{1_1},x_{1_2}\}$.  
The two cases w.r.t.\ the variable $y$ that can occur are illustrated in Figure~\ref{figure:cases}, see Case~1 and Case~2 below.

  Our goal is to find witnesses $x'_1,x'_2,z'_1,z'_2$ for the main statement of the claim, i.e., the witnessing quadruples will be of the form $(x,x'_1,x'_2;A_{x'_2,x})$ and $(z,z'_1,z'_2;A_{z'_2,z})$. 
For the sake of brevity, we will not explicitly write down the precise definitions of 
$A_{x'_2,x}$ and $A_{z'_2,z}$ as they will be clear from the context.
We set $x'_1\coloneqq x_{1_1}$, and: 
 \[ z'_1\coloneqq \begin{cases} y_1 & \text{if } y\equiv z_1, \\ z_1 & \text{otherwise,}
 \end{cases} 
  \qquad  z'_2\coloneqq \begin{cases} y_2 & \text{if } y\equiv z_2, \\ z_2 & \text{otherwise,}
 \end{cases}
 \qquad 
   x'_2\coloneqq \begin{cases} y_2 & \text{if } y\equiv x_{2_2}, \\ x_{2_2} & \text{otherwise.}
 \end{cases} \qquad 
  \]   

By the induction hypothesis and the transitivity of $\prec$, $x_1',x_2',z_1',z_2'$ clearly satisfy item~\ref{item:ugly_claim_1}.
It will also be clear that our implicit choice of $A_{x'_2,x}$ and $A_{z'_2,z}$ leads to satisfaction of item~\ref{item:ugly_claim_5}.  
The remaining three items are proved in the case distinction below.
In both Cases~1 and~2, we initially start proving items~\ref{item:ugly_claim_2} and~\ref{item:ugly_claim_4}, and then proceed with item~\ref{item:ugly_claim_3} in the finer subdivision into Cases~1.1, 1.2, 2.1, and~2.2. For the sake of conciseness, when applying rules of $\mathcal{P}$ to derive new expressions, we often do not state all necessary expressions for the inference, as long as they are clear from the rule and the resulting expression.

 To justify the applications of the rule \AltTrans  that follow, we observe that $y_2 \preceq_{\forall} y_1$. Indeed, suppose that $y_1 \not\equiv y_2$. By \Trans , we have $\mathcal{P}(y_2, y_1, A_{y_2,y})$. If $y_1 \in \varse$ or $y_1 \prec y_2$, then  $y' \preceq_{\forall} \{y_1, y_2\}$ implies $y_1 \prec y_2$ and $y_2 \in \varsu$. By \Simplify, we obtain $\mathcal{P}(y_2, y_1, \emptyset)$, and then using \Refute we get $\bot$, a contradiction.
 By an analogous argument, we have $x_{2_2} \preceq_{\forall} x_{2_1}$.  
  Now it immediately follows that,  by \AltTrans , we have 
\begin{align} \mathcal{P}(x_{2_2}, x; A \cup  A_{x_{2_2},x_2} \cup  (\{x_{2_1}\} \setminus \{x_{2_2}\}).
      \label{eq:I_am_stupid3} 
\end{align}
A subsequent double application of \Trans yields 
\begin{align} \mathcal{P}(x_{2_2},x_{1_1};A\cup A_{x_{2_2},x_2}\cup (\{x_{2_1}\}\setminus \{x_{2_2}\})).\label{eq:I_am_stupid4} 
\end{align}   

\case{1}{$y\equiv z_2$}. Then $z_2 \preceq_{\forall} \{z_1,x_{2_1},x_{2_2}\}$.
We now establish item 4 with a suitable choice of sets $A_{x'_2,x}$ and $A_{z'_2,z}$. It is easy to verify that item 2 is satisfied as well.
\begin{itemize}
\item  If $y\equiv z_1$, then, by \Trans , we have $\mathcal{P}(z,z'_1;\emptyset)$ because $z'_1\equiv y_1$. 
Otherwise $z_1\in \varsu$ and $y\prec z_1$; we have $\mathcal{P}(z,z'_1;\emptyset)$, because $z'_1\equiv z_1$.

\item Recall that $y_2\preceq_{\forall} y_1$.
  Thus, by \AltTrans , we have \begin{align}  \mathcal{P}(z'_2,z;A_{z_2,z}\cup A_{y_2,y}\cup (\{y_1\}\setminus\{y_2\})),
      \label{eq:I_am_stupid5} 
\end{align}   because $z'_2\equiv y_2$.

\item By \Trans, we have $\mathcal{P}(x,x'_1;\emptyset)$, because 
$x'_1=x_{1_1}$.

\item  By assumption, we have $y\preceq_{\forall} x_{2_2}$.
Recall that we have \eqref{eq:I_am_stupid3}.
If $y\equiv x_{2_2}$, then an application of \AltTrans yields
\begin{align}
    \mathcal{P}(x'_{2},x;A \cup A_{x_{2_2},x_2} \cup   (\{x_{2_1}\} \setminus \{x_{2_2}\})\cup A_{y_2,y}\cup (\{y_1\}\setminus\{y_2\})),   \label{eq:I_am_stupid} 
\end{align} 
because $x'_{2}\equiv y_2$.
Otherwise $y \prec x_{2_2}$ and $x_{2_2}\in \varsu$.
%
%
Then $\mathcal{P}(x'_{2},x;A\cup A_{x_{2_2},x_2} \cup (\{x_{2_1}\}\setminus\{x_{2_2}\}))$ follows directly from \eqref{eq:I_am_stupid3} because $x'_{2}=x_{2_2}$.
\end{itemize}

In the case distinction below, we verify item \ref{item:ugly_claim_3} for $x_1',x_2',z_1',z_2'$.

\case{1.1}{$y'\equiv y_2$}. Since $y_2\preceq y\equiv z_2$, in all cases above we get $z'_2 \equiv  y_2 \preceq_{\forall} \{z'_1,x'_1,x'_2\}$. 

\case{1.2}{$y'\equiv x_{1_2}$}. 
We show that, as above, $z'_2 \equiv  y_2 \preceq_{\forall} \{z'_1,x'_1,x'_2\}$, starting with $x'_2$. 
If $y\equiv z_2\prec x_{2_2}$ and $x_{2_2}\in \varsu$, then we have $z'_2 \equiv  y_{2}\preceq_{\forall} x_{2_2}\equiv x'_2$.
Otherwise $x_{2_2} \equiv z_2 \equiv  y$ and by our choice of $x'_2$ and $z'_2$, we have $z'_2\equiv y_{2}$ and $x'_2\equiv y_{2}$. In particular, $z'_2  \preceq_{\forall} x'_2$. 
%
%
Next comes $z'_1$. 
If $y\prec z_1$ and $z_1\in \varsu$, 
then $y_2\preceq_{\forall} z_1$ because $y_2\preceq y$.
Consequently, $z'_2 \equiv y_2\preceq_{\forall} z_1\equiv z'_1$.
Otherwise $y\equiv z_1$ and $z'_1 \equiv y_1$. 
Recall that we always have $y_2 \preceq_{\forall} y_1$ and hence $z'_2\equiv y_2 \preceq_{\forall} y_1 \equiv  z'_1$. 

Finally $x'_1$. Recall that we have $z_{2} \preceq_{\forall}  x_{2_2} \preceq_{\forall} x_{2_1}$.
We consider the following two cases. First, suppose that $z_2\prec x_{2_2}$ and $x_{2_2}\in \varsu$.
Since $x_{2_2}\in \varsu$, it cannot be the case that $x_{1_1}\prec x_{2_2}$, otherwise \Simplify applied on \eqref{eq:I_am_stupid4} yields $\mathcal{P}(x_{2_2},x_{1_1};\emptyset)$. Using \Refute, we obtain $\bot$, a contradiction. 
Hence $x_{2_2}\preceq x_{1_1}$.
Now it follows that $y_2\preceq z_2\prec x_{2_2}\preceq x_{1_1}$.
Since $y'\preceq_{\forall} \{y_2,x_{1_1}\}$,
we even have $z'_2\equiv y_{2}\preceq_{\forall} x_{1_1}\equiv x'_1$.
Second, suppose that $x_{2_2} \equiv z_2 \equiv y$. 
Recall that $y_2\preceq_{\forall} y_1$ and $\mathcal{P}(y_2,y;A_{y_2,y})$.
Combining this with \eqref{eq:I_am_stupid4} and applying
\AltTrans, we get 
\begin{align}\mathcal{P}(y_{2},x_{1_1};A\cup A_{x_{2_2},x_2}\cup (\{x_{2_1}\}\setminus \{x_{2_2}\})\cup A_{y_2,y}\cup (\{y_{1}\}\setminus \{y_{2}\})). 
      \label{eq:I_am_stupid2} 
\end{align}  
We cannot have $x_{1_1}\prec y_2$, otherwise $y_2 \in \varsu$ in which case \Simplify yields $\mathcal{P}(y_{2},x_{1_1};\emptyset)$ and then \Refute yields $\bot$, a contradiction. 
Hence, $y_2\preceq  x_{1_1}$.
Since $y'\preceq_{\forall} \{y_2,x_{1_1}\}$,
we get $z'_2\equiv y_2\preceq_{\forall} x_{1_1}\equiv x'_1$. 

\case{2}{$y\equiv x_{2_2}$}. Then $x_{2_2} \preceq_{\forall} \{z_1,z_2,x_{2_1}\}$.
We now show that item 4 holds true with a suitable choice of sets $A_{x'_2,x}$ and $A_{z'_2,z}$; it will be clear that item 2 is satisfied as well.
\begin{itemize}
\item If $y\equiv z_1$, then, by \Trans , we get $\mathcal{P}(z,z'_1;\emptyset)$, because $z'_1\equiv y_1$.
Otherwise $z_1\in \varsu$ and $y\prec z_1$; we have $\mathcal{P}(z,z'_1;\emptyset)$ because $z'_1\equiv z_1$. 

\item First, suppose that  $y\equiv z_2$. 
Recall that we have $y_2\preceq_{\forall} y_1$.
By \AltTrans , we have \eqref{eq:I_am_stupid5} because $z'_2 \equiv y_2$.
Second, suppose that $y\prec z_2$ and $z_2\in \varsu$. Then we have $\mathcal{P}(z'_2,z;A_{z_{2},z})$ because $z'_2 \equiv z_2$.

\item By \Trans , we have $\mathcal{P}(x,x'_1;\emptyset)$ because $x'_1\equiv x_{1_1}$.

\item Recall that we have \eqref{eq:I_am_stupid3} and $y_2\preceq_{\forall} y_1$.
By \AltTrans , we have \eqref{eq:I_am_stupid} because $x'_{2}\equiv y_2$.
\end{itemize}

Finally, we verify item 3 of the claim.

\case{2.1}{$y'\equiv y_2$}. Since $y_2\preceq y\equiv x_{2_2}$, in all cases above we get $x'_2 \equiv  y_2 \preceq_{\forall} \{z'_1,z'_2,x'_1\}$.

\case{2.2}{$y'\equiv x_{1_2}$}. By a double application of \Trans on \eqref{eq:I_am_stupid}, we get \eqref{eq:I_am_stupid2}.  
It cannot be that $x_{1_1}\prec y_2$, as this implies $y_2\in \varsu$, in which case \Simplify yields $\mathcal{P}(y_2,x_{1_1};\emptyset)$ and then \Refute yields $\bot$, a contradiction as in Case~1.2. 
Hence $x'_2\equiv y_2\preceq_{\forall} x_{1_1}\equiv x'_1$.
Recall that $ y_2\preceq_{\forall} y_1$.
Either $x_{2_2}\prec z_1$ and $z_1\in \varsu$, in which case $x'_2\equiv y_2\preceq_{\forall} z_1 \equiv z'_1 $, or $x_{2_2}\equiv z_1$ in which case $x'_2\equiv y_2\preceq_{\forall} y_1 \equiv  z'_1$. 
Also either $x_{2_2}\prec z_2$ and $z_2\in \varsu$, in which case $x'_2\equiv y_2\preceq_{\forall} z_2 \equiv z'_2 $, or $x_{2_2}\equiv z_2$ in which case $x'_2\equiv y_2\preceq_{\forall} y_2 \equiv  z'_2$.
Hence, $x'_2  \preceq_{\forall} \{z'_1,z'_2,x'_1\}$.
\end{claimproof} 

 \subsection{Proof of Lemma~\ref{lemma:arity_four}\label{appendix:lemma:arity_four}}
\ArityFour*

   The  proof of Lemma~\ref{lemma:arity_four} is based on the notion of connectedness within OH clauses.
   Let $\psi$ be an OH clause, i.e., a formula of the form \eqref{eq:OrdHorn}.
    We say that two variables are \emph{connected} in $\psi$ if there exists a walk between them over the disjuncts in $\psi$ via the variables that are identified in the disjuncts.
    E.g., if $\psi$ is obtained from  \eqref{eq:OrdHorn} for $k=2$ by identifying $y_1$ and $x_2$, then $x_1$ is connected with $y_2$ but not with $y_3$. A clause is \emph{connected} if all its variables are connected. 

\begin{proof}[Proof of Lemma~\ref{lemma:arity_four}] 
    Let $R$ be a relation of $\struct{B}$ that is not  preserved by $\pp$.
    Let $\phi$ be an OH definition of $R$, i.e., a conjunction of clauses of the form  \eqref{eq:OrdHorn}  
    where the last disjunct is optional and some variables might be identified.
    We prove Lemma~\ref{lemma:arity_four} by induction on the following two parameters: in each step, we either
    \begin{itemize}
       \item obtain a new pp-definable relation of strictly smaller arity that is not preserved by $\pp$, or
        \item  leave the relation from the last step intact but provide a new OH definition for it so that the sum of the numbers of connected components of all its clauses is strictly smaller.
    \end{itemize} 
    It is important to note that, by Proposition~\ref{prop:ordhorn} and Proposition~\ref{prop:reductions}, every temporal relation pp-definable using an OH relation is again OH.
    Clearly, after finitely many steps, the above process yields a pp-definable relation of arity $\leq 4$ that is not preserved by $\pp$.

    \medskip We now proceed with the induction step. Since $R$ is not preserved by $\pp$, some clause 
    $\psi$ of the form \eqref{eq:OrdHorn}
    is \emph{violated} by $\pp$, i.e., there exist tuples $\bar{t}_1,\bar{t}_2\in R$ such that $\pp(\bar{t}_1,\bar{t}_2)$ does not satisfy $\psi$.
    In what follows we identify $\psi$ with the formula in \eqref{eq:OrdHorn} where some variables might be identified.
    We may assume without loss of generality that the variables of $\psi$ cover all entries of $R$, otherwise we consider a projection of $R$ to the variables of $\psi$, which is a relation with an Ord-Horn definition containing $\psi$. 
    We may further assume that $\phi$ is reduced.
    Note that, since $\psi$ is reduced, it cannot be the case that, 
    \begin{itemize}
        \item for some $i\in [k+1]$, the variables $x_i$ and $y_i$ are identified in $\psi$,
        \item for some $i\in [k]$, the variables $\{x_i,y_i\}$ are identified with the variables $\{x_{k+1},y_{k+1}\}$.
    \end{itemize}
    If the arity of $R$ is $\leq 4$, then we are in the base case and there is nothing to show. So suppose that the arity of $R$ is $> 4$.

\case{1}{$\psi$ is connected}.
   If the last disjunct is not present in $\psi$, then $\psi$ is logically equivalent to $\NAE(x_i,y_i \mid i \in [k])$,
   which is preserved by $\pp$,
   a contradiction.
   So the last conjunct must be present in $\psi$.
   If $x_{k+1}$ is identified with some of the other variables, then $\psi$ is equivalent to a clause of the form \eqref{eq:CNF} and again defines a relation preserved by $\pp$, a contradiction.
   So $x_{k+1}$ must be distinct from the remaining variables.
   By connectedness of $\psi$, we may without loss of generality assume that $\psi$ is of the form   
   $  
    (z_{\ell+1} \geq z_1) \vee \NAE(z_1,\dots,z_\ell),
   $
   where $\{z_1,\dots, z_{\ell+1}\}\subseteq \{x_1,y_1,\dots, x_{k+1},y_{k+1}\}$ and $\ell \geq 4$ because the arity of $R$ is greater than $4$. 
   Since $\phi$ is in reduced CNF, all variables $z_1, \dots, z_{\ell+1}$ are distinct
   and there exists a tuple $\bar{s}\in R$ satisfying  $(z_{\ell +1} \geq z_{1} =\cdots = z_{\ell})$. 

 We claim that there exists a tuple $\bar{t}\in R$ satisfying  $\big(z_{\ell +1}< \min(z_i\mid i\in [\ell])\big)$.
   To see this, suppose this is not the case.
   Then $\phi$ entails $\big(z_{\ell +1}\geq \min(z_i\mid i\in [\ell])\big)$. 
   Note that whenever $\big(z_{\ell +1}\geq \min(z_i\mid i\in [\ell])\big)$ holds,  $\psi$ holds as well.  
   Since $\pp$ preserves $\big(z_{\ell +1}\geq \min(z_i\mid i\in [\ell])\big)$, the fact that $R$ is not preserved by $\pp$ cannot be witnessed by violating $\psi$, a contradiction.
  Thus there exists a tuple $\bar{t}\in R$ satisfying $\big(z_{\ell +1}< \min(\bar{t}\of{z_i}\mid i\in [\ell])\big)$.   

  For every such $\bar{t}$ we can select an automorphism $\alpha$ of $(\mathbb{Q};<)$ with $\alpha \bar{t}\of{z_{\ell +1}} = 0$. Then it is easy to see that $\pp(\alpha\bar{t},\bar{s})$ violates $\psi$.
   Recall that the variables $z_1$ and $z_2$ are distinct in $\psi$. 
   We now make a case distinction based on the (non)-existence of a tuple $\bar{t}'\in R$ satisfying
    \begin{align}
     \big(z_{\ell +1} < \min(z_i\mid i\in [\ell])\big) \wedge (z_{1} = z_{2}).
        \label{eq:special_condition}
    \end{align}

   \case{1.1}{there exists $\bar{t}'\in R$ satisfying~\eqref{eq:special_condition}}.
   Then existentially quantifying $z_1$ and adding $(z_1=z_2)$ as a conjunct yields a pp-definable relation $R'$ of a strictly smaller arity that is not preserved by $\pp$, since $\pp(\alpha \bar{t}', \bar{s})$ does not satisfy $\psi$. 
   This completes the induction step. 

   \case{1.2}{no $\bar{t}'\in R$ satisfies~\eqref{eq:special_condition}}. 
   Then $\phi$ entails $\big(z_1\neq z_2 \vee z_{\ell +1}\geq \min(z_1,\dots,z_{\ell})\big)$.
  By Lemma~\ref{lemma:ElimMin}, this clause can be replaced by $(z_1\neq z_2 \vee z_{\ell+1}\geq z_i)$ for some $i\in [\ell]$. 
   It is easy to see that $\pp(\alpha\bar{t},\bar{s})$ violates
   $(z_1\neq z_2 \vee z_{\ell +1}\geq z_i)$
   for every $\alpha\in \Aut(\mathbb{Q};<)$ satisfying $\alpha \bar{t}\of{z_{\ell +1}} = 0$.
   Therefore, the projection of $R$ to the entries $\{z_1,z_2, z_i,z_{\ell +1}\}$ is a pp-definable relation of arity $\leq 4$ that is not preserved by $\pp$. This completes the induction step.

\case{2}{$\psi$ is not connected}.
Let $P$ be a partition of $[k+1]$ such that $x_i$ and $x_j$ are connected in $\psi$ if and only if $i,j\in C$ for some $C\in P$.
It is easy to see that $\psi$ is equivalent to the formula
\begin{align*}
 (x_{k+1} \geq y_{k+1})  \vee  \Big(\bigvee\nolimits_{C\in P}\NAE(x_{i},y_{i} \mid i \in C\setminus\{k+1\})  \Big),
\end{align*} 
where the first disjunct is optional.  
For the purpose of this proof, we may assume that the first disjunct  is present, it will be clear that the argument also works if it is not. 
Recall that some variables might be identified, in particular, the above equation does not assert anything about the containment of $k+1$ in any set in $P$. 

 Let $C_1,C_2\in P$ be arbitrary and distinct, such that:
    \begin{itemize}
        \item if possible, we select $C_1, C_2\in P$ so that $|C_1|>1$ or $|C_2|>1$,
        \item if all sets in $P$ are singletons, then we select $C_1,C_2\in P$ so that $k+1\in C_1\cup C_2$.
    \end{itemize} 
Let $i_1\in C_1$,$i_2\in C_2$ be arbitrary such that $i_j \neq k+1$ for both $j \in \{1,2\}$, unless $C_j=\{k+1\}$.   
 
We now make a case distinction based on the (non)-existence of an index $j\in \{1,2\}$ and a tuple $\bar{t}\in R$ that 
\begin{enumerate}
    \item \label{item:building_bridges_1}  does not satisfy the $i$-th disjunct in \eqref{eq:OrdHorn} for every $i\in [k+1]\setminus  C_{3-j}$, 
    \item \label{item:building_bridges_2} satisfies $\big(x_{i_j}<\min(x_i,y_i \mid i\in  C_{3-j})\big)$.
\end{enumerate}  

\case{2.1}{for both $j\in \{1,2\}$, no $\bar{t}\in R$ witnesses items~\ref{item:building_bridges_1} and \ref{item:building_bridges_2}}.
Then, for both $j\in \{1,2\}$, $\phi$ entails
\begin{align*}
     \Big(\bigvee\nolimits_{C\in P\setminus \{ C_{3-j}\}}\NAE(x_{i},y_{i} \mid i \in C ) \Big)     \vee   \big( x_{i_j}\geq \min(x_{i},y_{i} \mid i \in C_{3-j})   \big)
\end{align*} 
if $k+1\in C_{3-j}$, and  
\begin{align*}
     &  \Big(\bigvee\nolimits_{C\in P\setminus \{ C_{3-j}\}}\NAE(x_{i},y_{i} \mid i \in C\setminus\{k+1\})\Big)  \vee   \big(x_{i_j}\geq \min(x_{i},y_{i} \mid i \in C_{3-j})\big)     \\ & {}   \vee (x_{k+1} \geq y_{k+1}) 
\end{align*} 
otherwise,  where we assume the same identification of variables as in $\psi$.
But then  we can replace $\psi$ with the clause $\psi'$ defined by 
\begin{align*}
   &   \Big(\bigvee\nolimits_{C\in P\setminus \{C_1,C_2\}}\NAE(x_{i},y_{i} \mid i \in C\setminus\{k+1\})\Big) \vee    \NAE(\{x_{i},y_{i} \mid i \in C_1\cup C_2\}\setminus\{y_{k+1}\})    
\\ &  {}  \vee     (x_{k+1} \geq y_{k+1})  
\end{align*}
without changing $R$ (while keeping the same identification of variables as in $\psi$).
This decreases the number of connected components in $P$ by one, even in the case where $(k+1) \in C_1\cup C_2$.
Replacing $\psi$ in $\phi$ by $\psi'$ we obtain a new definition of $R$ with smaller sum of the number of the connected components of all the clauses and complete the induction step.

\case{2.2}{there exist $j\in \{1,2\}$ and $\bar{t}\in R$ witnessing items~\ref{item:building_bridges_1} and~\ref{item:building_bridges_2}}.
We fix any such $j\in \{1,2\}$ and $\bar{t}\in R$ for the rest of the proof. Let $\alpha$ be an arbitrary automorphism of $(\mathbb{Q};<)$ such that $\alpha \bar{t}\of{x_{i_j}}=0$.
Since $\phi$ is reduced, there is a tuple $\bar{s}\in R$  that does not satisfy the $i$-th disjunct in \eqref{eq:OrdHorn} for every $i\in [k+1]\setminus  C_{j}$.
Note that $\pp (\alpha \bar{t}, \bar{s}) \notin R$. 

\case{2.2.1}{$|C_1|=|C_{2}|=1$}. 
By our choice of $C_1$ and $C_2$, we have $|C|=1$ for every $C\in P$ and  $k+1\notin [k+1]\setminus C_1\cup C_2$.  
The relation $R'$ obtained from $R$ by conjunction with  equalities $ x_{i}=y_i$ 
and existentially quantifying all variables $x_{i},y_{i}$ for every $i\in [k+1]\setminus (C_{1}\cup C_2)$ is at most $4$-ary and since $\pp(\alpha \bar{t}, \bar{s}) \notin R$, $R'$ is not preserved by $\pp$ as well. This completes the induction step.

\case{2.2.2}{$|C_j|>1$}. Let $R'$ be the relation obtained by adding the constraint $(x_i=y_i)$ for some $i\in C_j\setminus \{k+1\}$ and then existentially quantifying over one of the variables $x_i,y_i$ that is not identified with $x_{k+1}$ or $y_{k+1}$; such a variable exists, otherwise $\psi$ is not reduced or holds always true. 
Then $R'$ is a relation of strictly smaller arity, which is pp-definable in $(\mathbb{Q};R)$. 
 Since $|C_j|>1$ and $\phi$ is reduced, we may assume that  $\bar{s}$ does not satisfy the $i$-th disjunct in \eqref{eq:OrdHorn} (it satisfies some other disjunct with an index in $C_j$).
 Recall that  $\pp(\alpha \bar{t}, \bar{s}) \notin R$,
 which entails that $R'$ is not preserved by $\pp$ as well. 
This completes the induction step.

\case{2.2.3}{$|C_j|=1$ and $|C_{3-j}|>1$}.  
Let $\ell\in C_{3-j}\setminus \{k+1\}$ be arbitrary. 
We now make another case distinction similar to the one for $\bar{t}$. This time it is  based on the (non)-existence of a tuple $\bar{t}'\in R$ witnessing items~\ref{item:building_bridges_1} and \ref{item:building_bridges_2} for the particular $j\in \{1,2\}$ from Case~2.2 and such that $\bar{t}'$  additionally satisfies $(x_{\ell}=y_{\ell})$.
We refer to this condition by $(\ast)$.

\case{2.2.3.1}{there exists $\bar{t}'\in R$ satisfying $(\ast)$}. 
Let $\alpha'$ be an arbitrary automorphism of $(\mathbb{Q};<)$ such that $\alpha' \bar{t}'\of{x_{i_j}}=0$.
Let $R'$ be a relation obtained by adding the constraint $(x_\ell=y_\ell)$ and existentially quantifying over one of the variables $x_\ell,y_\ell$ that is not identified with $x_{k+1}$ or $y_{k+1}$.
Note that $\pp(\alpha' \bar{t}', \bar{s}) \notin R$, so $R'$ is also not preserved by $\pp$.
Since $R'$ is a relation of strictly smaller arity and pp-definable in $(\mathbb{Q};R)$, 
this completes the induction step. 

\case{2.2.3.2}{no $\bar{t}'\in R$ satisfies $(\ast)$}.
Then $\phi$ entails 
\begin{align*}
 \Big(\bigvee\nolimits_{C\in P\setminus \{C_{3-j}\}}\NAE(x_{i},y_{i} \mid i \in C )\Big)    \vee    \big(x_{i_j}\geq \min(x_{i},y_{i} \mid i \in C_{3-j}) \big) \vee     (x_\ell\neq y_\ell)  
\end{align*}
if $k+1\in C_{3-j}$ and  
\begin{align*}
    & \Big(\bigvee\nolimits_{C\in P\setminus \{C_{3-j}\}}\NAE(x_{i},y_{i} \mid i \in C\setminus\{k+1\})\Big)  
   \vee    \big(x_{i_j}\geq \min(x_{i},y_{i} \mid i \in C_{3-j})\big) \\ 
 &  {}  \vee  (x_\ell\neq y_\ell)  \vee   (x_{k+1} \geq y_{k+1})  
\end{align*}
otherwise,   where we assume the same identification of variables as in $\psi$.
Let us denote the clause by $\psi'$.
 By Lemma~\ref{lemma:ElimMin}, we can simplify 
 $\psi'$ by replacing $\big(x_{i_j}\geq \min(x_{i},y_{i} \mid i \in C_{3-j})\big)$ with $(x_{i_j}\geq z)$ for some $z\in  \{x_{i},y_{i} \mid i \in C_{3-j}\}$.
We add  $\psi'$ to $\phi$ and reduce $\phi$; let $\psi''$ be the reduced version of $\psi'$. Note that the resulting formula still defines $R$. 
By the existence of $\bar{t}\in R$ and non-existence of $\bar{t}'$, $\psi''$ must contain the disjunct $(x_{\ell}\neq y_{\ell})$.
Since $\phi$ is reduced and $|C_{3-j}|>1$, there exists $\bar{r}\in R$ satisfying item~\ref{item:building_bridges_1} above Case~2.1 and $(x_{\ell}=y_{\ell})$. 
Hence, $\bar{r}$ satisfies $(x_{i_j} \geq z)$ and $\psi''$ must also contain the disjunct $(x_{i_j}\geq z)$.
Since $\psi''$ is reduced, there exists $\bar{s}' \in R$ such that it does not satisfy any disjunct of $\psi''$ except for $(x_{i_j}\geq z)$.

\case{2.2.3.2.1}{$i_j\neq k+1$}. Then the $i_j$-th disjunct in $\psi$
is of the form $(x_{i_j}\neq y_{i_j})$.
If the $i_j$-th disjunct in $\psi$ is also contained in $\psi''$, then we consider the relation $R'$ obtained from $R$ by adding the equality $(x_{i_j}=y_{i_j})$ and existentially quantifying the variable $y_{i_j}$.
Otherwise, there is no condition imposed on $y_{i_j}$ in $\psi''$ and we consider the relation $R'$ obtained from $R$ by only existentially quantifying the variable $y_{i_j}$. 
 Recall that $\alpha(\bar{t}\of{x_{i_j}})=0$ and observe that $\pp(\alpha \bar{t}, \bar{s}')$ does not satisfy $\psi''$ and hence is not in $R$. Hence, $R'$ is also not preserved by $\pp$ and of of strictly lower arity than $R$.
 This completes the induction step.  

\case{2.2.3.2.2}{$i_j= k+1$}.
 By Lemma~\ref{lemma:ElimMin} and since $\psi''$ is reduced, it cannot contain the disjuncts  $(x_{i_j}\geq z)$ and  $(x_{k+1}\geq y_{k+1})$ at the same time. Since $\psi'$ already contains the disjunct $(x_{i_j}\geq z)$, it cannot contain the disjunct $(x_{k+1}\geq y_{k+1})$
and therefore it does not impose any condition on the variable $y_{k+1}$.
Let $R'$ be the relation obtained from $R$ by existentially quantifying the variable $y_{k+1}$.
Then $R'$ is of strictly lower arity than $R$ and it is not preserved by $\pp$ since $\pp(\alpha \bar{t}, \bar{s}')$ does not satisfy $\psi''$ and hence is not in $R$. 
This completes the induction step.  
 \end{proof}

\subsection{Proof of Lemma~\ref{lem:notpp}\label{appendix:lemma:notpp}}

\NotPP*

Based on the catalogue of temporal relations in Table~\ref{table:catalogue}, we introduce the following notions. 
We say that that a ternary temporal relation $R$ is 
\begin{itemize}
\item  an \emph{M-relation}  if $\GMichal \subseteq R \subseteq \Michal$,
\item a  \emph{dual M-relation} if $\dualGMichal\subseteq R \subseteq \DMichal$,
\item a   \emph{strict M-relation} if $\GVSMichal \subseteq R \subseteq  \SMichal$,
\item a  \emph{dual strict M-relation} if $\dualGVSMichal \subseteq R \subseteq \DSMichal$,
\end{itemize}
We say that a quaternary temporal relation $R$ is 
\begin{itemize}
  \item a \emph{separated disjunction of disequalities} if $\LessSepGDis\subseteq R\subseteq \Dis$,
  \item a \emph{separated strict M-relation} if 
  \begin{itemize}
      \item either $\LessSepGSMichal\subseteq R$ or $\GreatSepGSMichal\subseteq R$, and
      \item $R\subseteq \SepSMichal$,
  \end{itemize}  
  \item a \emph{separated M-relation} if
  \begin{itemize}
      \item either $\LessSepGMichal\subseteq R$ or $\GreatSepGMichal\subseteq R$, and
      \item neither $R\subseteq \SepSMichal$ nor $R\subseteq \SepDima$, and
  \item    $R\subseteq \SepMichal$.
  \end{itemize}  
\end{itemize}  

 \begin{table}[t]   \caption{A catalogue of temporal relations for the proof of Lemma \ref{lem:notpp}.}
    \label{table:catalogue} \footnotesize
    \begin{tabular}{ll} \emph{Name} & \emph{Definition} \\ \toprule
 $\Dima(x_1, x_2, x_3) $   &  $ (x_1 \neq x_2 \vee x_2 = x_3)$ \\ \midrule
 $\SepDima(x_1, x_2, x_3,x_4)  $   &  $ (x_1 \neq x_2 \vee x_3 = x_4)$ \\ \midrule  
      $\Dis(x_1,x_2,x_3,x_4) $   & $  (x_1 \neq x_2 \vee x_3 \neq x_4)$ \\ \midrule  
       $\LessSepGDis(x_1, x_2, x_3, x_4) $   & $ (x_1 \neq x_2 \vee x_3 \neq x_4) \wedge (x_1 \leq x_2) \wedge (x_3 \leq x_4)  \wedge (\bigwedge\nolimits_{i,j \in \{ 1,2 \}}  x_i < x_{j+2})$ \\ \midrule  
        $\Michal(x_1, x_2, x_3)$   & $(x_1 \neq x_2 \vee x_2 \geq x_3)$ \\ \midrule  
   $\DMichal(x_1,x_2, x_3)$   & $(x_1 \neq x_2 \vee x_2 \leq x_3)$ \\ \midrule    
    $\SMichal(x_1,x_2, x_3) $   & $ (x_1 \neq x_2 \vee x_2 > x_3)$  \\ \midrule   
     $\DSMichal(x_1,x_2, x_3) $   & $(x_1 \neq x_2 \vee x_2 < x_3)$  \\ \midrule   
      $\GMichal(x_1, x_2, x_3) $   & $ (x_1 \neq x_2 \vee x_2 \geq x_3) \wedge (x_1 \geq x_2)$ \\ \midrule   
    $\dualGMichal(x_1, x_2, x_3) $   & $ (x_1 \neq x_2 \vee x_2 \leq x_3 ) \wedge (x_1 \leq x_2)$  \\ \midrule   
      $\GVSMichal(x_1, x_2, x_3)  $   & $  (x_1 \neq x_2 \vee x_2 > x_3) \wedge (x_1 \geq x_2) \wedge (\bigwedge_{i \in \{1,2\}}  x_i \neq x_3)$ \\ \midrule   
$\dualGVSMichal(x_1, x_2, x_3) $   & $ (x_1 \neq x_2 \vee x_2 <  x_3) \wedge (x_1 \leq x_2) \wedge (\bigwedge_{i \in \{1,2\}}  x_i \neq x_3)$  \\ \midrule      
      $\SepMichal(x_1,x_2,x_3,x_4) $   & $ (x_1\neq x_2\vee x_3\geq x_4)$  \\ \midrule    
      $\SepSMichal(x_1,x_2,x_3,x_4) $   & $ (x_1\neq x_2\vee x_3> x_4)$  \\ \midrule    
       $\LessSepGMichal(x_1, x_2, x_3, x_4) $   &  $(x_1 \neq x_2 \vee x_3 \geq x_4) \wedge (x_1 \geq x_2) \wedge  (\bigwedge\nolimits_{i,j \in \{ 1,2 \}}  x_i < x_{j+2} )$  \\ \midrule  
       $\GreatSepGMichal(x_1, x_2, x_3, x_4) $   &  $(x_1 \neq x_2 \vee x_3 \geq x_4) \wedge (x_1 \geq x_2) \wedge  (\bigwedge\nolimits_{i,j \in \{ 1,2 \}}  x_i >  x_{j+2} )$ \\ \midrule     
        $\LessSepGSMichal(x_1, x_2, x_3, x_4) $   & $ (x_1 \neq x_2 \vee x_3 > x_4) \wedge (x_1 \geq x_2)  \wedge (\bigwedge\nolimits_{i,j \in \{ 1,2 \}}  x_i < x_{j+2} )\wedge (x_3 \neq x_4) $ \\ \midrule  
      $\GreatSepGSMichal(x_1, x_2, x_3, x_4)   $   & $ (x_1 \neq x_2 \vee x_3 > x_4) \wedge (x_1 \geq x_2)   \wedge  (\bigwedge\nolimits_{i,j \in \{ 1,2 \}}  x_i >  x_{j+2} )\wedge (x_3 \neq x_4)$ \\ \bottomrule
    \end{tabular} 
\end{table}

\begin{proof}[Proof of Lemma~\ref{lem:notpp}] We first prove a short claim that providing us with tools for qpp-defining $\Zaneta$ using the relations introduced above.

 \begin{claim} \label{claim:short_tool}
The following statements hold:
\begin{enumerate}
\item \label{item1_trivial_observation} $\Michal$ pp-defines $\leq$ and qpp-defines $\neq$, $<$.
\item \label{item2_trivial_observation} A separated or dual M-relation together with $\neq$ pp-define a separated strict or dual strict M-relation, respectively. 
\item \label{item3_trivial_observation} A separated strict M-relation pp-defines a separated disjunction of disequalities.
\item \label{item4_trivial_observation} A dual strict M-relation together with $\Michal$ pp-define a separated disjunction of disequalities. 
\item \label{item5_trivial_observation} A separated disjunction of disequalities together with $\Michal$ qpp-define $\Zaneta$.
\end{enumerate}
\end{claim}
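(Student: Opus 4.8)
The plan is to prove the five statements in turn; they are essentially independent, and each reduces to writing down an explicit (quantified) primitive positive construction and then checking a lower and an upper set-inclusion for it, usually by splitting on which coordinates are forced equal. For item~\ref{item1_trivial_observation}: $\Michal(y,y,x)$ is equivalent to $x\leq y$ (a pp-definition), the qpp-formula $\forall z\,\Michal(x,y,z)$ is equivalent to $x\neq y$ (if $x=y$ it would force $x\geq z$ for every $z$, which is impossible), and hence $(x\leq y)\wedge(x\neq y)$ qpp-defines $x<y$. For item~\ref{item2_trivial_observation}: from a separated M-relation $R$, the pp-formula $R(x_1,x_2,x_3,x_4)\wedge(x_3\neq x_4)$ works, since on $\{x_3\neq x_4\}$ the disjunct $x_3\geq x_4$ of $\SepMichal$ becomes $x_3>x_4$ (giving containment in $\SepSMichal$), while $\LessSepGMichal\wedge(x_3\neq x_4)=\LessSepGSMichal$ and likewise for the $\mathrm{rl}$-versions, by direct comparison of definitions; for a dual M-relation $R(x_1,x_2,x_3)$ one instead takes $R(x_1,x_2,x_3)\wedge(x_2\neq x_3)\wedge(x_1\neq x_3)$, the extra $(x_1\neq x_3)$ being needed to recover the corresponding conjunct of $\dualGVSMichal$.

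For item~\ref{item3_trivial_observation}, assuming $\LessSepGSMichal\subseteq R'\subseteq\SepSMichal$ (the case of $\GreatSepGSMichal$ is analogous after a rearrangement of variables), I would take
\[
  R(x_1,x_2,x_3,x_4)\ :=\ \exists z\;\bigl(R'(x_2,x_1,x_4,z)\wedge R'(x_2,x_1,z,x_3)\bigr).
\]
Swapping the first two coordinates converts the $x_1\geq x_2$ carried by $\LessSepGSMichal$ into $x_1\leq x_2$, routing $x_3,x_4$ through $z$ in reversed order turns $x_3>x_4$ into $x_3<x_4$ (still inside $\Dis$), and the $z\neq x_3$, $z\neq x_4$ in $\LessSepGSMichal$ are harmless because $z$ is existentially quantified; then $\LessSepGDis\subseteq R\subseteq\Dis$ follows by a short case split on whether $x_1=x_2$. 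For item~\ref{item4_trivial_observation}, writing $S$ for the dual strict M-relation (so $\dualGVSMichal\subseteq S\subseteq\DSMichal$, whence $S(a,a,c)$ is equivalent to $a<c$), I would take
\[
  R(x_1,x_2,x_3,x_4)\ :=\ \exists w\;\bigl(S(x_1,x_2,x_3)\wedge\Michal(x_1,x_2,w)\wedge S(x_3,x_4,w)\bigr).
\]
If $x_1=x_2$ and $x_3=x_4$ the atoms force $x_1<x_3$, $w\leq x_1$ and $x_3<w$, which is contradictory, so $R\subseteq\Dis$; and for a tuple of $\LessSepGDis$, which has the shape $a\leq b<c\leq d$ with not both equalities, a witness $w$ is produced by a three-way case distinction (take $w$ below $a$ if $a=b$, above $c$ if $c=d$, and otherwise distinct from $c$ and $d$), giving $\LessSepGDis\subseteq R$.

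I expect the main obstacle to be item~\ref{item5_trivial_observation}, the only one genuinely needing universal quantifiers. Here $\LessSepGDis\subseteq R\subseteq\Dis$ and one must qpp-define $\Zaneta$ from $R$ and $\Michal$. Since $\Zaneta(x_1,y_1,x_2,y_2)$ equals $(x_1\neq y_1\vee x_2\neq y_2)\wedge(y_1<y_2)$ and $<$ is already qpp-definable from $\Michal$ by item~\ref{item1_trivial_observation}, it suffices to qpp-define the relation $(x_1\neq y_1\vee x_2\neq y_2)$. The trouble is that $R$ only agrees with this relation on the strongly separated cone $\{x_1\leq y_1<x_2\leq y_2\}$, where it coincides with $\LessSepGDis$, so a bare instantiation of $R$ is far too weak. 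The plan is to \emph{relocate} an arbitrary quadruple into that cone: guess (existentially) copies $(a,b)$ of $\{x_1,y_1\}$ and $(c,d)$ of $\{x_2,y_2\}$ with $a\leq b$, $c\leq d$, $b<c$ and assert $R(a,b,c,d)$, which then reduces to $(a\neq b\vee c\neq d)$ and hence to $(x_1\neq y_1\vee x_2\neq y_2)$ \emph{provided} the copies preserve the equality patterns of the two pairs. The catch is that ``$x_1=y_1\Rightarrow a=b$'' is of $\Dima$-type, and $\struct{B}$ does not pp-define $\Dima$; the way out is to spend one \emph{universal} variable per pair to decide which coordinate plays the role of the minimum, so that the existential player can build the order-isomorphic copy no matter the universal player's choice, while the inclusion $R\subseteq\Dis$ still forbids the sole bad case of both pairs being constant. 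Verifying that the resulting qpp-sentence defines exactly $\Zaneta$ — that the existential player wins precisely when $(x_1\neq y_1\vee x_2\neq y_2)$ holds — is where the real work lies.
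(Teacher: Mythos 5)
Items~\ref{item1_trivial_observation}--\ref{item4_trivial_observation} of your proposal are correct, and your constructions for items~\ref{item3_trivial_observation} and~\ref{item4_trivial_observation} are valid alternatives to the paper's (the paper uses $\exists a,b\ R(x_2,x_1,a,b)\wedge R(x_4,x_3,b,a)$ for item~\ref{item3_trivial_observation}, which handles the $\LessSepGSMichal$ and $\GreatSepGSMichal$ orientations with one formula, whereas your composition through a single $z$ only covers the $\mathrm{lr}$ case and the ``analogous rearrangement'' for $\GreatSepGSMichal$ is not the literal mirror image — when $x_3=x_4$ the naive reversal forces $z<x_1$ and $z>x_2$ simultaneously — so you would need to spell out the correct routing, e.g.\ $\exists z\,(R'(x_4,x_3,x_2,z)\wedge R'(x_4,x_3,z,x_1))$). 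These are repairable details.

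Item~\ref{item5_trivial_observation}, however, is a genuine gap: no formula is produced, and you yourself defer ``the real work.'' Two things go wrong with the plan. First, you reduce to qpp-defining the full relation $(x_1\neq y_1\vee x_2\neq y_2)$, i.e.\ $\Dis$ without any order constraint; that is strictly more than the claim asks for, and the order conjunct $(y_1<y_2)$ in $\Zaneta$ is not decoration — it is precisely what makes a purely existential definition possible. Second, the proposed mechanism (universal variables choosing which coordinate of each pair is the minimum, so the existential player can build order-isomorphic copies) never explains how to enforce the $\Dima$-type implication ``$x_1=y_1\Rightarrow a=b$'' that you correctly identify as the obstacle; spending universal variables does not obviously circumvent it, and since $\Dima$ is not pp-definable here this is exactly where such attempts tend to die. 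The paper's solution is short and uses no universal quantifiers at this stage: it shows that $(y_1\neq x_1\vee y_2\neq x_2)\wedge(x_1<x_2)$ is defined by
$\exists v_1\exists v_2\ \LessSepGDis(x_1,v_1,x_2,v_2)\wedge\Michal(y_1,x_1,v_1)\wedge\Michal(y_2,x_2,v_2)$.
The point is that $\LessSepGDis$ supplies the guard $(x_1\leq v_1)$, so if $y_1=x_1$ then $\Michal(y_1,x_1,v_1)$ forces $x_1=v_1$, whence $\LessSepGDis$ forces $x_2<v_2$, whence $\Michal(y_2,x_2,v_2)$ forces $y_2\neq x_2$. This conditional pinning of the ``relocated copy'' $v_1$ to $x_1$ is the idea your sketch is missing; without it (or an equivalent device) item~\ref{item5_trivial_observation} is not established.
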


\begin{claimproof} 
Items~\ref{item1_trivial_observation} and~\ref{item2_trivial_observation} are straightforward to verify. 

For item~\ref{item3_trivial_observation}, let $R$ be a separated strict $M$-relation.
Then the pp-definition is given by
$
\exists a,b\ R(x_2,x_1,a,b)\wedge R(x_4,x_3,b,a).
$

For item~\ref{item4_trivial_observation}, let $R$ be a dual strict $M$-relation.
Then the pp-definition is given by 
$\exists h \ \Michal(x_1,y_1,h) \wedge R(x_2,y_2,h) \wedge  (x_1\leq x_2).$

For item~\ref{item5_trivial_observation}, observe first that using $<$, $\leq$, and a separated disjunction of disequalities, we can pp-define $\LessSepGDis$. We claim that, if $\Michal$ is available, then $(y_1 \neq x_1 \vee y_2 \neq x_2) \wedge (x_1 < x_2)$ is equivalent to 
$
\phi \coloneqq \exists v_1 \exists v_2~\LessSepGDis(x_1, v_1, x_2, v_2) \wedge \Michal(y_1, x_1, v_1) \wedge \Michal(y_2, x_2, v_2)
$.

It is easy to see that every tuple satisfying $(y_1 \neq x_1 \vee y_2 \neq x_2) \wedge (x_1 < x_2)$ also satisfies $\phi$. We show the converse direction. The claim is easy to see for each tuple satisfying $\phi\wedge (y_1 \neq x_1)\wedge (y_2 \neq x_2)$.
So consider instead a tuple satisfying  $\phi \wedge (y_1=x_1)$. Then, by the middle constraint, it satisfies $(x_1 \geq v_1)$.
By the first constraint we get $(x_1 = v_1)\wedge (x_2 < v_2)$, and then, by the third constraint, we get $(y_2 \neq x_2)$. The argument for $\phi \wedge (x_2 = y_2)$ is symmetric. 
\end{claimproof}

Next, we prove a long technical claim that essentially classifies OH relations that are not preserved by $\pp$ according to their shape. 
Note that the statement of Lemma~\ref{lem:notpp} follows directly from Claim~\ref{claim:long_boring_claim} combined with Claim~\ref{claim:short_tool}.

\begin{claim} \label{claim:long_boring_claim} If $\struct{B}$ is an OH structure that is not preserved by $\pp$, then it pp-defines one of the following:
    \begin{itemize}
    \item a dual M-relation, 
    \item a separated M-relation,
    \item a dual strict M-relation,  
    \item a separated strict M-relation, 
    \item a separated disjunction of disequalities,
    \item the relation $\Dima$.
    \end{itemize} 
\end{claim}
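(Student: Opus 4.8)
The plan is to reduce, via Lemma~\ref{lemma:arity_four}, to a pp-definable relation $R$ of arity at most $4$ that is not preserved by $\pp$, and then carry out an ``educated brute force'' on a reduced Ord-Horn definition of $R$, organised exactly along the lines of the proof of Lemma~\ref{lemma:arity_four}. By Proposition~\ref{ppsynt} every temporal relation of arity at most $2$ has a CNF-definition over $\{\neq,\geq\}$ all of whose clauses are of the form~\eqref{eq:CNF} and is therefore preserved by $\pp$, so $R$ has arity $3$ or $4$. Next I would fix a reduced Ord-Horn definition $\phi$ of $R$ whose variables cover all coordinates of $R$ (passing to a projection otherwise, as in the proof of Lemma~\ref{lemma:arity_four}); since $R$ is not preserved by $\pp$, there is a clause $\psi$ of $\phi$ and tuples $\bar t_1,\bar t_2\in R$ with $\pp(\bar t_1,\bar t_2)$ violating $\psi$.

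The core of the argument is a case distinction on the shape of $\psi$. By Proposition~\ref{ppsynt}, $\pp$ preserves every temporal relation each of whose clauses is of the form~\eqref{eq:CNF}; in particular it preserves $\leq$, $<$, $\neq$, $\NAE$, $\Michal$, and every connected disequality clause, since such a clause is logically equivalent to an $\NAE$-clause on a common variable (this is the observation used in the ``$\psi$ connected'' case of the proof of Lemma~\ref{lemma:arity_four}). Hence $\psi$, being reduced and violated by $\pp$, cannot be of any of these forms, and after discarding tautological clauses one is left, at arity at most $4$ and up to renaming of variables and the order-reversing symmetry, with three base configurations: $\psi$ is (i) $\DMichal$-shaped, i.e.\ equivalent after identifications to $(x_1\neq x_2\vee x_2\leq x_3)$; (ii) $\SepMichal$-shaped, i.e.\ equivalent to $(x_1\neq x_2\vee x_3\geq x_4)$ with $x_1,x_2,x_3,x_4$ pairwise distinct; or (iii) $\Dis$-shaped, i.e.\ $(x_1\neq x_2\vee x_3\neq x_4)$ with $x_1,x_2,x_3,x_4$ pairwise distinct. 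Every remaining identification pattern among the at most four variables of $\psi$ either makes $\psi$ of the form~\eqref{eq:CNF}, contradicting non-preservation, or reduces -- by adding an equality between two coordinates and existentially projecting -- to a pp-definable relation of strictly smaller arity that is still not preserved by $\pp$, so the induction of Lemma~\ref{lemma:arity_four} applies.

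In each base configuration I would place $R$, or a relation pp-definable in $(\mathbb{Q};R)$ obtained by identifying coordinates and existentially projecting, between the matching pair of relations from Table~\ref{table:catalogue}. The upper bound is immediate from $R\models\psi$: in case~(i) $R\subseteq\DMichal$, in case~(ii) $R\subseteq\SepMichal$, in case~(iii) $R\subseteq\Dis$. The lower bound -- that $R$ contains $\dualGMichal$, respectively $\LessSepGMichal$ or $\GreatSepGMichal$, respectively $\LessSepGDis$ -- is proved as in Lemma~\ref{lemma:arity_four}: reducedness of $\phi$ forces, for each disjunct of $\psi$ and each order atom consistent with $\psi$, a tuple of $R$ witnessing that this atom is not entailed by $\phi$, and one assembles the required tuples from these using homogeneity of $(\mathbb{Q};<)$ and, where needed, Lemma~\ref{lemma:ElimMin}. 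The strict variants $\dualGVSMichal\subseteq R\subseteq\DSMichal$ and $\LessSepGSMichal\subseteq R$ or $\GreatSepGSMichal\subseteq R$ appear precisely when $\phi$ additionally entails a disequality (such as $(x_1\neq x_3)$, $(x_2\neq x_3)$, or $(x_3\neq x_4)$) that sharpens the order atom of $\psi$ from $\leq$ to $<$; absent such a sharpening one lands in the separated non-strict $M$ case, where the side conditions $R\not\subseteq\SepSMichal$ and $R\not\subseteq\SepDima$ hold by the choice of witnessing tuples. Finally, in case~(i), if $\phi$ also entails the opposite order atom $(x_1\neq x_2\vee x_2\geq x_3)$ then $R\subseteq\Dima$ and reducedness forces equality, so $\struct{B}$ pp-defines $\Dima$; the relation $\Dima$ also arises (after identifying coordinates) from the degenerate sub-configurations of case~(iii) in which $\phi$ entails enough equalities to collapse one of the two disequalities. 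I expect the principal obstacle to be making the case analysis genuinely exhaustive -- one must run through every identification pattern on the at most four variables of $\psi$ -- together with the bookkeeping of the numerous, individually routine, lower-bound tuple constructions.
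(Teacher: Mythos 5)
Your overall plan (reduce to arity $\leq 4$ via Lemma~\ref{lemma:arity_four}, then classify by the shape of a clause violated by $\pp$) matches the paper's strategy, but the execution has two serious gaps that the paper handles with quite a bit of additional machinery.

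First, the claim that ``after discarding tautological clauses one is left \ldots with three base configurations'' is not correct. Beyond the $\DMichal$-, $\SepMichal$- and $\Dis$-shaped clauses, a violated OH clause on at most four variables can also have an \emph{equality} disjunct (e.g.\ $(x_1\neq x_2\vee x_2 = x_3)$ or $(x_1\neq x_2\vee x_3 = x_4)$ --- the paper's Cases~1 and~4), a \emph{strict} order disjunct (Cases~2 and~6), or \emph{three} disjuncts over four variables such as $(x_4\neq x_1\vee x_1\neq x_2\vee x_2\leq x_3)$ (Cases~8--13). You wave these away by saying that any other identification pattern ``reduces, by adding an equality between two coordinates and existentially projecting, to a pp-definable relation of strictly smaller arity that is still not preserved by $\pp$.'' But that reduction is exactly what Lemma~\ref{lemma:arity_four} already exhausted: adding equalities and projecting can destroy non-preservation by $\pp$, and the induction in Lemma~\ref{lemma:arity_four} only guarantees arity $\leq 4$, not one of your three canonical shapes. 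The equality-disjunct cases are also precisely where $\Dima$ arises in the paper (via a delicate subcase split on whether $\leq$, respectively $<$, is pp-definable from the projection), and this is not captured by your description of the $\Dima$ case.

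Second, the lower-bound containments ($\dualGMichal\subseteq R$, $\LessSepGMichal\subseteq R$ or $\GreatSepGMichal\subseteq R$, etc.) do not follow from a merely ``reduced'' OH definition, i.e.\ from the impossibility of dropping disjuncts. The paper instead starts from the conjunction of \emph{all} OH clauses entailed by $R$ and then closes it under the six pruning rules of Table~\ref{table:syntactic_pruning_rules} in a fixed order; the rules are designed so that whenever an ugly clause survives, the \emph{absence} of its subclauses and of certain pretty clauses from the pruned formula forces the existence of witness tuples (e.g.\ a tuple with $(x_3 < x_1)$ and a tuple with $(x_3 < x_2)$ in Case~1), which are then combined with $\elel$, $\dual\elel$, $\lex$-type operations and automorphisms of $(\mathbb{Q};<)$ to realize the required lower bound. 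Your assertion that ``reducedness of $\phi$ forces, for each disjunct of $\psi$ and each order atom consistent with $\psi$, a tuple of $R$ witnessing that this atom is not entailed by $\phi$'' is simply not a consequence of reducedness: a reduced definition need not list clauses that are entailed by combinations of its other clauses, so the non-entailment you need is not visible syntactically. Without a mechanism like the pruning rules (or an explicit semantic argument replacing them), the case analysis cannot be closed.
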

Let $\phi$ be an OH formula and let $\psi$ be an arbitrary clause in $\phi$. 
A \emph{subclause} of $\psi$ is a clause containing a subset of literals in $\psi$.
Typically, one views OH formulas over the signature $\{\neq,\geq\}$. Here, we
view them over the signature $\{\neq, >,=\}$, i.e., the disjunct $(x_{k+1}\geq y_{k+1})$ in \eqref{eq:CNF} stands for $(x_{k+1}>y_{k+1}) \vee (x_{k+1}=y_{k+1})$.
We do not need to view $(x_{i}\neq y_{i})$ as $(x_{i}<y_{i})\vee (x_{i}>y_{i})$, however, as this decomposition has no influence on the proof.
A subclause of a clause of the form \eqref{eq:CNF} may contain $(x_{k+1}>y_{k+1})$ or $(x_{k+1}=y_{k+1})$ and all other disjuncts.
Note that $(x_1\neq y_1 \vee \cdots \vee x_k\neq y_k \vee x_{k+1} > y_{k+1})$ is equivalent to
\begin{align*}
      (x_1\neq y_1 \vee \cdots \vee x_k\neq y_k \vee x_{k+1} \geq y_{k+1} )  \wedge  ( x_1\neq y_1 \vee \cdots \vee x_k\neq y_k \vee x_{k+1} \neq y_{k+1} ),
\end{align*} 
and $(x_1\neq y_1 \vee \cdots \vee x_k\neq y_k \vee x_{k+1} = y_{k+1})$ is equivalent to
\begin{align*}
     (x_1\neq y_1 \vee \cdots \vee x_k\neq y_k \vee x_{k+1} \geq y_{k+1} )  
 \wedge   ( x_1\neq y_1 \vee \cdots \vee x_k\neq y_k \vee y_{k+1} \geq x_{k+1} ).
\end{align*}  
Therefore, our convention about subclauses from above does not lead us out of the OH fragment.
We say that a clause is \emph{pretty} if it is preserved by $\pp$, and 
and  \emph{ugly} otherwise. 
Note that every clause containing at most two variables is pretty.
\begin{claimproof}  
By Lemma~\ref{lemma:arity_four}, we may assume that $\struct{B}$ contains only a single relation $R$ of arity $4$. 
In the following, let $\phi$ be the conjunction of all OH clauses entailed by $R(x_1, x_2, x_3, x_4)$;  clearly, there are only finitely many such clauses.
Note that, in contrast to the proof of Lemma~\ref{lemma:arity_four}, here $x_1$, $x_2$, $x_3$,  $x_4$
all represent different variables.
Without loss of generality, we may assume that $\phi$ contains no clauses of the form of the form $(x=y)$.
 Recall that we write that a tuple $\bar{t} \in R$   satisfies a formula $\psi(x_1,x_2,x_3,x_4)$ if $(\mathbb{Q};R)\models \psi(\bar{t}\of{1},\bar{t}\of{2}, \bar{t}\of{3}, \bar{t}\of{4})$.
In the present proof, we allow ourselves to also use instances of the constant $0$ in such formulas.
 
We close $\phi$ under the application of the five rules in Table~\ref{table:syntactic_pruning_rules},  
in the \emph{precise} order in which they appear.
Observe that the formula $\phi$ resulting from this procedure defines $R$.
 Also note that whenever $\phi$ contains a clause of the form $(x>y)$, it also contains a clause of the form $(x \geq y)$.
If $\phi$  contains no ugly clauses, then $R$ is preserved by $\pp$.
Otherwise, depending on the ugly clauses $\psi$ it contains we show that $R$ defines one of the relations specified in the formulation of the lemma. 
 We first consider clauses over three variables and then turn to clauses over four variables. Without loss of generality, we consider only reduced ugly clauses, as non-reduced clauses always entail their reduced version. In the case of ternary clauses we  assume without loss of generality that they are imposed on the first three variables. Moreover, observe that the permutation of variables does not matter.   

\begin{table}  
    \caption{Syntactic pruning rules for the proof of Lemma~\ref{lem:notpp}. All of the conditions on the right-hand side are to be understood as in a conjunction.}
    \label{table:syntactic_pruning_rules} 
    \scriptsize
     \thinmuskip=1mu
     \medmuskip=1mu
    \thickmuskip=1mu
    \begin{tabular}{rV{4cm}l}   
    & \emph{Action} & \emph{Condition(s)} \\
    \toprule
         I: & remove $(x_1 \neq x_2 \vee x_3 = x_4) $  & 
       \hspace{-0.5em}\begin{tabular}[t]{l} 1. $\phi$ contains a clause equivalent to $(x_1 \neq x_2 \vee x_2 = x_3) $    \\
           2. $\phi$ contains a clause equivalent to $(x_1 \neq x_2 \vee x_2 = x_4)$  \\ 
           \end{tabular}    \\ \midrule
II: & remove any ugly  $\psi$  &   $\phi$ entails a subclause of a clause equivalent to $\psi$   \\ \midrule
III: & replace $(x_1 \neq x_2 \vee x_3 > x_4)$ with $ (x_1 \neq x_2 \vee x_3 \neq x_4)$  & 
        $\phi$ contains the clause $(x_3 \geq x_4)$    \\ \midrule 
    IV: & remove $(x_1 \neq x_2 \vee x_2 = x_3)$ & 
       \hspace{-0.5em}\begin{tabular}[t]{l} 1. $\phi$ contains the clause $(x_1 \neq x_2 \vee x_2 \geq x_3)$   \\
           2.  $\phi$ contains the clause $(x_3 \geq x_1)$ or $(x_3 \geq x_2)$ \\ 
           \end{tabular}   \\ \midrule
V: &remove $(x_1 \neq x_2 \vee x_2 < x_3)$ & 
       \hspace{-0.5em}\begin{tabular}[t]{l} 1. $\phi$ contains the clause $(x_1 \neq x_2 \vee x_2 \neq x_3)$   \\
           2.  $\phi$ contains the clause $(x_3 \geq x_1)$ or $(x_3 \geq x_2)$ \\ 
           \end{tabular}   \\ \midrule
      VI: & remove any ugly  $\psi$ & \hspace{-0.5em}\begin{tabular}[t]{l} 1.  $\psi$ is a reduced clause with four distinct variables   \\
           2.  $\psi$ is entailed by a conjunction of pretty clauses in $\phi$ \end{tabular}  
         \\ \bottomrule      
    \end{tabular} 
\end{table}

\case{1}{$\psi$ is of the form $(x_1 \neq x_2 \vee x_2 = x_3)$}.   
Observe first that $\phi$ contains no subclauses of $\psi$, and hence $R$ contains a tuple satisfying $(x_1 = x_2 = x_3)$. Indeed, recall that $\phi$ was initially defined as the conjunction of all Horn clauses entailed by $R(x_1,x_2,x_3,x_4)$.  By Rule II in Table~\ref{table:syntactic_pruning_rules},  
$(x_1 \neq x_2)$ is not contained in $\phi$, as otherwise $\psi$ would have been removed from $\phi$.
As $(x_1 \neq x_2)$ is a pretty clause, it could not have been removed by any of the rules.
Therefore, $(x_1 \neq x_2)$ was not a clause in $\phi$ before the rules were applied and hence is not entailed by $\phi$.
Consequently, there is a tuple in $R$ satisfying $(x_1 = x_2)$.  Due to the presence of $\psi$, the very same tuple has to satisfy $(x_2 = x_3)$.   

Observe that $\phi$ entails $(x_1\neq x_2 \vee x_2\geq x_3)$. 
Suppose that $\phi$ entails $(x_1 \leq x_3)$.
Then both these clauses are present in $\phi$ before the rules in Table~\ref{table:syntactic_pruning_rules} are applied.
Since they are pretty clauses, they cannot be removed from $\phi$ by any of the rules in Table~\ref{table:syntactic_pruning_rules}.
But then $\psi$ is removed from $\phi$ using Rule IV in Table~\ref{table:syntactic_pruning_rules}, a contradiction.
Hence, $\phi$ does not entail $(x_1 \leq x_3)$.
Analogously, $\phi$ does not entail $(x_2 \leq x_3)$.

It follows that $R$ contains a tuple $\bar{t}_1$ satisfying $(x_1 > x_3)$, and a tuple $\bar{t}_2$ satisfying $(x_2 > x_3)$. 
Recall that all injective polymorphisms of $(\mathbb{Q}, \leq)$ preserve $R$ by Proposition~\ref{prop:ordhorn}. 
Note that $R$ contains a tuple $\bar{t}$ with positive entries satisfying $(x_1 \neq x_2)$ and $\elel(\bar{t},\bar{t}_2)$ satisfies $(x_1 \neq x_2)$ and $(x_2 > x_3)$. 
Therefore, we may  w.l.o.g. assume that $\bar{t}_2$ satisfies $(x_1 \neq x_2)$.
The application of $(x,y)\mapsto \lex(\max(x,y),\lex(x,y))$
to $\bar{t}_1, \bar{t}_2$ yields a tuple satisfying $(x_3 < x_1 < x_2)$ or $(x_3 < x_2 < x_1)$.
Without loss of generality assume that the first case holds.  

Let $R'$ be the projection of $R$ to the first three coordinates.
We consider two cases.
If $R'$ pp-defines $\leq$, then $\exists v~R'(x_1,x_2,v) \wedge (v \leq x_1)\wedge (x_1 \leq x_2) \wedge (v \leq x_3)$ specifies a relation pp-definable in $(\mathbb{Q};R)$. We show that this relation equals $\dualGMichal$, which is a dual M-relation.
Clearly, every tuple satisfying the formula lies in $\dualGMichal$. To see the reverse inclusion, consider an arbitrary tuple $\bar{r}\in \dualGMichal$. If $\bar{r}\of{1}=\bar{r}\of{2}$, then $v\coloneqq \bar{r}\of{1}$ certifies that $\bar{r}$ satisfies the formula. Otherwise $\bar{r}\of{1}<\bar{r}\of{2}$ and, using a suitable automorphism of $(\mathbb{Q}; <)$, we obtain a tuple $\bar{s}\in R'$ such that $\bar{s}\of{1}=\bar{r}\of{1}$, $\bar{s}\of{2}=\bar{r}\of{2}$, and $\bar{s}\of{3}<\min(\bar{r}\of{1},\bar{r}\of{3})$. 
Then $v\coloneqq \bar{s}\of{3}$ certifies that $\bar{r}$ satisfies the formula.

If $R'$ does not pp-define $\leq$, then it is easy to see that, for all distinct $i,j \in \{1,2,3\}$, the relation $R$ contains a tuple which satisfies $(x_i < x_j)$ and a tuple which satisfies $(x_j < x_i)$.
Since $R'$ is preserved by $\elel$ and $\dual\elel$,
it is straightforward to show that $R'$ contains all injective tuples.
Recall that a constant tuple is also in $R'$.
Hence, it follows that $R'(x_1, x_2, x_3) \wedge R'(x_1, x_3, x_2) \wedge R'(x_2, x_3, x_1)$ is equivalent to $(x_1 = x_2 = x_3) \vee (x_1 \neq x_2 \wedge x_2 \neq x_3 \wedge x_1 \neq x_3)$.
By Lemma~8.6 in~\cite{qecsps}, this formula specifies a relation that pp-defines $\Dima$.

\case{2}{$\psi$ is of the form $(x_1 \neq x_2 \vee x_2 < x_3)$}.  
Note that $\phi$ entails and hence contains the pretty clause $(x_1 \neq x_2 \vee x_1 \neq x_3)$.
Since $\psi$ is present in $\phi$, we may assume that $\phi$ entails neither $(x_1 \leq x_3)$ nor $(x_2 \leq x_3)$, otherwise $\psi$ would have been removed by Rule~V.
%
%
%
Hence neither $(x_1 \leq x_3)$ nor $(x_2 \leq x_3)$ is in $\phi$.
It follows that there is a tuple in $R$ satisfying $(x_3 < x_1)$, and perhaps a different one satisfying $(x_3 < x_2)$. 
As in Case~1, we obtain
a tuple which satisfies either $(x_3 < x_1 < x_2)$ or $(x_3 < x_2 < x_1)$. Without loss of generality we assume the former. By applying an automorphism of $(\mathbb{Q}; <)$ on this tuple, we obtain
a tuple $\bar{t}_1  \in R$   satisfying $(x_3 < x_1 <  0 < x_2)$. 
Since $x_1 \neq x_2$ is not entailed by $R$, there is also a tuple $\bar{t}_2 \in R$ satisfying $(x_1 = x_2 < 0 < x_3)$. The application $\elel(\bar{t}_2, \bar{t}_1)$ gives a tuple satisfying $(x_1 < x_2 < x_3)$, and $\dual\elel(\bar{t}_1,\bar{t}_2)$
a tuple satisfying $(x_1 < x_3 < x_2)$.
Observe now that the projection $R'$ of $R$ to the first three variables contains $\dualGVSMichal$, and hence is a dual strict M-relation.

\case{3}{$\psi$ is of the form $(x_1 \neq x_2 \vee x_2 \leq x_3)$}.  
 Note that $\psi$ is equivalent to $(x_1 \neq x_2 \vee x_1 \leq x_3)$.
By the construction of $\phi$, there is a tuple which satisfies $(x_1 = x_2 = x_3)$, a tuple which satisfies $(x_1 = x_2 < x_3)$, one which satisfies $(x_3 < x_1)$ and one which satisfies $(x_3 < x_2)$, otherwise $\psi$ would have been removed by Rule~2. 
As in Case~1, we may assume that the last tuple satisfies $(x_1 \neq x_2)$ and
obtain a tuple which satisfies $(x_3 < x_1 < x_2)$ or $(x_3 < x_2 < x_1)$.  Assume the former; the other case is symmetric.
Let $R'$ be the projection of $R$ to first three variables. Observe that $R'(x,x,y)$ defines $\leq$.
Now, as in the first case $\exists v~R'(x_1,x_2,v) \wedge (v \leq x_1 \leq x_2) \wedge (v \leq x_3)$ defines
$\dualGMichal(x_1, x_2, x_3)$, which is a dual M-relation. This completes the case.

\medskip From now on, we assume that ugly clauses of three variables do not occur in $\phi$, but it contains an ugly  clause on four variables. 

\case{4}{$\psi$ is of the form $(x_1 \neq x_2 \vee x_3 = x_4)$}.   
Recall that the five rules in Table~\ref{table:syntactic_pruning_rules} are applied exhaustively in the order in which they appear.
Suppose that $R$ contains no tuple satisfying $(x_1=x_2 \wedge x_2\neq x_3)$. Then, right before the application of the rules from Table~\ref{table:syntactic_pruning_rules}, $\phi$ would have contained the clause $(x_1\neq x_2 \vee x_2=x_3)$.
Since $\psi$ is a clause in $\phi$, every tuple satisfying $(x_1=x_2)$ also has to satisfy $(x_2=x_3=x_4)$.
But then $R$ contains no tuple satisfying $(x_1=x_2 \wedge x_2\neq x_4)$. Hence, right before the application of the rules from Table~\ref{table:syntactic_pruning_rules}, $\phi$ would have contained the clause $(x_1\neq x_2 \vee x_2=x_4)$.
By Rule I in Table~\ref{table:syntactic_pruning_rules}, $\psi$ would have been removed from $\phi$, a contradiction.
Thus, it must be the case that $R$ contains a tuple $\bar{s}_1$ satisfying $(x_1=x_2 \wedge x_2\neq x_3)$.
Since $\psi$ is a clause in $\phi$, it also satisfies $(x_3=x_4)$.

Now we distinguish the two cases where $\bar{s}_1$ satisfies either $(x_1=x_2<x_3=x_4)$ or $(x_1=x_2>x_3=x_4)$.
Since $\phi$  
contains no clauses of the form $(x=y)$, in both cases, the relation $R$ has a tuple satisfying $(x_1 \neq x_2)$ and a tuple satisfying $(x_3 \neq x_4)$. The application of $\lex$ to both tuples yields a tuple $\bar{s}_2  \in R$ that satisfies both $(x_1 \neq x_2)$ and $(x_3 \neq x_4)$. 
Now, up to transposing the entries $x_1,x_2$ and/or $x_3,x_4$ in $R$, the tuple $\bar{s} = \lex(\bar{s}_1, \bar{s}_2)$ satisfies $(x_1 < x_2 < x_3 < x_4)$ if $\bar{s}_1$ satisfies $(x_1=x_2<x_3=x_4)$ or $(x_1 > x_2 > x_3 > x_4)$ if $\bar{s}_1$ satisfies $(x_1=x_2>x_3=x_4)$.

Assume first that $R$ pp-defines $\leq$. If $\bar{s}$ satisfies $(x_1 < x_2 < x_3 < x_4)$,
then the formula
\begin{align*}
    \exists v \ R(x_1, x_2, x_3, v) \wedge \left(\bigwedge\nolimits_{i,j \in \{ 1,2 \}} x_i \leq x_{j+2} \right)    \wedge (x_1 \leq v)\wedge (x_2 \leq v) \wedge (x_4 \leq v)
\end{align*} 
pp-defines a separated M-relation. 
Indeed, it is straightforward to check that it entails $\SepMichal(x_1,x_2, x_3, x_4)$, but neither $\SepSMichal(x_1, x_2, x_3, x_4)$ nor $\SepDima(x_1, x_2, x_3, x_4)$.  
Moreover, it is entailed by $\LessSepGMichal(x_1, x_2, x_3, x_4)$.
 If $\bar{s}$ satisfies $(x_1 > x_2 > x_3 > x_4)$, then the formula
\begin{align*}
   \exists v \ R(x_1, x_2,v, x_4) \wedge \left( \bigwedge\nolimits_{i,j \in \{ 1,2 \}} x_i \geq x_{j+2} \right)  
   \wedge  (x_1 \geq v)\wedge (x_2 \geq  v)  \wedge (x_3 \geq v) 
\end{align*}  
 entails $\SepMichal(x_1, x_2, x_3, x_4)$. On the other hand, the formula entails neither $\SepSMichal(x_1, x_2, x_3, x_4)$ nor
 $\SepDima(x_1, x_2, x_3, x_4)$. Moreover, it is entailed by $\GreatSepGMichal(x_1, x_2, x_3, x_4)$. Hence, the formula defines a separated $M$-relation.
 
If $<$ is pp-definable, then, in a similar way, one shows that 
\begin{align*}
    \exists v \ R(x_1, x_2, x_3, v) \wedge \left(\bigwedge\nolimits_{i,j \in \{ 1,2 \}} x_i < x_{j+2} \right)    \wedge (x_1 < v)\wedge (x_2 < v) \wedge (x_4 < v)
\end{align*}  
pp-defines a separated strict M-relation in the case where  $\bar{s}$ satisfies $(x_1 < x_2 < x_3 < x_4)$ and that
 \begin{align*}
   \exists v \ R(x_1, x_2,v, x_4) \wedge \left( \bigwedge\nolimits_{i,j \in \{ 1,2 \}} x_i > x_{j+2} \right)  
   \wedge  (x_1 > v)\wedge (x_2 >  v)  \wedge (x_3 > v) 
\end{align*}   
  pp-defines a separated strict M-relation in the case where $\bar{s}$ satisfies $(x_1 > x_2 > x_3 > x_4)$.

The remaining subcase is when neither $\leq$ nor $<$ are pp-definable. 
Let $\psi'$ be an arbitrary clause in $\phi$ containing a disjunct $(x_i \lhd x_j)$ with $\lhd\in \{\leq,<\}$.  
If $R$ does not contain any tuple that does not satisfy any disjunct in $\psi'$ besides $(x_i \lhd x_j)$, then we can remove $(x_i \lhd x_j)$ from $\psi'$.
So suppose that $R$ contains such a tuple. Since $<$ is not pp-definable, it follows that $\lhd$ equals $\leq$, otherwise we could pp-define $<$ just by adding equality conjuncts and projecting onto the entries $\{i,j\}$.
Since $\leq$ is not pp-definable, by the same reasoning as above, it cannot be the case that there are two tuples $\bar{t},\bar{t}'\in R$ which do not satisfy any disjunct in $\psi'$ besides $(x_i \leq x_j)$, and additionally $\bar{t}$ satisfies $(x_i=x_j)$ and $\bar{t}'$ satisfies $(x_i<x_j)$.
Hence, one of the tuples is missing in $R$.
Since $<$ is not pp-definable, it cannot be the case that $\bar{t}$ is missing, so it must be $\bar{t}'$.
But note that then we can replace $\psi'$ by the subclause obtained containing $(x_{i}=x_{j})$ instead of $(x_{i}\leq x_{j})$.
%
%
Applying this observation finitely many times yields an OH definition of $R$ which does not use any $\{<,\leq\}$-atoms, i.e., $R$ is an equality relation.

Recall that $R$ contains a tuple satisfying $(x_1<x_2<x_3<x_4)$ or $(x_1>x_2>x_3>x_4)$.
Since $R$ is an equality relation, it is preserved by all automorphisms of $(\mathbb{Q};=)$, and hence contains all injective tuples.
Suppose that $R$ contains a constant tuple.
Let $R'$ be the $4$-ary relation defined by $$ \exists u,v \left(\bigwedge\nolimits_{i,j\in [4], i\neq j} R(x_i,x_j,u,v) \wedge R(u,v,x_i,x_j)\right).
$$  
Since every tuple in $R$ satisfies $\psi$, $R'$ consists of all injective tuples and all constant tuples.
But then  the formula $\exists x_4 \ R'(x_1,x_2,x_3,x_4)$ defines the relation $S$ from Lemma~8.6 in \cite{qecsps}, and hence $R$ pp-defines $\Dima$ by the same lemma.

Finally, suppose that $R$ does not contain any constant tuple.
Then, by Lemma~4 in \cite{bodirsky2008complexity}, $R$ pp-defines $\neq$.
Recall that $R$ contains a tuple satisfying $(x_1=x_2\neq x_3=x_4)$.
Let $R''$ be the $4$-ary relation defined by $R(x_1,x_2,x_3,x_4)\wedge R(x_3,x_4,x_1,x_2)$.
Since $R$ is preserved by all automorphisms of $(\mathbb{Q};=)$, $R''$ contains all tuples satisfying $(x_1=x_2\neq x_3=x_4)$.
Since $R''$ also contains all injective tuples and every tuple in $R''$ satisfies $(x_1=x_2 \Leftrightarrow x_3=x_4)$, the formula $$R''(x_1,x_2,x_3,x_4) \wedge \left(\bigwedge\nolimits_{i,j\in \{1,2\}} x_i\neq x_{j+2}\right)$$ defines the relation $T$ from Lemma~8.7 in \cite{qecsps}.
By the same lemma, $R$ pp-defines $\Dima$. 

\case{5}{$\psi$ is of the form $(x_1 \neq x_2 \vee x_3 \neq x_4)$}.    
Observe that $\psi$ entails $\theta \coloneqq (x_1 \neq x_2 \vee x_1 \neq x_3 \vee x_1 \neq x_4)$, which is a pretty clause. By the construction of $\phi$, the formula $\theta$ is contained in $\phi$.

 We now argue that $R$ contains
\begin{itemize}
\item both a tuple $\bar{s}_1$ satisfying
$(x_1 = x_2 < x_3) \wedge (x_1 = x_2 <  x_4)$  
and a tuple $\bar{s}_2$ satisfying  $(x_1< x_3 = x_4) \wedge (x_2 < x_3 = x_4)$,
\item  or both a tuple $\bar{s}'_1$ satisfying $(x_3 = x_4 < x_1) \wedge (x_3 = x_4 < x_2)$
and a tuple $\bar{s}'_2$ satisfying $(x_3 < x_1 = x_2)\wedge (x_4 < x_1 = x_2)$.
\end{itemize}

Suppose it is not the case. There are altogether  four cases to consider.
Before we step into the details, we observe the following.
  
\begin{enumerate}
    \item \label{claim:sepneqs1} If $R$ does not contain $\bar{s}_1$, then $\phi$ contains $\psi_1 \coloneqq (x_1 \neq x_2 \vee x_1 \geq x_i)$ for some $i \in \{ 3, 4 \}$,
    \item \label{claim:sepneqs2} If $R$ does not contain $\bar{s}_2$, then $\phi$ contains $\psi_2 \coloneqq (x_k \lhd x_l)$ for some $k \in \{ 3,4\} , l \in \{ 1,2 \}, \lhd \in \{ \leq, < \}$.
    \item \label{claim:sepneqs3} If $R$ does not contain $\bar{s}'_1$, then $\phi$ contains $\psi'_1 \coloneqq (x_3 \neq x_4 \vee x_3 \geq x_i)$ for some $i \in \{ 1, 2 \}$.
    \item \label{claim:sepneqs4} If $R$ does not contain $\bar{s}'_2$, then $\phi$ contains $\psi'_2 \coloneqq (x_l \lhd x_k)$ for some $k \in \{ 3,4\} , l \in \{ 1,2 \}, \lhd \in \{ \leq, < \}$.
\end{enumerate}
 Note that the operation $f(x,y)\coloneqq \lex(\max(x,y), \lex(x,y))$ is binary, injective, and preserves $\leq$. Hence, by Proposition~\ref{prop:ordhorn}, it preserves $R$.
 
    If $R$ does not contain $\bar{s}_1$, then it either does not contain a tuple satisfying 
    $(x_1 = x_2 < x_3)$ or a tuple  satisfying $(x_1 = x_2 <  x_4)$, otherwise an application of $f$
    to these tuples would yield a tuple $\bar{s}_1$. 
    Hence  $\phi$ entails the pretty clause $(x_1 \neq x_2 \vee x_2 \geq x_i)$ for some $i \in \{3,4\}$. By the construction of $\phi$ it follows that the clause is in the formula. 

If $R$ does not contain $\bar{s}_2$, then, similarly, by the invariance under $f$, 
either a tuple satisfying $(x_1<x_3=x_4)$ or a tuple satisfying $(x_2<x_3=x_4)$ is not in $R$. Hence
$\phi$ entails  $(x_3 \neq x_4 \vee x_3 \leq x_i)$ for some $i \in \{1,2\}$. 
By Case~3, we may assume that such a clause is not in $\phi$, otherwise we are done. 
Therefore, this clause must have been removed from $\phi$ by Rule II from Table \ref{table:syntactic_pruning_rules}, because $\phi$ contained a subclause $\psi'$ of a clause equivalent to $\psi$. Either $\psi'$ is an ugly clause on three variables, which had to be removed due to Rule II, IV or V, or $\psi'$ is a clause on two variables. In each of the cases we eventually obtain that $\phi$ contains a subclause on 2 variables of a clause equivalent to $\psi$: either because it is a subclause that caused removal of a clause on three variables by Rule II, or because it was required for application of Rule IV or V. Since $\psi$ was not removed and $\phi$ contains no conjuncts of the form $(x=y)$, we have that $\phi$ contains a clause $(x_k \lhd x_l)$ from Item~\ref{claim:sepneqs2}.

The reasoning for the two remaining items is analogous.  

Now if $R$ contains neither $\bar{s}_1$ nor $\bar{s}'_1$, we have both $\psi_1$ and $\psi'_1$ in $\phi$. It yields the contradiction with Rule VI in the construction of $\phi$ since
$\psi_1 \wedge \psi'_1 \wedge \theta$ entails $\psi$ and all the three clauses 
$\theta, \psi_1, \psi'_1$ are pretty. 

If $R$ contains neither $\bar{s}_1$ nor $\bar{s}'_2$, then $\psi_1$ and $\psi_2'$ are in $\phi$. We need to consider two cases, namely, whether $i = k$ or not.
 If $i=k$, then $\phi$ entails   $(x_1 \neq x_2 \vee x_1 = x_i)$ for some $i \in \{ 3, 4 \}$. 
Recall that we assume that no ugly clause with three variables is present in $\phi$.  
Therefore, $(x_1 \neq x_2 \vee x_1 = x_i)$ must have been removed by Rule II or IV in the table in Table~\ref{table:syntactic_pruning_rules}.
If it was removed by Rule II, then it contradicts the assumption that $\phi$ contains no equality conjuncts or the presence of $\psi$ in $\phi$ (using Rule II again).
Suppose that it was removed by Rule IV.
Then $\phi$ contains the pretty clauses $(x_1\neq x_2 \vee x_{1}\geq x_i)$, $(x_i\geq x_{1})$ or $(x_i \geq x_2)$, and $\theta$.
In any case, the conjunction of these clauses entails $\psi$. Therefore, $\psi$ would have been removed by Rule VI, a contradiction.
On the other hand, if 
$i \neq k$, then
$\psi_1 \wedge \psi'_2 \wedge \theta$ entails $\psi$. Since all
$\psi_1, \psi'_2$ and $\theta$ are pretty, $\psi$ cannot be in $\phi$ because of Rule VI. 

The case where $R$ contains neither $\bar{s}'_1$ nor $\bar{s}_2$ is symmetric to the previous one. 

 The remaining case is where $R$ contains neither $\bar{s}_2$ nor $\bar{s}'_2$. 
Then $\phi$ contains $\psi_2 \coloneqq (x_{k_1} \lhd_1 x_{l_1})$ and $\psi'_2 \coloneqq (x_{l_2} \lhd_2 x_{k_2})$ for some $k_1, k_2 \in \{3,4 \}$, $l_1, l_2 \in \{ 1,2 \}$ and $\lhd_1, \lhd_2 \in \{ \leq, < \}$.
Observe that if $k_1 = k_2$ and $l_1 = l_2$, then $\phi$ must contain $(x_{k_1} = x_{l_1})$, which contradicts the fact that $\phi$ contains no equalities.  On the other hand, if both $k_1, k_2$ and $l_1, l_2$ are pairwise different, we have that  $\psi_2 \wedge \psi'_2 \wedge \theta$ entails $\psi$, so $\psi$ would have been removed by Rule VI.
Therefore, consider the case where $l_1 \neq l_2$ and $k_1 = k_2$; the case where $l_1 = l_2$ and $k_1 \neq k_2$ is similar. In this case $\psi_2 \wedge \psi'_2$ entails 
$(x_{l_1} \neq x_{l_2} \vee x_{l_1} = x_{k_1})$. Hence, we have that 
$\psi_2 \wedge \psi'_2 \wedge \theta$ entails $\psi$, a contradiction to the construction of $\phi$.

From now on we may assume that $R$ either contains both $\bar{s}_1, \bar{s}_2$ or both
$\bar{s}'_1, \bar{s}'_2$. Without loss of generality assume that $R$ contains 
$\bar{s}_1$ which satisfies $(x_1 = x_2 < x_3 < x_4)$ and $\bar{s}_2$ which satisfies $(x_1 < x_2 < x_3 = x_4)$;  the remaining cases may be achieved by transposing $x_1$ with $x_2$, $x_3$ with $x_4$, or the pairs $(x_1, x_2)$ with $(x_3,x_4)$. Moreover $\lex(\bar{s}_1,\bar{s}_2) \in R$ satisfies $(x_1<x_2<x_3<x_4)$.
Now it is easy to verify that $\phi(x_1,x_2,x_3,x_4)$ defines a separated disjunction of disequalities.
%


\case{6}{$\psi$ is of the form $(x_1 \neq x_2 \vee x_3 > x_4)$}.     
We first observe that $\phi$ does not entail the pretty clause $(x_3 \geq x_4)$.
Indeed, if it were the case, then $(x_3 \geq x_4)$ would be a clause of $\phi$ and by Rule III, $\psi$ would have been replaced by $(x_1 \neq x_2 \vee x_3 \neq x_4)$. 
Since $(x_3 \geq x_4)$ is pretty and was therefore not removed from $\phi$ by any of the rules, $R$ must contain a tuple satisfying $(x_3<x_4)$.
Since $\phi$ contains no subclauses of $\psi$, the relation $R$ contains a tuple satisfying $(x_1=x_2 \wedge x_4<x_3)$.
Thus, we may assume that the projection of $R$ to the entries $\{x_3,x_4\}$ contains all injective pairs.

Now either $R$ contains a tuple $\bar{s}_1$ satisfying $(x_1 = x_2 < x_4 < x_3)$, or $\phi$ entails and by the construction also contains the pretty clause $\psi' \coloneqq (x_1 \neq x_2 \vee x_1 \geq x_4)$. 
Furthermore, if $\psi'$ is present, then either the relation $R$ contains  a tuple $\bar{s}_2$ satisfying $(x_1 = x_2 > x_3 > x_4)$, or $\phi$ entails $(x_1 \neq x_2 \vee x_1 \leq x_3)$.
We show that $R$ contains $\bar{s}_1$ or $\bar{s}_2$. Suppose that this is not the case.
Since we assume no ugly clauses on three variables are present in $\phi$, the entailed clause $(x_1 \neq x_2 \vee x_1 \leq x_3)$ must have been removed by Rule II.
Therefore, there is a subclause of a clause equivalent to $(x_1 \neq x_2 \vee x_1 \leq x_3)$ entailed by $\phi$. This subclause is either pretty or removed by Rule II, IV or V, which yields in all cases that $\phi$ contains a pretty clause $\psi''$ that entails $(x_1 \neq x_2 \vee x_1 \leq x_3)$. 
Observe now that $\psi$ entails $\theta \coloneqq (x_1 \neq x_2 \vee x_1 \neq x_3 \vee x_1 \neq x_4)$ and that $\theta \wedge \psi' \wedge \psi''$ entails $\psi$. Since  all $\theta, \psi'$ and $\psi''$ are pretty, 
this contradicts the presence of $\psi$ in $\phi$.
It follows that $R$ contains either $\bar{s}_1$ or $\bar{s}_2$.

Since $R$ entails no equality conjuncts, it contains a tuple $\bar{t}_1$ satisfying $(x_1 \neq x_2)$. 
Recall that $R$ also contains a tuple $\bar{t}_2$ satisfying $(x_3 < x_4)$.
We may assume that $\bar{t}_1$ satisfies $(x_1 > x_2)$; the other case follows by transposing $x_1$ and $x_2$.
If $R$ contains $\bar{s}_1$, then $\bar{t}\coloneqq \lex(\bar{s}_1,\bar{t}_1)\in R$ satisfies $(x_3>x_4>x_1>x_2)$.
W.l.o.g., $\bar{t}$ satisfies $(x_4>0>x_1)$, otherwise replace $\bar{t}$ with a tuple obtained by an application of a suitable automorphism of $(\mathbb{Q};<)$ to $\bar{t}$. 
Consequently, the tuple $\elel(\bar{t},\bar{t}_2)\in R$ satisfies $(x_1>x_2)$, $(x_3<x_4)$, and $(x_1<x_3)$.
But this means that $R$ contains $\LessSepGSMichal$, and thus $R$ is a separated strict M-relation.
If $R$ contains $\bar{s}_2$, then $\bar{t}\coloneqq \lex(\bar{s}_2,\bar{t}_1)\in R$ satisfies $(x_1>x_2>x_3>x_4)$.
W.l.o.g., $\bar{t}$ satisfies $(x_2>0>x_3)$.
Consequently, the tuple $\dual\elel(\bar{t},\bar{t}_2)\in R$ satisfies $(x_1>x_2)$, $(x_3<x_4)$, and $(x_2>x_4)$.
But this means that $R$ contains $\GreatSepGSMichal$, and thus $R$ is a separated strict M-relation.

\case{7}{$\psi$ is of the form $(x_1 \neq x_2 \vee x_3 \geq x_4)$}.   
Note that $\phi$ does not entail $(x_1 \neq x_2 \vee x_3 = x_4)$;  otherwise $\psi$ would have been removed by Rule II.

Analogously, $\phi$ does not entail $(x_1 \neq x_2 \vee x_3 > x_4)$.
Hence, $R$ contains a tuple satisfying $(x_1=x_2)\wedge (x_3>x_4)$, as well as a tuple satisfying $(x_1=x_2)\wedge (x_3=x_4)$.
It follows that $\exists h \ R(h,h,x_1,x_2)$ defines $\geq$.
Since $R$ is OH and pp-defines $\geq$, it follows from Theorem~20 in \cite{bodirsky2010complexity} that either 
$R$ pp-defines $<$, or $R$ has a constant polymorphism.
Indeed, the fact that OH structures cannot pp-define any of the three relations Sep, Cycl, Betw from Theorem~20 in \cite{bodirsky2010complexity} follows directly from Proposition~\ref{prop:ordhorn} because none of the relations is preserved by $\elel$. 

Clearly, $\phi$ does not contain the clause $(x_3\geq x_4)$,  otherwise $\psi$ would not be present in $\phi$. Hence, $R$ contains a tuple $\bar{t}$ satisfying $(x_1 \neq x_2)$ and $(x_3 < x_4)$.
Suppose that $\bar{t}$ satisfies  $(x_1 > x_2)$;  the case $(x_1 < x_2)$ follows by transposing the entries $x_1$ and $x_2$.
 
We claim that $R$ contains a tuple $\bar{s}_1$ satisfying $(x_1=x_2<x_4)$ or a tuple $\bar{s}_2$ satisfying $(x_1=x_2>x_3)$.
Suppose, on the contrary, that $R$ contains neither $\bar{s}_1$ nor $\bar{s}_2$.
Then $R(x_1,x_2,x_3,x_4)$ entails $(x_1\neq x_2\vee x_1 \geq x_4)$ and $(x_1\neq x_2\vee x_1 \leq x_3)$.
Since $(x_1\neq x_2\vee x_1 \geq x_4)$ is pretty, it is contained in $\phi$.
Since $(x_1\neq x_2\vee x_1 \leq x_3)$ is ugly and we assume that $\phi$ does not contain any ternary ugly clauses, it must have been removed using Rule II.
As in Case~6, $\phi$ must contain a pretty clause $\psi'$ which entails $(x_1\neq x_2\vee x_1 \leq x_3)$.
%
But note that $(x_1\neq x_2\vee x_1 \geq x_4)$ together with
$\psi'$
entails $\psi$.
Therefore, $\psi$ would have been removed using Rule VI, a contradiction.
Hence $R$ contains $\bar{s}_1$ or $\bar{s}_2$ .

Suppose that $R$ contains $\bar{s}_1$.
W.l.o.g.,  $\bar{s}_1$ satisfies $(x_4>0>x_1)$, otherwise replace $\bar{s}_1$ with a tuple obtained by an application of a suitable automorphism of $(\mathbb{Q};<)$ to $\bar{s}_1$. 
Then $\elel(\bar{s}_1,\bar{t})\in R$ satisfies $(x_1>x_2)$, $(x_3<x_4)$, and $(x_1<x_3)$.
Recall that $R$ contains a tuple $\bar{r}$ satisfying $(x_1=x_2)$ and $(x_3>x_4)$.
Then $\elel(\elel(\bar{s}_1,\bar{t}),\bar{r})$ satisfies $(x_1>x_2)$, $(x_3>x_4)$, and $(x_1<x_4)$.

If $R$ has no constant polymorphism, then, as argued above, $R$ pp-defines $<$.
In this case, the formula $\exists h \ R(x_1,x_2,x_3,h)\wedge (h>x_4)$ defines a separated strict M-relation.  
Otherwise $R$ has a constant polymorphism.
If $R$ contains a tuple $\bar{s}$ satisfying $(x_1=x_2<x_3=x_4)$, then $R$ pp-defines a separated M-relation.
Indeed, note that $\lex(\bar{s},\bar{r})\in R$ satisfies $(x_1=x_2<x_4<x_3)$.
%
%
It is easy to verify that the relation $R'$ defined by
$
\exists h  \ R(x_1,x_2,x_3,h)\wedge  R(x_1,x_2,h,x_4)
$
contains all tuples from $R$, and additionally a tuple satisfying $(x_2<x_1<x_3=x_4)$ because $R$ contains both a tuple satisfying $(x_2<x_1<x_3<x_4)$ and a tuple satisfying $(x_2<x_1<x_4<x_3)$. Therefore, $R'$ contains $\LessSepGMichal$ and is a separated M-relation.

Otherwise,  recall that $R$ pp-defines $\leq$.
Consider the ternary relation $R'$ defined by $ \exists h \ R(h,h,x_2,x_1) \wedge (x_3\geq h) \wedge (h\leq x_1)$.
Note that, if the first two entries in a tuple from $R'$ are equal, then the third entry must be greater than or equal, because $R$ does not contain any tuple satisfying $(x_1=x_2<x_3=x_4)$.
Therefore, $R'\subseteq \DMichal$.
We claim that also $\dualGMichal\subseteq R'$.
Indeed, note that the tuple $\bar{s}_1\in R$ satisfies $(x_1=x_2<x_4<x_3)$, witnessing all tuples in $\dualGMichal$ where the first two entries are not equal.
Moreover, the constant tuple in $R$ witnesses all tuples in $\dualGMichal$ where the first two entries are equal.
Hence, $R'$ is a dual $M$-relation. 

Suppose that $R$ contains $\bar{s}_2$, without loss of generality $\bar{s}_2$ satisfies $x_1 > 0 > x_3$.
Then the argumentation is almost entirely symmetrical to the case where $R$ contains $\bar{s}_1$, except that we use $\dual\elel$ instead of $\elel$ and $\GreatSepGMichal$ instead of $\LessSepGMichal$.
The only case that needs to be handled separately is when $R$ has a constant polymorphism and does not contain any tuple satisfying $(x_1=x_2>x_3=x_4)$.
In this case, the tuple $\dual\elel(\bar{s}_2,\bar{t})$ satisfies $(x_1>x_2>x_4>x_3)$ and hence it easy to see that the formula
$ \exists a,b \ R(x_2,x_1,a,b) \wedge (x_2\geq x_1) \wedge (x_1\geq b) \wedge (b\geq a) \wedge    (x_3\geq a)$
pp-defines a dual $M$-relation.

\medskip  
In the final part of the proof, assume that $\phi$ does not contain any of the clauses in Cases~1--7. What remains are the clauses $\psi$ of the form 
$(x_4 \neq x_1 \vee x_1 \neq x_2 \vee x_2 \lhd x_3)$ or the form $(x_4 \neq x_1 \vee x_1 \rhd x_2 \vee x_2 \neq x_3)$ where $\lhd\in \{ =, <, \leq,  \}$ and $\rhd\in \{ =, >, \geq  \}$.
Indeed, whenever two disjuncts in an OH clause share the same variables, we can either remove one of them, or we can remove the whole clause because it is trivially true.
Note that, any reduced OH clause $\psi$ with free variables $\{x_1,x_2,x_3,x_4\}$ contains at most three disequality disjuncts.
If it is exactly three, then the disequality part is already equivalent to $\NAE(x_1,x_2,x_3,x_4)$.
If the clause contained another disjunct $x \rhd y$, then either it would not be reduced or it would be trivially true. Otherwise, $\psi$ is equivalent to $\NAE(x_1,x_2,x_3,x_4)$ and hence a pretty clause.
Therefore, any reduced OH clause with at most four free variables contains at most three disjuncts.

In all these cases, we will consider the ternary relation $R'$ defined by $R(x_1, x_2, x_3, x_1)$ together with the OH definition $\phi'$ of $R'$ obtained as follows.
We start with $\phi'$ defined as a conjunction of all OH clauses that are entailed by $R'$.
Next, we close $\phi'$ under the application of the five rules in Table \ref{table:syntactic_pruning_rules} to $\phi'$ in the order in which they appear. We use $R'$ with its OH definition $\phi'$ to pp-define one of the relations from the statement of the lemma.

Note that $\phi'$ may in general contain clauses of the form $(x=y)$.
We argue that, if $\psi$ is of the form $(x_4 \neq x_1 \vee x_1 \neq x_2 \vee x_2 \lhd x_3)$, then $\phi'$ does not contain the clause $(x_1=x_2)$.
Suppose for contradiction that $\phi'$ contains $(x_1=x_2)$. Then $\phi$ entails $(x_4 \neq x_1 \vee x_1 = x_2)$, which, together with $\psi$, entails $(x_4 \neq x_1 \vee x_2 \lhd x_3)$, which is a subclause of $\psi$.
Since $\psi$ was not removed by Rule II, this clause must have been removed from $\phi$ by Rule I. In particular, $\lhd$ is $=$.
The two ugly clauses with two disjuncts that triggered application of Rule I cannot be present in $\phi$ either and thus are entailed by pretty clauses (Rule~II or Rule~IV). 
Therefore, $\psi$ would have been removed by Rule VI, a contradiction.  
Analogous, even simpler, argument shows that if $\psi$ is of the form $(x_4 \neq x_1 \vee x_1 \rhd x_2 \vee x_2 \neq x_3 )$, then $\phi'$ does not contain a clause $(x_2 = x_3)$. This suffices for applying arguments from Cases~1--3, which are the only cases from above that we refer to in Cases~8--13.

\case{8}{$\psi$ is of the form $(x_4 \neq x_1 \vee x_1 \neq x_2 \vee x_2 = x_3)$}.   
We aim to show that $\phi'$ contains $\psi' = (x_1 \neq x_2 \vee x_2=x_3)$, then by the same reasoning as in Case~1 we obtain a pp-definition of one of the relations in the statement of the lemma. Suppose that $\psi'$ is not present in $\phi'$. Since $\psi'$ is entailed by $R'$, it must have been removed by Rule II or IV. Since $\psi$ is present in $\phi$ and $\psi'$ is a subclause of $\psi$, it must have been Rule IV. Hence, $\phi'$ contains $(x_3 \geq x_i)$ for some $i \in \{1,2\}$ and therefore $\phi$ entails $(x_4 \neq x_1 \vee x_3 \geq x_i)$. By our assumption that $\phi$ does not contain this ugly clause or its ugly subclauses, this clause must be entailed by a conjunction $\theta$ of pretty clauses in $\phi$. Moreover, $\phi$ entails the pretty clause $(x_4 \neq x_1 \vee x_1 \neq x_2 \vee x_1 \geq x_3)$.
Now it is easy to see that $\theta \wedge (x_4 \neq x_1 \vee x_1 \neq x_2 \vee x_1 \geq x_3)$
entails $\psi$, and hence $\psi$ should not be in $\phi$, a contradiction. Therefore $\psi'$ is present in $\phi'$ as we wanted to prove.

\case{9}{$\psi$ is of the form $(x_4 \neq x_1 \vee x_1 \neq x_2 \vee x_2 < x_3)$}. Then $\psi' = (x_1 \neq x_2 \vee x_2 < x_3)$ is entailed by $\phi'$ and it could not have been removed by Rule II, because then $\psi$ would have been removed from $\phi$ as well. Suppose that it was removed by Rule V. 
As in the previous case, there must be a conjunction $\theta$ of pretty clauses in $\phi$ entailing $(x_4 \neq x_1 \vee x_3 \geq x_i)$ for some $i \in \{1,2\}$. Moreover, $\phi$ entails the pretty clause $(x_1 \neq x_4 \vee x_1 \neq x_2 \vee x_1 \neq x_3)$, which in conjunction with $\theta$ entails $\psi$. This contradicts the presence of $\psi$ in $\theta$, hence $\psi'$ is present in $\phi'$. By the same reasoning as in Case~2 we pp-define from $R'$ one of the relations from the statement of the lemma.

\case{10}{$\psi$ is of the form $(x_4 \neq x_1 \vee x_1 \neq x_2 \vee x_2 \leq x_3)$}. Then $\psi' = (x_1 \neq x_2 \vee x_2 \leq x_3)$ is entailed by $\phi'$ and it could not have been removed by Rule II, because then $\psi$ would have been removed from $\phi$ as well. Therefore $\psi'$ is present in $\phi'$ and by the same reasoning as in Case~3 we pp-define one of the relations from the statement of the lemma.

\case{11}{$\psi$ is of the form $(x_4 \neq x_1 \vee x_1 > x_2 \vee x_2 \neq x_3)$}. 
Then $\psi' = (x_1 > x_2 \vee x_2 \neq x_3)$ is entailed by $\phi'$ and it could not have been removed by Rule II, because then $\psi$ would have been removed from $\phi$ as well. Suppose that it was removed by Rule V. 
Then $\phi'$ contains $(x_1 \geq x_i)$ for some $i \in \{2,3\}$ and hence $\phi$ entails the pretty clause $(x_4 \neq x_1 \vee x_1 \geq x_i)$ for some $i \in \{2,3\}$. Moreover, $\phi$ entails the pretty clause $(x_1 \neq x_4 \vee x_1 \neq x_2 \vee x_1 \neq x_3)$, which in conjunction with $(x_4 \neq x_1 \vee x_1 \geq x_i)$  entails $\psi$. This contradicts the presence of $\psi$ in $\theta$. Therefore, $\psi'$ is present in $\phi'$ and we may apply the reasoning from Case~2 to pp-define from $R'$ one of the relations from the statement of the lemma.

\case{12}{$\psi$ is of the form $(x_4 \neq x_1 \vee x_1 \geq x_2 \vee x_2 \neq x_3)$}.
Then $\psi'=(x_1 \geq x_2 \vee x_2 \neq x_3)$  is entailed by $\phi'$ and it could not have been removed by Rule II, because then $\psi$ would have been removed from $\phi$ as well. Therefore $\psi'$ is present in $\phi'$ and by the same reasoning as in Case~3 we pp-define one of the relations from the statement of the lemma.

\medskip
The last remaining case is when $\psi$ is of the form $(x_4 \neq x_1 \vee x_1=x_2 \vee x_2 \neq x_3)$. In this case, we also use the ternary relation $R''$ defined by $R(x_1, x_2, x_2, x_4)$ and its OH definition $\phi''$ that arises analogously to the definition $\phi'$ of $R'$. Recall that $\phi'$ does not contain a clause $(x_2=x_3)$. By symmetry, $\phi''$ does not contain a clause $(x_4=x_1)$.

\case{13}{$\psi$ is of the form $(x_4 \neq x_1 \vee x_1 = x_2 \vee x_2 \neq x_3)$}. It is enough to show that $\psi' = (x_1 = x_2 \vee x_2 \neq x_3)$ is present in $\phi'$ or that $\psi'' = (x_4 \neq x_1 \vee x_1 = x_2)$ is present in $\phi''$, then the reasoning from Case~1 can be applied to $R'$ or $R''$, respectively, to pp-define one of the relations from the statement of the lemma. Suppose for contradiction that neither $\psi'$ is present in $\phi'$ nor $\psi''$ is present in $\phi''$. Since $\psi'$ is entailed by $R'$, it was removed by Rule II or IV. Clearly, it could not have been Rule II, otherwise $\psi$ would not be present in $\phi$ as well. Therefore it must have been Rule IV and $\phi'$ contains $(x_1 \geq x_i)$ for some $i \in \{2,3\}$. Hence, $\phi$ entails and therefore contains the pretty clause $\theta' = (x_4 \neq x_1 \vee x_1 \geq x_i)$. 
Analogously, $\psi''$ was removed from $\phi''$ by Rule IV and $\phi''$ contains $(x_2 \geq x_j)$ for some $j \in \{1,4\}$, which in turn implies that $\phi$ contains the pretty clause $\theta'' = (x_2 \geq x_j \vee x_2 \neq x_3)$. However $\theta' \wedge \theta''$ entails $\psi$, contradicting the presence of $\psi$ in $\phi$. Therefore, $\psi'$ is present in $\phi'$ or $\psi''$ is present in $\phi''$ as we wanted to prove.
\end{claimproof}
This concludes the proof of the lemma.
\end{proof}

\end{document}